\newcommand{\un}{\underline}
\newcommand{\be}{\begin{equation}}
\newcommand{\ee}{\end{equation}}
\newcommand{\ben}{\begin{equation*}}
\newcommand{\een}{\end{equation*}}
\newcommand{\mc}{\mathcal}
\newtheorem{lem}{Lemma}
\newtheorem{applem}{Lemma}[section]
\newtheorem{defi}{Definition}[section]
\newtheorem{thm}{Theorem}
\newtheorem{fact}{Fact}
\newcommand{\e}{\epsilon}
\newcommand{\abs}[1]{\left \lvert #1\right \rvert}
\newcommand{\norm}[1]{\left \lVert #1\right \rVert}
\newcommand{\expec}{\mathbb{E}}
\newcommand{\mscrs}{\mathscr{S}}
\title{\vspace{-0.6in} Analysis of Approximate Message Passing\\
with A Class of Non-Separable Denoisers}
\author{{Yanting Ma}\\ {North Carolina State University}
  \\ {\small {yma7@ncsu.edu}}
\and {Cynthia Rush}\\ {Columbia University}
  \\ {\small {cynthia.rush@columbia.edu}}
 \and {Dror Baron} \\ {North Carolina State University}
  \\ {\small{barondror@ncsu.edu}}
}
\begin{document}
\maketitle

\vspace{-10pt}
\begin{abstract}
Approximate message passing (AMP) is a class of efficient algorithms for solving 
high-dimensional linear regression tasks 
where one wishes to recover an unknown signal $\beta_0$ from noisy, linear measurements $y = A\beta_0 + w$.  When applying a separable denoiser at each iteration of the algorithm, the performance of AMP (for example, the mean squared error of its estimate) can be accurately tracked by a simple, scalar recursion called state evolution. Separable denoisers are sufficient when the unknown signal has independent %and identically distributed 
entries, however, in many real-world applications, like image or audio signal reconstruction, the signal contains dependencies between entries.  In these cases, a coordinate-wise independence structure is not a good approximation to the true prior of the unknown signal.  In this paper we assume the unknown signal has dependent entries, and using a class of non-separable sliding-window denoisers, we prove that a new form of state evolution still accurately predicts AMP performance.
This is an early step in understanding the role of non-separable denoisers within AMP, and will lead to a characterization of more general denoisers in problems including compressive image reconstruction.
%This paper studies the performance of approximate message passing (AMP) with a class of non-separable denoisers for estimating a high dimensional vector $\beta_0$ from observation $y=A\beta_0+w$, where $\beta_0$ is produced by a stationary Markov Chain and the measurement matrix $A$ has i.i.d. Gaussian entries. 
%Apart from being low-complexity and scalable, AMP has the attractive feature that its average performance concentrates to predicted values calculated from a simple scalar iteration, called state evolution, with exponential rate in the problem dimension.
%Previous analyses of the performance of AMP have been limited to separable denoisers or block-separable denoisers, thereby assuming that the vector $\beta_0$ to be recovered is i.i.d..  
%In this paper, we define a state evolution for AMP with a class of non-separable sliding-window denoisers and prove that the average performance of AMP with this class of denoisers concentrates to this state evolution prediction, also with exponential rate in the problem dimension.
\end{abstract}

%\input{sec_intro}
% !TEX root = main.tex

\section{Introduction} \label{sec:intro}

%\subsection{Motivation}
In this work, we study the high-dimensional linear regression model, where one wishes to recover an unknown signal $\beta_0 \in \mathbb{R}^N$ from noisy observations as in the following model:
\be
y = A \beta_0 + w,
\label{eq:model1}
\ee
where $y \in \mathbb{R}^n$ is the output, $A \in \mathbb{R}^{n \times N}$ is a known measurement matrix, and $w \in \mathbb{R}^n$ is zero-mean noise with finite variance $\sigma^2$.  We assume that the ratio of the dimensions of the measurement matrix is a constant value, $\delta := n/N$, with $\delta \in (0, \infty)$.
%Note the problem is generally considered `high-dimensional' when $\delta < 1$ meaning the measurement matrix has more columns than rows.

Approximate message passing (AMP) \cite{DonMalMont09, MontChap11, BayMont11, krz12, Rangan11} is a class of low-complexity, scalable algorithms studied to solve the high-dimensional regression task of \eqref{eq:model1}.  The performance of AMP depends on a sequence of functions $\{\eta_t\}_{t\geq 0}$ used to generate a sequence of estimates $\{\beta^t\}_{t\geq 0}$ from auxiliary observation vectors computed in every iteration of the algorithm. A nice property of AMP is that under some technical conditions these observation vectors can be approximated as the input signal $\beta_0$ plus independent and identically distributed, or i.i.d., Gaussian noise. This fact allows one to choose functions $\{\eta_t\}_{t\geq 0}$ based on statistical knowledge of $\beta_0$, for example, a common choice is for $\eta_t$ to be the Bayes-optimal estimator of $\beta_0$ conditional on the value of the observation vector.  For this reason, the functions $\{\eta_t\}_{t\geq 0}$ are referred to as 
`denoisers.'

%Previous analysis of the performance of AMP only considers the case %of coordinate-wise denoisers
%when the unknown signal has a prior with i.i.d.\ entries.  In this case, there is no loss of performance by limiting choices of the denoiser functions $\{\eta_t\}_{t\geq 0}$ to the class of functions that acts coordinate-wise when applied to a vector; such functions are referred to as \emph{separable}.
Previous analysis of the performance of AMP only considers denoisers $\{\eta_t\}_{t\geq 0}$ that act coordinate-wise when applied to a vector; such functions are referred to as \emph{separable}. If the unknown signal $\beta_0$ has a prior distribution assuming
i.i.d.\ entries, restricting consideration to only separable denoisers causes no loss in performance.
However, in many real-world applications, the unknown signal $\beta_0$ contains dependencies between entries and therefore a coordinate-wise independence structure is not a good approximation for the prior of $\beta_0$.  For example, when the signals are images \cite{Tan15,Metzler16} or sound clips \cite{Ma16}, \emph{non-separable} denoisers outperform reconstruction techniques based on over-simplified i.i.d.\ models.
In such cases, a more appropriate model might be a finite memory model, well-approximated with a Markov chain prior.
In this paper, we extend the previous performance guarantees for AMP to a class of non-separable sliding-window denoisers, whose promising empirical performance was shown by Ma \emph{et al.}\ \cite{Ma16}, when the unknown signal is produced by a Markov chain starting from its stationary distribution.

When the measurement matrix $A$ has i.i.d.\ Gaussian entries and the empirical distribution\footnote{For an $N$-length vector, by empirical distribution we mean the probability distribution that puts mass $1/N$ on the values taken by each element of the vector.} of the unknown signal $\beta_0$ converges to some probability distribution on $\mathbb{R}$, Bayati and Montanari \cite{BayMont11} proved 
that at each iteration
the performance of AMP can be accurately predicted by a simple, scalar iteration referred to as \emph{state evolution} in the large system limit ($n,N \to \infty$ such that $\frac{n}{N} =\delta$ 
is a constant).  For example, if $\beta^t$ is the estimate produced by AMP at iteration $t$, their result implies that the normalized squared error, $\frac{1}{N}\norm{\beta^t - \beta_0}^2$, and other performance measures converge to known values predicted by state evolution using the prior distribution of $\beta_0$.\footnote{Throughout the paper, $\|\cdot\|$ denotes the Euclidean norm.}
%(The large system limit  is defined as $n,N \to \infty$ such that $\frac{n}{N} =\delta$, a constant.)
Recently, Rush and Venkataramanan \cite{RushV16} provided a concentration version of the asymptotic result when the prior distribution of $\beta_0$ is i.i.d.\ sub-Gaussian.  The result implies that
the probability of $\e$-deviation between various performance measures and their limiting constant values fall exponentially in $n$. Extensions of AMP performance guarantees beyond separable denoisers have been considered in special cases  
\cite{JavMonState13,RushGVISIT15} for
certain classes of block-separable denoisers that allow dependencies within blocks of the signal $\beta_0$ with independence across blocks.  However these settings are more restricted than the types of dependencies we consider here.

\subsection{AMP Algorithm for Sliding-Window Denoiser}
%\emph{\textbf{The AMP Algorithm for the Sliding-Window Denoiser:}}
The AMP algorithm, in the case of a dependent signal, generates successive estimates of the unknown vector denoted by $\beta^t \in \mathbb{R}^N$ for $t=1,2,\ldots$.  These values are calculated as follows: given the observed vector $y$, set $\beta^0=0$, the all-zeros vector. For $t=0,1,\ldots$, fix $k \geq 0$ an integer, and AMP computes
\begin{align}
z^t & = y - A\beta^t + \frac{z^{t-1}}{n} \sum_{i=k+1}^{N-k} \eta_{t-1}^{\prime}( [A^*z^{t-1} + \beta^{t-1}]_{i-k}^{i+k}), \label{eq:amp1}\\
\beta^{t+1}_i & = \begin{cases}
\eta_{t}( [A^*z^t + \beta^t]_{i-k}^{i+k}) \quad &\text{ if } k + 1 \leq i \leq N-k, \\
0 \quad &\text{ otherwise},
 \label{eq:amp2}
\end{cases}
\end{align}
for an appropriately-chosen sequence of non-separable denoiser functions $\{\eta_{t}\}_{t \geq 0}: \mathbb{R}^{2k + 1} \to \mathbb{R}$, where the notation
\ben
[x]_{i-k}^{i+k} = (x_{i-k}, \ldots, x_{i+k}) \in \mathbb{R}^{2k + 1} \quad \text{ for } \quad x \in \mathbb{R}^N,
%\label{def:notation1}
\een
and $A^*$ denotes the transpose of $A$.
%For i.i.d.\ signals $\beta_0$, the denoiser functions $\{\eta_t\}_{t \geq 0}$ act component-wise when applied to a vector and this is referred to as the \emph{separable} case.  For our problem, where $\beta_0$ has dependent entries, we use \emph{non-separable} denoisers $\{\eta_{t}\}: \mathbb{R}^{2k+1} \to \mathbb{R}$ defined for $t \geq 0$.
We let $\eta^{\prime}_{t}$ denote the partial derivative of $\eta_{t}$ with respect to
(w.r.t.) the $(k+1)^{th}$ coordinate, or the center element, assuming the function is differentiable. Quantities with a negative index in \eqref{eq:amp1} and \eqref{eq:amp2} are set to zero.

\subsection{Contributions and Outline}

Our main result proves concentration for order-2 pseudo-Lipschitz (PL) loss functions\footnote{A function $f: \mathbb{R}^{m} \to \mathbb{R}$ is order-2 \emph{pseudo-Lipschitz} if there exists a constant $L >0$ such that for all $x, y \in \mathbb{R}^{m}$, $|f(x)-f(y)|\leq L(1+\norm{x}+\norm{y})\norm{x-y}$.} for the AMP estimate of \eqref{eq:amp2} at any iteration $t$ of the algorithm to constant values predicted by the state evolution equations.  
We envision that our work in understanding the role of sliding-window denoisers within AMP is an early step in characterizing the role of non-separable denoisers within AMP. This work will lead to a characterization of more general denoisers in problems including compressive image reconstruction \cite{Tan15,Metzler16}.

To characterize AMP performance for sliding-window denoisers when the input signal is a Markov chain, we need concentration inequalities for PL functions of Markov chains and sequences of Gaussian vectors that are constructed in a certain way. Specifically, in the constructed sequences, successive elements are successive $(2k+1)$-length overlapping  blocks of some original sequences (another Markov chain or Gaussian sequence, respectively), as suggested by the structure of the denoiser $\eta_t$ in \eqref{eq:amp2}.
%To characterize AMP performance in this way for sliding-window denoisers when the input signal is a Markov chain,
%requires concentration inequalities for PL functions of Markov chains and sequences of Gaussian vectors that are constructed such that successive elements are successive $(2k+1)$-length overlapping  blocks of some original sequences (another Markov chain or Gaussian sequence, respectively), as suggested by the structure of the denoiser $\eta_t$ in \eqref{eq:amp2}. 
These concentration results are proved in Lemmas \ref{lem:PL_overlap_gauss_conc} and \ref{lem:PLMCconc_new} in Appendix \ref{app:conc_dependent}. 
%Because we also need to characterize the bounded derivative of the denoiser function, concentration results for bounded functions of the sequences constructed as described above are also needed; these results are stated in Lemmas \ref{lem:janson} and \ref{lem:BoundedMCconc} in Section \ref{sec:amp_proof}.

%\subsection{Paper Outline}

The rest of the paper is organized as follows. Section \ref{sec:mainresult} provides model assumptions, state evolution for sliding-window denoisers, and the main performance guarantee (Theorem \ref{thm:main_amp_perf}), a concentration result for PL loss functions acting on the AMP estimate from \eqref{eq:amp2}
to the state evolution predictions. Section \ref{sec:amp_proof} proves Theorem \ref{thm:main_amp_perf} with a proof based on two technical results, Lemma \ref{lem:hb_cond} and Lemma \ref{lem:main_lem}, which are proved in Section \ref{sec:main_lem_proof}. %The proof of Lemma \ref{lem:hb_cond} is the same as \cite[Lemma 4]{RushV16} and we prove Lemma \ref{lem:main_lem} in Section \ref{sec:main_lem_proof}.
%Finally, Section \ref{sec:conclusion} concludes the paper and discusses future work.

\section{Main Results} \label{sec:mainresult}

\subsection{Definitions and Assumptions}
\label{subsec:def_assumption}
First we include definitions of properties of Markov chains that will be useful to clarify our assumptions on the unknown signal $\beta_0$.
%throughout the work.

\begin{defi}
\label{def:geom_ergo}
Consider a Markov chain on a state space $S$ with transition probability measure $r(x, dy)$ and stationary distribution $\gamma$. Denote the set of all $\gamma$-square-integrable functions by $L^2(\gamma):=\{f:\mathbb{R}\rightarrow\mathbb{R}:\int_S |f(x)|^2\gamma(dx)<\infty\}$. Define a linear operator $R$ associated with $r(x,dy)$ as $Rf(x):=\int_S f(y)r(x,dy)$ for $f\in L^2(\gamma)$. The chain is said to be \textbf{geometrically ergodic on $L^2(\gamma)$}
if there exists $0 < \rho < 1$ such that for each probability measure $\nu$ that satisfies $\int_S |\frac{d\nu}{d\gamma}|^2d\gamma<\infty$, there is a constant $C_\nu < \infty$ such that
\ben
\sup_{A\in\mathcal{B}(S)} \abs{\int_S r^n(x,A)\nu(dx) - \gamma(A)} < C_\nu \rho^n, \qquad n \in \mathbb{N},
\een
where $\mc{B}(S)$ is the Borel sigma-algebra on $S$ and $r^n(x,dy)$ denotes the $n$-step transition probability measure. In other words, geometrical ergodicity means the chain converges to its stationary distribution $\gamma$ geometrically fast. The chain is said to be \textbf{reversible} if $r(x,dy)\gamma(dx)=r(y,dx)\gamma(dy)$. Moreover, a chain is said to have a \textbf{spectral gap on $L^2(\gamma)$} if
\begin{equation*}
g_2:=1-\sup \{\lambda\in \Lambda:\lambda\neq 1\}>0,
\end{equation*}
where $\Lambda$ is a set of values for $\lambda$ such that $(\lambda \mathsf{I}-R)^{-1}$ does not exist as a bounded linear operator on $L^2(\gamma)$. Note that for a countable state space $S$, $\Lambda$ is the set of all eigenvalues of the transition probability matrix, hence $g_2$ is the distance between the largest and the second largest eigenvalues.
\end{defi}

It has been proved that a Markov chain has spectral gap on $L^2(\gamma)$ if and only if it is reversible and geometrically ergodic \cite{roberts1997}. We use the existence of a spectral gap to prove concentration results for PL functions with dependent input, where the dependence is characterized by a Markov chain.
%The existence of spectral gap is required in our proof for the concentration results about PL functions functions of Markov chains. 
Such concentration results are crucial for obtaining the main technical lemma, Lemma \ref{lem:main_lem}, and hence our main result, Theorem \ref{thm:main_amp_perf}. If the spectral gap does not exist, meaning that $g_2=0$, then our proof only bounds the probability of tail events in Lemma \ref{lem:main_lem} by constant 1, which is useless.

With this definition, we now clarify the assumptions under which our result is proved. 
%The assumptions on the measurement matrix $A$ and noise $w$ are the same as in Rush and Venkataramanan \cite{RushV16} and are not repeated here. We state our assumptions on the non-i.i.d.\ signal $\beta_0$ and the non-separable denoiser $\eta_t$ below.

\textbf{\emph{Assumptions}}:

\emph{Signal:} Let $S\subset\mathbb{R}$ be a bounded state space (countable or uncountable).  We assume that the signal $\beta_0 \in S^N$ is produced by a time-homogeneous, reversible, geometrically ergodic Markov chain in its (unique) stationary distribution.  Note that this means the `sequence' $\beta_{0_1}, \beta_{0_2}, \ldots, \beta_{0_N}$, where $\beta_{0_i}$ is element $i$ of $\beta_0$, forms a Markov chain.  We refer to the stationary distribution as $\gamma_{\beta}$. 

\emph{Denoiser functions:} The denoiser functions $\eta_t:\mathbb{R}^{2k+1}\rightarrow\mathbb{R}$ used in \eqref{eq:amp2} are assumed to be Lipschitz\footnote{A function $f: \mathbb{R}^{m} \to \mathbb{R}$ is \emph{Lipschitz} if there exists a constant $L >0$ such that for all $x, y \in \mathbb{R}^{m}$,
$\abs{f(x) - f(y)} \leq L\norm{x-y}$, where $\norm{\cdot}$ denotes the Euclidean norm.}
 for each $t>0$ and differentiable %therefore, are also weakly differentiable with bounded derivative. The weak partial derivative 
 w.r.t.\ the $(k+1)^{th}$ (middle) coordinate with bounded derivative denoted by $\eta_t'$. Further, the derivative $\eta_t'$ is assumed to be differentiable with bounded derivative.
%, meaning that all $2k+1$ partial derivatives exist and are bounded. 
Note that this implies $\eta_t'$ is Lipschitz. (It is possible to put a weaker condition on $\eta_t$ as in \cite{RushV16}. That is, $\eta_t$ is Lipschitz, hence weakly differentiable with bounded derivative. The weak derivative w.r.t. the $(k+1)^{th}$ coordinate, denoted by $\eta_t'$, is assumed to be differentiable except at a finite number of points; the derivative of $\eta_t'$ is assumed to be bounded when it exists.)

%(It is possible to weaken this condition to allow  $\eta_t'$ to have a finite number of discontinuities, if needed, as in \cite{RushV16}.)
%, except possibly at a finite number of points, with bounded derivative where it exists.

\emph{Matrix:} The entries of the matrix $A$ are i.i.d.\ $\sim\mc{N}(0,1/n)$.

\emph{Noise:} The entries of the measurement noise vector $w$ are i.i.d.\ according to some sub-Gaussian distribution $p_w$ with mean 0 and finite variance $\sigma^2$. The sub-Gaussian assumption implies that for all $\e\in(0,1)$,
\begin{equation*}
P\left(\abs{\frac{1}{n}\|w\|^2 - \sigma^2}\geq \e\right)\leq Ke^{-\kappa n \e^2}
\end{equation*}
for some constants $K,\kappa>0$ \cite{BLMConc}.

\subsection{Performance Guarantee}

As mentioned in Section \ref{sec:intro}, the behavior of the AMP algorithm is predicted by a simple, scalar iteration referred to as state evolution, which we introduce here.  Let the stationary distribution $\gamma_{\beta}$ and the transition probability measure $r(x,dy)$ define the prior distribution for the unknown vector $\beta_0$ in \eqref{eq:model1}. Let the random variable $\beta \in S$ be distributed as $\gamma_{\beta}$ and the random vector $\underline{\beta} \in S^{2k+1}$ be distributed as $\pi$, where
\begin{equation}
\pi(dx)=\pi((dx_1,...,dx_{2k+1}))=\prod_{i=2}^{2k+1}r(x_{i-1},dx_{i})\gamma_{\beta}(dx_1)
\label{eq:pi_def}
\end{equation} 
is the probability of seeing such a length-$(2k+1)$ sequence in the $\beta_0$ Markov chain (i.e.\ it is the $(2k+1)$-dimension marginal distribution of $\beta_0$).  %Note that you could equivalently think of $\beta$ as being the $(k+1)^{th}$ entry of $\underline{\beta}$ since $\underline{\beta}$ is a Markov chain starting in its stationary distribution (i.e.\ $\gamma_\beta$ is the one-dimension marginal distribution of $\beta_0$). 
Define $\sigma_\beta^2 = \mathbb{E}[\beta^2] > 0$ and $\sigma_0^2 =\sigma_\beta^2/\delta$. Iteratively define the quantities $\{\sigma_t^2\}_{t \geq 1}$ and $\{\tau_t^2\}_{t \geq 0}$ as follows,
\be
\begin{split}
\tau_t^2 &= \sigma^2 + \sigma_t^2, \\
\sigma_{t+1}^2 &= \frac{1}{\delta}\left((1-w_k)\mathbb{E}\left[\left(\eta_{t}(\underline{\beta} + \tau_{t} \underline{Z}) - \underline{\beta}_{k+1}\right)^2 \right] + w_k \sigma_{\beta}^2\right), \label{eq:taut_sigmat_def}
\end{split}
\ee
with $\underline{\beta}_{k+1}$ the $(k+1)^{th}$ entry of $\underline{\beta}$, $\underline{Z} \in \mathbb{R}^{2k+1} \sim \mc{N}(0, \mathbb{I}_{2k+1})$ independent of $\underline{\beta}$, $w_k = \frac{2k}{N}$, and $\delta = \frac{n}{N}$.

Theorem \ref{thm:main_amp_perf} provides our main performance guarantee, 
which is a concentration inequality for pseudo-Lipschitz (PL) loss functions.
%\footnote{A function $\phi: \mathbb{R}^{m} \to \mathbb{R}$ is pseudo-Lipschitz (of order $2$)  if there exists a constant $L >0$ such that for all $x,y \in \mathbb{R}^{m}$,
%$\abs{\phi(x) - \phi(y)} \leq L ( 1 + \norm{x} + \norm{y} ) \norm{x-y}$,
%where $\norm{\cdot}$ denotes the Euclidean norm.}

\begin{thm}\label{thm:main_amp_perf}
With the assumptions of Section \ref{subsec:def_assumption}, for any order-$2$ pseudo-Lipschitz function $\phi: \mathbb{R}^{2} \rightarrow \mathbb{R}$, $\e \in (0,1)$, and  $t \geq 0$,
\be
P\left(\left \lvert \sum_{i=k+1}^{N-k} \frac{\phi(\beta^{t+1}_{i}, \beta_{0_i})}{N-2k}  - \mathbb{E}[\phi(\eta_t (\underline{\beta} + \tau_t \underline{Z}), \underline{\beta}_{k+1})]\right \lvert \geq \e\right) \leq K_{k,t}e^{-\kappa_{k,t} n \e^2}.
\label{eq:PL_conc_result}
\ee
In the expectation in \eqref{eq:PL_conc_result}, $\underline{\beta} \in S^{2k+1} \sim \pi$, $\underline{\beta}_{k+1}$ is the $(k+1)^{th}$ element of $\underline{\beta}$, and $\underline{Z} \in \mathbb{R}^{2k+1}  \sim \mc{N}(0, \mathbb{I}_{2k+1})$ independent of $\underline{\beta}$.  The constant $\tau_t$ is defined in \eqref{eq:taut_sigmat_def}
and constants $K_{k,t}, \kappa_{k,t} > 0$ depend on the iteration index $t$ and half window-size $k$, but not on $n$ or $\e$ and are not exactly specified.
\end{thm}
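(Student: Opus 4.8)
The plan is to follow the conditioning-based analysis of AMP pioneered by Bayati--Montanari and sharpened into an exponential concentration statement by Rush--Venkataramanan, adapting each step to accommodate (i) the overlapping-window structure of the denoiser and (ii) the Markov (rather than i.i.d.) law of $\beta_0$. First I would rewrite the recursion \eqref{eq:amp1}--\eqref{eq:amp2} in the standard ``geometric'' variables: set $h^{t+1} := \beta_0 - (A^* z^t + \beta^t)$, $q^t := \beta^t - \beta_0$, $b^t := w - z^t$, and $m^t := -z^t$. One checks these satisfy linear relations $b^t = A q^t + (\text{Onsager term in } m^{t-1})$ and $h^{t+1} = A^* m^t + (\text{term in } q^t)$, where now the denoiser $\eta_t$ and its center-coordinate derivative $\eta_t'$ act on the length-$(2k+1)$ sliding windows of $q^t$ and $h^t$. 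The zero-padding of the boundary coordinates of $\beta^{t+1}$ is tracked explicitly throughout; it is the source of the $w_k = 2k/N$ correction in \eqref{eq:taut_sigmat_def}.

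Next, the first of the two technical pillars is the conditioning lemma (Lemma~\ref{lem:hb_cond}). Conditioning on the sigma-algebra $\mscrs_{t_1,t_2}$ generated by $\beta_0$, $w$, and the iterates produced up to steps $t_1$ and $t_2$, the law of $A$ is Gaussian subject to the accumulated linear constraints; writing $A$ in distribution as $\expec[A \mid \mscrs_{t_1,t_2}]$ plus a fresh i.i.d.\ Gaussian matrix projected onto the orthogonal complement of those constraints, one obtains a representation of $h^{t+1}$ and $b^t$ as an explicit linear combination of the previous $h$'s (resp.\ $b$'s) plus an independent Gaussian vector, plus a ``deviation'' term to be shown negligible. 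The linear algebra here is unchanged from the separable case; what changes is only how the resulting Gaussian and signal vectors are subsequently fed into the non-separable denoisers.

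The inductive core is the master lemma (Lemma~\ref{lem:main_lem}), proved by induction on $t$, asserting that a battery of quantities concentrate exponentially around their state-evolution values: PL functions of the overlapping windows of $(h^1,\dots,h^{t+1},\beta_0)$, PL functions of $(b^0,\dots,b^t,w)$, the normalized inner products $\tfrac{1}{n}\langle h^{s+1},h^{t+1}\rangle$, $\tfrac{1}{n}\langle q^s,q^t\rangle$, $\tfrac{1}{n}\langle b^s,b^t\rangle$, $\tfrac{1}{n}\langle m^s,m^t\rangle$, and averages of the center-coordinate derivatives $\eta_t'$ over windows. The induction step combines: (a) the conditional representation above; (b) concentration of Lipschitz functions of i.i.d.\ Gaussians, applied to the fresh Gaussian part; (c) the new concentration inequality for PL functions of \emph{overlapping Gaussian blocks}, Lemma~\ref{lem:PL_overlap_gauss_conc}, which controls sums such as $\tfrac{1}{N}\sum_i \eta_t^2([\beta_0 - h^{t+1}]_{i-k}^{i+k})$ even though consecutive summands share $2k$ coordinates; and (d) the new concentration inequality for PL functions of \emph{Markov chains} with overlapping windows, Lemma~\ref{lem:PLMCconc_new}, for terms depending on sliding windows of $\beta_0$ itself. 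The spectral-gap hypothesis enters precisely through (d) (and through the analogous block-covariance mixing in (c)). The theorem then follows by specializing the master lemma: since $\beta^{t+1}_i = \eta_t([\beta_0 - h^{t+1}]_{i-k}^{i+k})$ on the interior, the target average $\tfrac{1}{N-2k}\sum_{i=k+1}^{N-k}\phi(\beta^{t+1}_i, \beta_{0_i})$ is a PL function of the overlapping windows of $(\beta_0, h^{t+1})$; by the master lemma $h^{t+1}$ is, up to exponentially small deviation, a Gaussian vector whose windows share the joint law of $\tau_t \underline{Z}$, and one final application of Lemma~\ref{lem:PL_overlap_gauss_conc} identifies the limit as $\expec[\phi(\eta_t(\underline{\beta} + \tau_t \underline{Z}), \underline{\beta}_{k+1})]$, carrying the $w_k$-correction along inside $\tau_t$.

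I expect the main obstacle to be the induction step of the master lemma in the presence of a non-separable denoiser: one must show that replacing the argument of $\eta_t$ -- which is nearly Gaussian but exhibits genuine dependence across the $N-2k$ overlapping positions -- by an exact overlapping Gaussian-block sequence costs only an exponentially small error, \emph{uniformly} in the position $i$, and that these errors do not accumulate uncontrollably through the nested recursion over $t$. This is where the two new concentration lemmas do the real work, and where the dependence of the constants $K_{k,t},\kappa_{k,t}$ on $k$ and $t$ -- and their degradation as the spectral gap $g_2 \to 0$ -- originates.
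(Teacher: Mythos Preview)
Your proposal is correct and follows essentially the same approach as the paper: the change of variables to $(h^{t+1},q^t,b^t,m^t)$, the conditioning lemma (Lemma~\ref{lem:hb_cond}), the inductive master lemma (Lemma~\ref{lem:main_lem}) relying on the new concentration results Lemmas~\ref{lem:PL_overlap_gauss_conc} and~\ref{lem:PLMCconc_new}, and the final specialization via $\phi_h([h^{t+1}]_{i-k}^{i+k},[\beta_0]_{i-k}^{i+k}):=\phi(\eta_t([\beta_0-h^{t+1}]_{i-k}^{i+k}),\beta_{0_i})$ all match. Two small corrections: the theorem is obtained directly from part~(b)(i) of the master lemma rather than by an additional ``final application'' of Lemma~\ref{lem:PL_overlap_gauss_conc}, and the spectral-gap hypothesis is used only in the Markov-chain concentration (Lemma~\ref{lem:PLMCconc_new}), not in the overlapping-Gaussian lemma, which relies purely on the i.i.d.\ structure of the underlying Gaussian coordinates.
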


\textbf{\emph{Remarks}}:

(1) The probability in \eqref{eq:PL_conc_result} is w.r.t.\ the product measure on the space of the matrix $A$, signal $\beta_0$, and noise $w$.

(2) Theorem \ref{thm:main_amp_perf} shows concentration for the loss when considering only the inner $N-2k$ elements of the signal.  This is due to the nature of the sliding-window denoiser, which updates each element of the estimate $\beta^t$ using the $k$ elements on either side of that location.  In practice, as in Ma \emph{et al.}\ \cite{Ma16}, one could run a slightly different algorithm than that given in \eqref{eq:amp1}-\eqref{eq:amp2}: instead of setting the end elements, meaning the first $k$ and last $k$ elements, of the estimate $\beta^t$ equal to $0$, update these elements using the sliding-window denoiser but with missing input values replaced by the median of the other inputs.  Such a strategy shows good empirical performance -- even at the end elements -- and suggests that the concentration result of Theorem \ref{thm:main_amp_perf} could be extended to show concentration for the loss of the full signal.  Proving this requires a delicate handling of the end elements and is left for future research.

(3) The state evolution constants $\{\tau_t^2\}_{t \geq 0}$ defined in \eqref{eq:taut_sigmat_def} are the sum of $\sigma^2$ 
and two weighted terms, where the weight depends on $k$, the length of the window in the sliding-window denoiser.  Since we only estimate the middle $N-2k$ elements of the signal, as $k$ increases the state evolution constants $\{\tau_t^2\}_{t \geq 0}$ depend more on the second moment of the one-dimensional marginals of the original signal, corresponding to the estimation error in the un-estimated part of the signal.

(4) By choosing PL loss, $\phi(a,b) = (a - b)^2$, Theorem \ref{thm:main_amp_perf} gives the following concentration result for the mean squared error of the middle $N-2k$ coordinates of the estimates.  For all $t \geq 0$,
\ben
P\left(\left \lvert \sum_{i=k+1}^{N-k} \frac{(\beta^{t+1}_{i}- \beta_{0_i})^2}{N-2k}  - \frac{n(\tau_{t+1}^2 - \sigma^2) - 2k\sigma_{\beta}^2}{N - 2k}\right \lvert \geq \e\right)
\leq K_{k,t} e^{-\kappa_{k,t} n \e^2},
%\label{eq:PL_conc_result}
\een
with $\tau_{t+1}^2$ defined in \eqref{eq:taut_sigmat_def}. A numerical example demonstrating that the MSE of the AMP estimates $\{\beta^t\}_{t\geq 0}$ is tracked by the state evolution iteration \eqref{eq:taut_sigmat_def} is proved in Section \ref{subsec:numerical}.

\subsection{A Numerical Example}
\label{subsec:numerical}

We now provide a concrete numerical example where AMP is used to estimate $\beta_0$ from the linear system \eqref{eq:model1}, when the entries of $\beta_0$ form a Markov chain on state space $\{0,1\}$ starting from its stationary distribution. The transition probability measure is $r(0,1) = 3/70$ and $r(1,0) = 1/10$, which yields a unique stationary distribution $\gamma_\beta(1) = 1 - \gamma_{\beta}(0) = 3/10$.

We define the denoiser function $\eta_t$ in \eqref{eq:amp2} as the Bayesian sliding-window denoiser. Note that an important key property of AMP is the following: for large $n$ and for $k+1 \leq i \leq N-k$, the observation vector $[A^*z^t + \beta^t]_{i-k}^{i+k}$ used as input to the estimation function in \eqref{eq:amp2} is approximately distributed as $\underline{\beta} + \tau_t \underline{Z}$, where $\underline{\beta} \sim \pi$ with $\pi(x_1,...,x_{2k+1}):=\prod_{i=2}^{2k+1}r(x_{i-1},x_i)\gamma_\beta(x_1),$ $\underline{Z}\sim\mc{N}(0,\textsf{I}_{2k+1})$ independent of $\underline{\beta}$, and $\tau_t$ is defined in \eqref{eq:taut_sigmat_def}.

The above property gives us a natural way to define the Bayesian sliding-window denoiser. That is, suppose $\underline{V}_i = [A^*z^t + \beta^t]_{i-k}^{i+k} \in \mathbb{R}^{2k+1}$.  Then, define 
 \[\eta_t([A^*z^t + \beta^t]_{i-k}^{i+k}) = \eta_t(\underline{V}_i) := \mathbb{E}\left[\underline{\beta}_{k+1}\left\vert  \underline{\beta} + \tau_t \underline{Z} =  \underline{V}_i \right.\right],\] 
 where $\underline{\beta}_{k+1}$ denotes the $(k+1)^{th}$ element of $\underline{\beta}$.  Figure \ref{fig:SE} shows that the mean squared error (MSE) achieved by AMP with the non-separable sliding-window denoiser defined above is tracked by state evolution at every iteration.

%We define the denoiser function $\eta_t$ in \eqref{eq:amp2} as the Bayesian sliding-window denoiser. That is, suppose $\underline{V} = \underline{\beta} + \tau_t \underline{Z} \in \mathbb{R}^{2k+1}$, where $\underline{\beta} \sim \pi$ with $\pi(x_1,...,x_{2k+1}):=\prod_{i=2}^{2k+1}r(x_{i-1},x_i)\gamma_\beta(x_1),$
%$\underline{Z}\sim\mc{N}(0,\textsf{I}_{2k+1})$ independent of $\underline{\beta}$, and $\tau_t$ is defined in \eqref{eq:taut_sigmat_def}.  Then, define $\eta_t([A^*z^t + \beta^t]_{i-k}^{i+k}): = \mathbb{E}\left[\underline{\beta}_{k+1}\left\vert\underline{V}=[A^*z^t + \beta^t]_{i-k}^{i+k}\right.\right].$  Figure \ref{fig:SE} shows that the mean squared error (MSE) achieved by AMP with the non-separable sliding-window denoiser defined above is tracked by state evolution at every iteration. 

Notice that when $k=0$, the denoisers $\{\eta_t\}_{t \geq 0}$ are separable and the empirical distribution of $\beta_0$ converges to the stationary probability distribution $\gamma_{\beta}$ on 
$\mathbb{R}$.  For this case, the state evolution analysis for AMP with separable denoisers ($k=0$) was justified by Bayati and Montanari \cite{BayMont11}. However, it can be seen in Figure \ref{fig:SE} that the MSE achieved by the separable denoiser ($k=0$) is significantly higher (worse) than that achieved by the non-separable denoisers ($k=1,2$).

\begin{figure}
\centering
\includegraphics[width=0.6\textwidth]{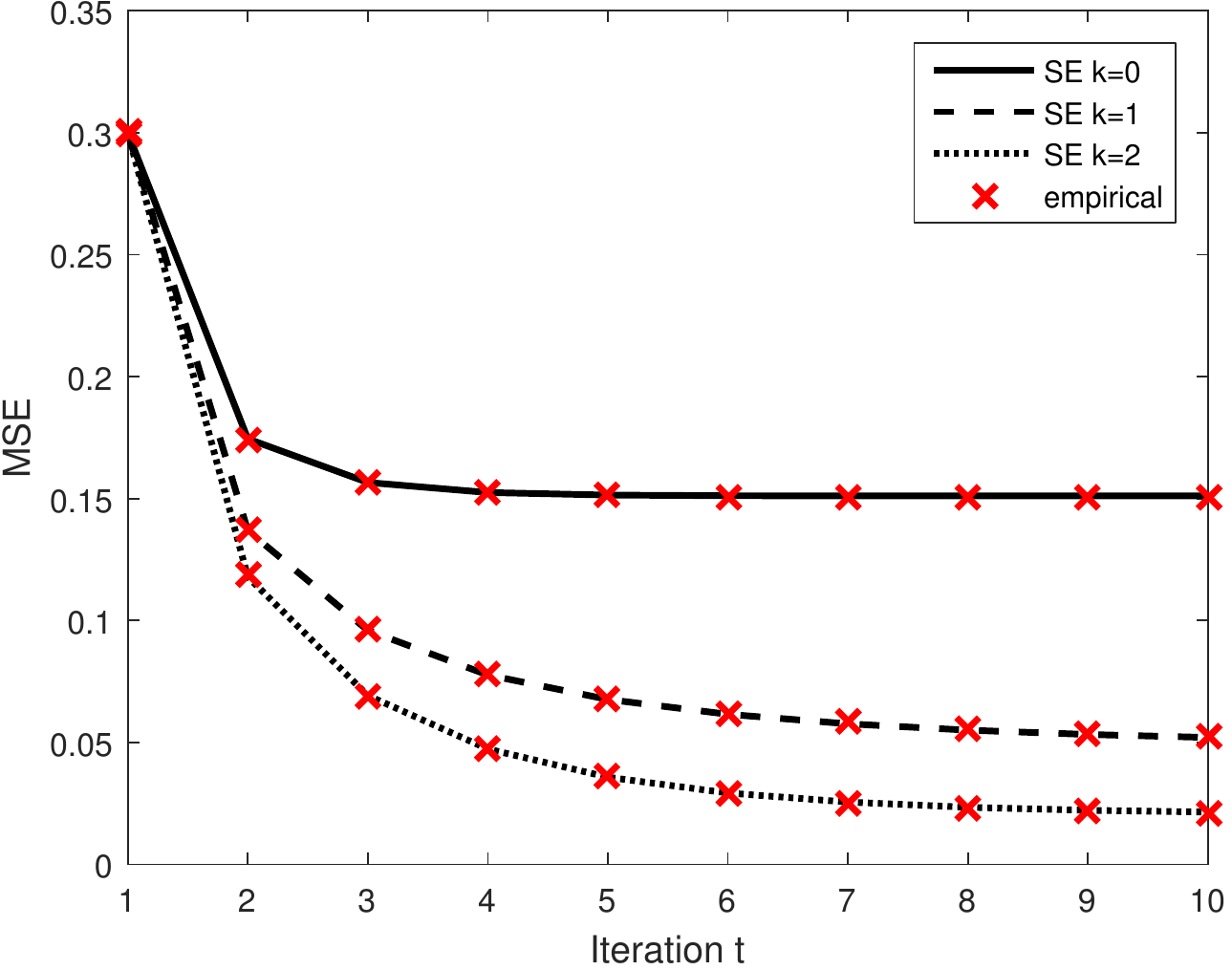}
\caption{ The plot provides the results of a numerical example to demonstrate the validation of state evolution of AMP with non-separable sliding-window denoisers. The black curves are theoretical state evolution predictions given by \eqref{eq:taut_sigmat_def} with three different half window-sizes, $k$. The red crosses are empirical MSE achieved by the AMP algorithm defined in \eqref{eq:amp1} and \eqref{eq:amp2}. ($N=10,000$, $\delta=0.3$, $\sigma^2=0.1$.)}
\label{fig:SE}
\end{figure}

% !TEX root = main.tex

\section{Proof of Theorem \ref{thm:main_amp_perf}} \label{sec:amp_proof}

The proof of Theorem \ref{thm:main_amp_perf} follows 
the work of Rush and Venkataramanan~\cite{RushV16},
with modifications for the dependent structure of the unknown vector $\beta_0$ in \eqref{eq:model1}.  
For this reason, we use much of the same notation.
The main ingredients in the proof of  Theorem \ref{thm:main_amp_perf} are two technical lemmas corresponding to \cite[Lemmas 4 and 5]{RushV16}. 
%For brevity, we present only the main differences from \cite{RushV16}, while the comprehensive version of the proof can be found in \cite{long_version}.
We first cover some preliminary results and establish notation used in the proof. We then discuss the lemmas used to prove Theorem \ref{thm:main_amp_perf}.

% !TEX root = main.tex

\subsection{Proof Notation} \label{subsec:defs}
As mentioned above, in order to streamline the proof of our technical lemmas we use notation similar to \cite{RushV16} and consequently to \cite{BayMont11}.  As in the previous work, the technical lemmas are proved for a more general recursion which we define in the following, with AMP being a specific example of the general recursion as shown below.

Given noise $w \in \mathbb{R}^n$ and unknown signal $\beta_0 \in S^N$, fix the half-window-size $k>0$ an integer, define column vectors $h^{t+1}, q^{t+1} \in \mathbb{R}^N$ and $b^t, m^t \in \mathbb{R}^n$ for $t \geq 0$ recursively as follows, starting with initial condition $q^0 \in \mathbb{R}^N$:
\begin{equation}
\begin{split}
h^{t+1} := A^*m^t - \xi_t q^t, \qquad &  q^{t}_i  := \begin{cases}f_t ([h^t]_{i-k}^{i+k}, [\beta_0]_{i-k}^{i+k}), \qquad &\text{ if } k+1 \leq i \leq N-k,\\
-\beta_{0_i},\qquad &\text{ otherwise},\end{cases}\\
b^t := A q^t - \lambda_t m^{t-1},\qquad & m^t  := g_t(b^t, w),
\end{split}
\label{eq:hqbm_def}
\end{equation}
with scalar values $\xi_t$ and $\lambda_t$ defined as
\be 
\xi_t := \frac{1}{n} \sum_{i=1}^n g_t'(b^t_i, w_i), \quad 
\lambda_t := \frac{1-w_k}{\delta(N-2k)} \sum_{i=k+1}^{N-k} f_t'([h^t]_{i-k}^{i+k}, [\beta_{0}]_{i-k}^{i+k}), 
\label{eq:xi_lamb_def} 
\ee 
where $w_k = 2k/N$.  For the derivatives in \eqref{eq:xi_lamb_def}, the derivative of $g_t: \mathbb{R}^2 \rightarrow \mathbb{R}$ is with respect to the first argument and the derivative of $f_t: \mathbb{R}^{2(2k+1)} \rightarrow \mathbb{R}$ is with respect to the $(k+1)^{th}$, or center coordinate, of the first argument.  The functions $\{f_t\}_{t \geq 0}$, and $\{g_t\}_{t \geq 0}$ are assumed to be Lipschitz continuous and differentiable with bounded derivatives $g_t'$ and $f_t'$.  %meaning the the weak derivatives  $g_t'$ and $f_t'$ exist and are bounded.  
Further, $g'_t$ and $f'_t$ are each assumed to be differentiable in the first argument with bounded derivative.  For $f'_t$ this means that we assume the first $2k+1$ partial derivatives exist and are bounded.
%, except possibly at a finite number of points, with bounded derivative everywhere it exists.

Recall that the unknown vector $\beta_0\in S^N$ is assumed to have a Markov chain prior with transition probability measure $r(x,dy)$ and stationary probability measure $\gamma_\beta$. Let $\beta\in S\sim \gamma_\beta$ and $\underline{\beta}\in S^{2k+1}\sim\pi$ where $\pi$ is defined in \eqref{eq:pi_def}. Note that $\pi$ is the $(2k+1)$-dimension marginal distribution of $\beta_0$ and $\gamma_{\beta}$ is the one-dimensional marginal distribution.

Let $\underline{0}\in\mathbb{R}^{2k+1}$ be a vector of zeros. Define
\begin{align}
\sigma_{\beta}^2 := \mathbb{E}[\beta^2], \quad \text{ and } \quad \sigma_0^2 &:= \frac{1}{\delta}\left((1-w_k) \mathbb{E}\left[f_0^2(\underline{0},\underline{\beta})\right] + w_k \sigma_{\beta}^2 \right) >0. \label{eq:sigma0}
\end{align}
%for $\beta\in\mathbb{R}\sim\gamma_{\beta}$, $\underline{\beta} \sim \pi$, and $\underline{0} \in \mathbb{R}^{2k+1}$ a vector of 0's.  We assume $\sigma_0^2$ is strictly positive.  
Further, let
\be
 q^{0}_i  := \begin{cases}f_0 (\underline{0}, [\beta_0]_{i-k}^{i+k}), \qquad &\text{ if } k+1 \leq i \leq N-k \\
-\beta_{0_i},\qquad &\text{ otherwise}\end{cases}
 \label{eq:q0def}
\ee
and assume that there exist constants $K, \kappa > 0$ such that
\be
P\left(\left\lvert \frac{1}{n} \norm{q^0}^2 - \sigma_0^2\right \lvert \geq \e\right) \leq K e^{-\kappa n \e^2}.
\label{eq:qassumption}
\ee
Define the state evolution scalars $\{\tau_t^2\}_{t \geq 0}$ and $\{\sigma_t^2\}_{t \geq 1}$ for the general recursion as follows.
\be
\tau_t^2 := \mathbb{E}\left[(g_{t}(\sigma_{t} Z, W))^2\right],  \quad \quad \sigma_t^2 := \frac{1}{\delta}\left((1-w_k)\mathbb{E}\left[(f_{t}(\tau_{t-1} \underline{Z}, \underline{\beta}))^2\right] + w_k \sigma_{\beta}^2\right), \label{eq:sigmatdef}
\ee
where random variables $W \sim p_w$ and $Z \sim \mc{N}(0,1)$ and random vectors $\underline{\beta}\in S^{2k+1} \sim \pi$ and $\underline{Z}\in\mathbb{R}^{2k+1} \sim \mc{N}(0,\mathsf{I}_{2k+1})$ are independent.  We assume that both $\sigma_0^2$ and $\tau_0^2$ are strictly positive.

We note that the AMP algorithm introduced in \eqref{eq:amp1} and \eqref{eq:amp2} is a special case of the general recursion introduced \eqref{eq:hqbm_def} and \eqref{eq:xi_lamb_def}.  Indeed, define the following vectors  recursively for $t\geq 0$, starting with $\beta^0=0$ and $z^0=y$. 
\begin{equation}
\begin{split}
h^{t+1} = \beta_0 - (A^*z^t + \beta^t), \qquad &  q^t  =\beta^t - \beta_0, \\
b^t = w-z^t,\qquad & m^t  =-z^t.
\end{split}
\label{eq:hqbm_def_AMP}
\end{equation}
It can be verified that these vectors satisfy  \eqref{eq:hqbm_def} and \eqref{eq:xi_lamb_def} with Lipschitz functions
\be
f_t(\underline{a}, [\beta_0]_{i-k}^{i+k}) = \eta_{t-1}\left([\beta_0]_{i-k}^{i+k} - \underline{a}\right) - \beta_{0_i}, \quad \text{ and } \quad g_t(b, w) = b-w,
\label{eq:fg_def}
\ee
where $ \underline{a} \in\mathbb{R}^{2k+1}$ and $b\in\mathbb{R}$.  Using $f_t, g_t$ defined in \eqref{eq:fg_def} in \eqref{eq:sigmatdef} yields the expressions for $\sigma_t^2, \tau_t^2$ in \eqref{eq:taut_sigmat_def}.  We note that by Lemma \ref{lem:PLMCconc_new} with $d=1$ and $f(x):=x^2$,
\begin{equation}
P\left( \abs{\frac{1}{N}\|\beta_0\|^2 - \sigma_{\beta}^2} \geq \epsilon \right)\leq K e^{-\kappa N \e^2},
\label{eq:beta0_assumption}
\end{equation}
and so for the AMP algorithm using $f_t$ from \eqref{eq:fg_def} in \eqref{eq:q0def}, assumption \eqref{eq:qassumption} is satisfied.

In what follows, the notation matches that of \cite{RushV16} but is repeated here for completeness.  In the remaining analysis, the general recursion given in \eqref{eq:hqbm_def} and \eqref{eq:xi_lamb_def} is used.  We can write vector equations to represent the recursion as follows: for all $t \geq 0$,
\begin{equation}
b^{t} + \lambda_t m^{t-1} = A q^t, \quad \text{ and } \quad h^{t+1} + \xi_t q^t = A^* m^t.
\label{eq:bmq}
\end{equation}
This yields matrix equations $X_t = A^* M_t$ and $Y_t =AQ_t,$ where we define the individual matrices as
\be
\label{eq:XYMQt}
\begin{split}
X_t  := [h^1 + \xi_0 q^0 \mid h^2+\xi_1 q^1 \mid \ldots \mid h^t + \xi_{t-1} q^{t-1}],   & \qquad  Q_t  :=  [q^0 \mid \ldots \mid q^{t-1} ],\\
Y_t  := [b^0 \mid b^1 + \lambda_1 m^0 \mid \ldots \mid b^{t-1} + \lambda_{t-1} m^{t-2}], & \qquad M_t  := [m^0 \mid \ldots \mid m^{t-1} ] \\
 \Xi_{t}:= \text{diag}(\xi_0, \ldots, \xi_{t-1}) &\qquad H_t := [h^1 | \ldots | h^{t}], \\
 \Lambda_t := \text{diag}(\lambda_0, \ldots, \lambda_{t-1}) &\qquad  B_t := [b^0 | \ldots | b^{t-1}]. 
\end{split}
\ee
In the above, $[c_1 \mid c_2 \mid \ldots \mid c_k]$ denotes a matrix with columns $c_1, \ldots, c_k$ and $M_0$, $Q_0$, $B_0$, $H_0$, and $\Lambda_0$ are defined to be the all-zero vector.  From the above matrix definitions we have the following matrix equations $Y_t = B_t + \Lambda_t [0 | M_{t-1}]$ and $X_t = H_t + \Xi_t Q_{t}.$

The values $m^t_{\|}$ and $q^t_{\|}$ are projections of $m^t$ and $q^t$ onto the column space of $M_t$ and $Q_t$, with $m^t_{\perp} := m^t - m^t_{\|},$ and $q^t_{\perp} := q^t - q^t_{\|}$ being the projections onto the orthogonal complements of $M_t$ and $Q_t$.  Finally, define the vectors
 \be 
 \alpha^t := (\alpha^t_0, \ldots, \alpha^t_{t-1})^*, \qquad  \gamma^t :=  (\gamma^t_0, \ldots, \gamma^t_{t-1})^* 
 \label{eq:vec_alph_gam_conc}
 \ee 
 to be the coefficient vectors of the parallel projections, i.e.,
 \be
 m^t_{\| } := \sum_{i=0}^{t-1} \alpha^t_i m^i, \qquad  q^t_{\|} := \sum_{i=0}^{t-1} \gamma^t_i q^i.
 \label{eq:mtqt_par}
 \ee
The technical lemma, Lemma \ref{lem:main_lem},  shows that for large $n$, the entries of the vectors $\alpha^t$ and ${\gamma}^t$ concentrate to constant values which are defined in the following section.

 %%%%%%%%%%%%%%%%%%%%%%%%%%
%%%%%%%%%%%%%%%%%%%%%%%%%%

\subsection{Concentrating Constants} \label{subsec:concvals}

Recall that $\beta_0 \in S^N$ is the unknown vector to be recovered and $w \in \mathbb{R}^n$  is the measurement noise.   Using the definitions in \eqref{eq:hqbm_def_AMP}, note that the vector $h^{t+1}$ is the noise in the observation $A^*z^t + \beta^t$ (from the true $\beta_0$), while $q^t$ is the error in the estimate $\beta^t$.  The technical lemma will show that $h^{t+1}$ can be approximated as i.i.d.\ $\mc{N}(0, \tau_t^2)$ in functions of interest for the problem, namely when used as input to PL functions, and $b^t$ can be approximated as i.i.d.\ $\mc{N}(0, \sigma_t^2)$ in PL functions. Moreover, the deviations of the quantities  $\frac{1}{n}\|m^t\|^2$ and $\frac{1}{n}\|q^t\|^2$  from $\tau_t^2$ and $\sigma_t^2$, respectively, fall exponentially in $n$.  In this section we introduce the concentrating values for various inner products of the values $\{h^t, m^t, q^t, b^t\}$ that are used in Lemma \ref{lem:main_lem}.

First define the concentrating values for $\lambda_{t+1}$ and $\xi_{t}$ defined in \eqref{eq:xi_lamb_def} as
\be
\hat{\lambda}_{t+1} := \left(\frac{1-w_k}{\delta} \right) \mathbb{E}\left[f'_{t}(\tau_{t} \underline{\tilde{Z}}_{t}, \underline{\beta})\right], \quad \text{ and } \hat{\xi}_{t} = \mathbb{E}\left[g'_{t}(\sigma_{t} \breve{Z}_{t}, W)\right].
\label{eq:hatlambda_hatxi}
\ee
Next, let $\{\breve{Z}_t \}_{t \geq 0}$ be a sequence of zero-mean jointly Gaussian random variables on $\mathbb{R}$,
and let  $\{ \underline{\tilde{Z}}_t \}_{t \geq 0}$ be a sequence of zero-mean jointly Gaussian random vectors on $\mathbb{R}^{2k+1}$, where $\underline{\tilde{Z}}_t$ has i.i.d.\ coordinates for all $t \geq 0$, and $\underline{\tilde{Z}}_{t_i}$ and $\underline{\tilde{Z}}_{r_j}$ are independent when $i\neq j$. The covariance of the two random sequences is defined recursively as follows. For $r,t \geq 0$,
 \be 
\expec[\breve{Z}_r \breve{Z}_t] = \frac{\tilde{E}_{r,t}}{\sigma_r \sigma_t}  , \qquad \expec[\underline{\tilde{Z}}_{r_i} \underline{\tilde{Z}}_{t_i}] = \frac{\breve{E}_{r,t}}{\tau_r \tau_t}, \quad \text{ for } i = 1, \dots, 2k+1 
\label{eq:tildeZcov}
\ee
where
\be
\begin{split}
\tilde{E}_{r,t} := \frac{1}{\delta}\left((1 - w_k)\mathbb{E}[f_{r}(\tau_{r-1} \underline{\tilde{Z}}_{r-1}, \underline{\beta}) f_{t}(\tau_{t-1} \underline{\tilde{Z}}_{t-1}, \underline{\beta})] + w_k \sigma_{\beta}^2\right), \quad \breve{E}_{r,t} := \mathbb{E}[g_{r}(\sigma_{r} \breve{Z}_{r}, W) g_{t}(\sigma_{t} \breve{Z}_{t}, W)],
\end{split}
\label{eq:Edef}
\ee
with $w_k = 2k/N$ and $\sigma_{\beta}^2$ was defined in \eqref{eq:sigma0}. Note that both terms of the above are scalar values and we take $f_0(\cdot, \underline{\beta}) := f_0(\underline{0},\underline{\beta})$, the initial condition.  Moreover, $\tilde{E}_{t,t} = \sigma_t^2$ and $\breve{E}_{t,t} = \tau_t^2$, as can be seen from \eqref{eq:sigmatdef}, thus $\expec[\underline{\tilde{Z}}^2_{t_i}] = \expec[\breve{Z}^2_t] =1$.

Define matrices $\tilde{C}^t, \breve{C}^t \in \mathbb{R}^{t \times t}$ for $t \geq 1$ taking values from \eqref{eq:Edef} as
\be
\tilde{C}^{t}_{i+1,j+1} = \tilde{E}_{i,j}, \quad \text{ and } \quad \breve{C}^{t}_{i+1,j+1} = \breve{E}_{i,j}, \quad 0\leq i,j \leq t-1.
\label{eq:Ct_def}
\ee
Then,  concentrating values for $\gamma^t$ and $\alpha^t$ defined in \eqref{eq:vec_alph_gam_conc} (as long as $\tilde{C}^{t}$ and $\breve{C}^{t}$ are invertible) are
\be
\hat{\gamma}^{t} := (\tilde{C}^t)^{-1}\tilde{E}_t,  \quad \text{ and } \quad \hat{\alpha}^{t} := (\breve{C}^t)^{-1}\breve{E}_t,
\label{eq:hatalph_hatgam_def}
\ee
where
\be \tilde{E}_t:= (\tilde{E}_{0,t} \ldots, \tilde{E}_{t-1,t})^*, \quad \text{ and } \quad \breve{E}_t:=(\breve{E}_{0,t} \ldots, \breve{E}_{t-1,t})^*. 
\label{eq:Et_def}
\ee
%We prove that $\tilde{C}^t$ and $\breve{C}^t$ are invertible in the next subsection.
%Note that the upper bounds on $\tilde{E}_{r,t}$ and $\breve{E}_{r,t}$ imply upper bounds for the entries of $\hat{\gamma}^{t}$ and $\hat{\alpha}^{t}$.  The see this note that $\hat{\gamma}^{t}_i$ for $1 \leq i \leq t$ is the inner product of the $i^{th}$ row of $(\tilde{C}^t)^{-1}$ with $\tilde{E}_t$, both of which have bounded norm since the elements of $(\tilde{C}^t)^{-1}$.  
Finally, define the values $(\sigma^{\perp}_0)^2 := \sigma_0^2$ and $(\tau^{\perp}_0)^2 := \tau_0^2$, and for $t > 0$
\be
\begin{split}
& (\sigma_{t}^{\perp})^2 := \sigma_t^2 - (\hat{\gamma}^{t})^* \tilde{E}_{t}= \tilde{E}_{t,t} - \tilde{E}^*_{t} (\tilde{C}^{t})^{-1}  \tilde{E}_{t}, \\
& (\tau^{\perp}_{t})^2 := \tau_{t}^2 - (\hat{\alpha}^{t})^* \breve{E}_t =  \breve{E}_{t,t} -   \breve{E}^*_{t} (\breve{C}^{t})^{-1}  \breve{E}_{t}.
\label{eq:sigperp_defs}
\end{split}
\ee
\begin{lem} \label{lem:Ct_invert}
If $(\sigma_k^{\perp})^2$ and $(\tau_k^{\perp})^2$ are bounded below by some positive constants for $k \leq t$, then the matrices $\tilde{C}^{k+1}$ and $\breve{C}^{k+1}$ defined in \eqref{eq:Ct_def} are invertible for $k \leq t$. \end{lem}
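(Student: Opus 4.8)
The plan is to proceed by induction on $t$, showing that positivity of the ``perpendicular'' variances up through index $t$ forces the Gram-type matrices $\tilde C^{t+1}$ and $\breve C^{t+1}$ to be invertible. The key observation is that $\tilde C^{t+1}$ and $\breve C^{t+1}$ are \emph{not} arbitrary symmetric matrices: by construction in \eqref{eq:tildeZcov}--\eqref{eq:Edef} they are, up to the positive diagonal scalings by $\sigma_r\sigma_t$ and $\tau_r\tau_t$, the covariance matrices of the jointly Gaussian vectors $(\breve Z_0,\dots,\breve Z_t)$ and of any single coordinate of $(\underline{\tilde Z}_0,\dots,\underline{\tilde Z}_t)$. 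Indeed $\tilde E_{r,t}=\sigma_r\sigma_t\,\expec[\breve Z_r\breve Z_t]$ and $\breve E_{r,t}=\tau_r\tau_t\,\expec[\underline{\tilde Z}_{r_i}\underline{\tilde Z}_{t_i}]$, so $\tilde C^{t+1}$ and $\breve C^{t+1}$ are positive semidefinite automatically; the content of the lemma is that they are in fact positive definite (hence invertible) precisely when no degeneracy has occurred, and the quantities $(\sigma_k^\perp)^2$, $(\tau_k^\perp)^2$ are exactly the indicators of non-degeneracy at each step.

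Concretely, I would argue as follows. The base case $k=0$ is immediate: $\tilde C^1=[\tilde E_{0,0}]=[\sigma_0^2]=[(\sigma_0^\perp)^2]>0$ and likewise $\breve C^1=[\tau_0^2]>0$, using the standing assumption that $\sigma_0^2,\tau_0^2$ are strictly positive. For the inductive step, assume $\tilde C^{k}$ is invertible (equivalently positive definite, being PSD) and that $(\sigma_k^\perp)^2>0$; I want to conclude $\tilde C^{k+1}$ is invertible. Write the block decomposition
\[
\tilde C^{k+1}=\begin{pmatrix} \tilde C^{k} & \tilde E_k \\ \tilde E_k^* & \tilde E_{k,k}\end{pmatrix},
\]
where $\tilde E_k=(\tilde E_{0,k},\dots,\tilde E_{k-1,k})^*$ as in \eqref{eq:Et_def}. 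By the Schur complement formula, since $\tilde C^{k}$ is invertible,
\[
\det \tilde C^{k+1} = \det \tilde C^{k}\cdot\bigl(\tilde E_{k,k}-\tilde E_k^*(\tilde C^{k})^{-1}\tilde E_k\bigr)=\det\tilde C^{k}\cdot(\sigma_k^\perp)^2,
\]
the last equality being exactly the definition in \eqref{eq:sigperp_defs}. Since $\det\tilde C^{k}\neq 0$ by the inductive hypothesis and $(\sigma_k^\perp)^2>0$ by assumption, we get $\det\tilde C^{k+1}\neq 0$, so $\tilde C^{k+1}$ is invertible. The argument for $\breve C^{k+1}$ is verbatim the same with $\tilde\,\leftrightarrow\breve\,$, $\sigma\leftrightarrow\tau$, using $(\tau_k^\perp)^2>0$. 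Running the induction from $k=0$ up to $k=t$ gives invertibility of $\tilde C^{k+1},\breve C^{k+1}$ for all $k\le t$, as claimed.

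The only mild subtlety — and the step I would be most careful about — is making sure the Schur-complement identity is applied legitimately, i.e.\ that at each stage the leading principal block $\tilde C^{k}$ is already known to be invertible; this is precisely why the induction is set up to carry the invertibility of $\tilde C^{k}$ as part of the hypothesis rather than trying to prove invertibility of $\tilde C^{t+1}$ in one shot. A cosmetic alternative, avoiding determinants entirely, is to note that $\tilde C^{k+1}$ is positive semidefinite (as a covariance matrix, after the diagonal rescaling), so it fails to be invertible iff it has a nontrivial kernel; a kernel vector, combined with invertibility of the leading block $\tilde C^k$, would force $(\sigma_k^\perp)^2=0$, contradicting the hypothesis. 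Either route is routine once the covariance interpretation of $\tilde C^t$ and $\breve C^t$ is made explicit, so I do not anticipate a genuine obstacle here — the lemma is essentially a bookkeeping statement that the perpendicular variances are the successive Schur complements.
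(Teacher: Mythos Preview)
Your proposal is correct. The paper itself does not give a self-contained argument here but simply cites \cite[Lemma 4.1]{RushV16}; the inductive Schur-complement computation you outline, identifying $(\sigma_k^\perp)^2$ and $(\tau_k^\perp)^2$ as the successive Schur complements of the nested matrices $\tilde C^{k+1}$ and $\breve C^{k+1}$, is exactly the standard proof of that cited lemma.
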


\begin{proof}
The proof can be found in \cite[Lemma 4.1]{RushV16}. 

%We prove this by induction. Note that  $\tilde{C}^1= \sigma_0^2$ and $\breve{C}^1 = \tau_0^2$ are both strictly positive by assumption and hence invertible.  Assume  that for some $k <t$, $\tilde{C}^k$ and $\breve{C}^k$  are invertible.   The matrix $\tilde{C}^{k+1}$ can be written as
%\ben
%\tilde{C}^{k+1} = \begin{bmatrix}
%    \mathbf{M}_1  &  \textbf{M}_2 \\
%    \textbf{M}_3 &  \textbf{M}_4
%\end{bmatrix},
%\een
%where $\textbf{M}_1 = \tilde{C}^{k} \in \mathbb{R}^{k \times k}$, $\textbf{M}_4 = \tilde{E}_{k,k} = \sigma_{k}^2$, and $\textbf{M}_2 = \textbf{M}^*_3 = \tilde{E}_{k} \in \mathbb{R}^{k \times 1}$ defined in \eqref{eq:Et_def}.  
%
%By the block inversion formula, $\tilde{C}^{k+1}$ is invertible if $\mathbf{M}_1$ and the Schur complement $\mathbf{M}_4 - \mathbf{M}_3\mathbf{M}_1^{-1}\mathbf{M}_2$ are both invertible. By the induction hypothesis, $\mathbf{M}_1 = \tilde{C}^{k}$ is invertible and 
%\be \mathbf{M}_4 - \mathbf{M}_3\mathbf{M}_1^{-1}\mathbf{M}_2 =   \tilde{E}_{k,k} -   \tilde{E}^*_{k} (\tilde{C}^{k})^{-1}  \tilde{E}_{k} = (\sigma_{k}^{\perp})^2  \geq \tilde{c} > 0,
%\label{eq:sigt_perp_pos}
%\ee
%Hence  $\tilde{C}^{t+1}$ is invertible.  Showing that $\breve{C}^{t+1}$ is invertible is very similar.
\end{proof}
%%%%%%%%%%%%%%%%%%%
%%%%%%%%%%%%%%%%%%%

\subsection{Conditional Distribution Lemma} \label{sec:cond_dist_lemma}

As mentioned, the proof of Theorem \ref{thm:main_amp_perf} relies on two technical lemmas.  The first lemma, presented in this section, provides the conditional distribution of the vectors $h^{t+1}$ and $b^t$ given the matrices in \eqref{eq:XYMQt} as well as $\beta_0, w$.  Lemma \ref{lem:hb_cond} shows that these conditional distributions can be represented as the sum of a standard  Gaussian vector an a deviation term.  Then the second technical lemma, Lemma \ref{lem:main_lem}, shows that the deviation terms are small, meaning that their standardized norms concentrate on zero, and also provides concentration results for various inner products involving the other terms in recursion \eqref{eq:hqbm_def}, namely $\{ h^{t+1}, q^t, b ^t, m^t \}$.

The following notation is used for the concentration lemmas. Considering two random vectors $X, Y$ and a sigma-algebra $\mscrs$, we denote the fact that that conditional distribution of $X$ given $\mscrs$ equals the distribution of $Y$ as $X |_\mscrs \stackrel{d}{=} Y$.   We represent a $t \times t$ identity matrix as $\mathsf{I}_{t}$, dropping the $t$ subscript when it's obvious. For a matrix $A$ with full column rank, $\mathsf{P}^{\parallel}_{A} := A(A^*A)^{-1}A^*$ is the orthogonal projection matrix onto the column space of $A$, and $\mathsf{P}^\perp_{A}:=\mathsf{I}- \mathsf{P}^{\parallel}_{A}$.

Define $\mathscr{S}_{t_1, t_2}$ to be the sigma-algebra generated by the terms
\[ b^0, ..., b^{t_1 -1}, m^0, ..., m^{t_1 - 1}, h^1, ..., h^{t_2}, q^0, ..., q^{t_2},\text{ and }  \beta_0, w. \]

\begin{lem}\cite[Lemma 4]{RushV16}
For vectors $h^{t+1}$ and $b^t$ defined in \eqref{eq:hqbm_def}, the following conditional distributions hold for $t \geq 1$:
\begin{align}
h^{1} \lvert_{\mscrs_{1, 0}} \stackrel{d}{=} \tau_0 Z_0 + \Delta_{1,0}, \quad &\text{ and } \quad h^{t+1} \lvert_{\mscrs_{t+1, t}} \stackrel{d}{=} \sum_{r=0}^{t-1} \hat{\alpha}^t_r h^{r+1} + \tau^{\perp}_t Z_t + \Delta_{t+1,t}, \label{eq:Ha_dist} \\ 
b^{0} \lvert_{\mscrs_{0, 0}} \stackrel{d}{=} \sigma_0 Z'_0 + \Delta_{0,0}, \quad &\text{ and } \quad b^{t} \lvert_{\mscrs_{t, t}}\stackrel{d}{=} \sum_{r=0}^{t-1} \hat{\gamma}^{t}_r b^r + \sigma^{\perp}_t Z'_t + \Delta_{t,t}. \label{eq:Ba_dist}
\end{align}
where $Z_0, Z_t \in \mathbb{R}^N$ and $Z'_0, Z'_t \in \mathbb{R}^n$ are i.i.d.\ standard Gaussian random vectors that are independent of the corresponding conditioning sigma algebras. The terms $\hat{\gamma}^{t}_i$ and $\hat{\alpha}^t_{i}$ for $i = 0,...,t-1$ are defined in \eqref{eq:hatalph_hatgam_def} and the terms $(\tau_{t}^{\perp})^2$ and $(\sigma_{t}^{\perp})^2$ in \eqref{eq:sigperp_defs}.  The deviation terms are 
\begin{align}
\Delta_{0,0} &= \left(\frac{\norm{q^0}}{\sqrt{n}} - \sigma_0\right)Z'_0, \label{eq:D00} \\
\Delta_{1,0} &= \left[ \left(\frac{\norm{m^0}}{\sqrt{n}}  - \tau_0\right)\mathsf{I}_N -\frac{\norm{m^0}}{\sqrt{n}} \mathsf{P}^{\parallel}_{q^0}\right] Z_0 + q^0 \left(\frac{\norm{q^0}^2}{n}\right)^{-1} \left(\frac{(b^0)^*m_0}{n} - \xi_0 \frac{\norm{q^0}^2}{n}\right), \label{eq:D10}
\end{align}
%where $\mathsf{I}$ is the identity matrix and for any matrix $A$, $\mathsf{P}^{\parallel}_A$ is the orthogonal projection matrix onto the column space of $A$.  
and for $t >0$,
\begin{align}
\Delta_{t,t} = & \, \sum_{r=0}^{t-1} (\gamma^t_r - \hat{\gamma}^{t}_r) b^r + \left[  \left(\frac{\norm{q^t_{\perp}}}{\sqrt{n}} - \sigma_{t}^{\perp}\right) \mathsf{I}_n  - \frac{\norm{q^t_{\perp}} }{\sqrt{n}} \mathsf{P}^{\parallel}_{M_t}\right]Z'_t  \nonumber \\
& \quad + M_t\left(\frac{M_{t}^* M_{t}}{n}\right)^{-1} \left(\frac{H_t^* q^t_{\perp}}{n} - \frac{M_t}{n}^*\left[\lambda_t m^{t-1} - \sum_{r=1}^{t-1} \lambda_{r} \gamma^t_{r} m^{r-1}\right]\right),\label{eq:Dtt} 
\end{align}
\begin{align}
\Delta_{t+1,t} =& \,  \sum_{r=0}^{t-1} (\alpha^t_r - \hat{\alpha}^t_r) h^{r+1} + \left[\left(\frac{\norm{m^t_{\perp}}}{\sqrt{n}} - \tau_{t}^{\perp}\right)  \mathsf{I}_N  -\frac{\norm{m^t_{\perp}}}{\sqrt{n}} \mathsf{P}^{\parallel}_{Q_{t+1}}\right]Z_t \nonumber \\
&\quad + Q_{t+1} \left(\frac{Q_{t+1}^* Q_{t+1}}{n}\right)^{-1} \left(\frac{B^*_{t+1} m^t_{\perp}}{n} - \frac{Q_{t+1}^*}{n}\left[\xi_t q^t - \sum_{i=0}^{t-1} \xi_i \alpha^t_i q^i\right]\right).\label{eq:Dt1t}  
\end{align} 
\label{lem:hb_cond}
\end{lem}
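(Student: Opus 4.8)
The plan is to establish the conditional distributions in \eqref{eq:Ha_dist} and \eqref{eq:Ba_dist} by induction on $t$, following exactly the argument of \cite[Lemma 4]{RushV16} but carrying the Markov-chain signal $\beta_0$ along as part of the conditioning data without ever invoking its distributional structure. The crucial observation, just as in the i.i.d.\ case, is that the only source of randomness over which we condition is the Gaussian matrix $A$; the vectors $\beta_0$ and $w$ are fixed by $\mscrs_{t_1,t_2}$, so all the functions $f_t$, $g_t$ evaluated during the recursion become deterministic functions of $A$ (through $h$'s and $b$'s) once the conditioning is in place. Thus the linear-algebraic heart of the argument — computing the conditional law of $Aq^t$ and $A^*m^t$ given the past — is insensitive to how $\beta_0$ was generated.

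First I would set up the induction. The base cases $h^1|_{\mscrs_{1,0}}$ and $b^0|_{\mscrs_{0,0}}$ follow from the observation that, conditioned on $q^0$ (a function of $\beta_0$ alone via \eqref{eq:q0def}) and on $\beta_0,w$, the vector $b^0 = Aq^0$ is Gaussian with the stated mean and covariance; writing $Aq^0 = \frac{q^0}{\|q^0\|^2}\cdot\big((q^0)^*A^*\big)\cdot\text{(stuff)} + \mathsf{P}^\perp\text{-part}$ and matching to $\sigma_0 Z_0'$ produces $\Delta_{0,0}$ in \eqref{eq:D00}. For the inductive step, I would condition on $\mscrs_{t+1,t}$ (resp.\ $\mscrs_{t,t}$) and use the standard fact that the conditional distribution of a Gaussian matrix $A$ given linear observations $AQ_{t+1} = Y_{t+1}$ and $A^*M_t = X_t$ is itself Gaussian: it equals its conditional mean (a deterministic matrix built from $Y_{t+1},X_t,Q_{t+1},M_t$) plus $\mathsf{P}^\perp_{Q_{t+1}}\tilde A\,\mathsf{P}^\perp_{M_t}$ for an independent fresh copy $\tilde A$. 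Plugging this representation into $h^{t+1} = A^*m^t - \xi_t q^t$ (resp.\ $b^t = Aq^t - \lambda_t m^{t-1}$), decomposing $m^t = m^t_\| + m^t_\perp$, and collecting the conditional-mean terms gives the $\sum_r \hat\alpha^t_r h^{r+1}$ piece plus $\tau^\perp_t Z_t$ plus the deviation $\Delta_{t+1,t}$; the explicit form of \eqref{eq:Dt1t} comes from bookkeeping which terms are "predictable" (absorbed into the sum and the projection) versus which are genuine deviations. Here I would use the matrix identities $X_t = H_t + \Xi_t Q_t$, $Y_t = B_t + \Lambda_t[0\mid M_{t-1}]$ recorded after \eqref{eq:XYMQt} to rewrite the conditional mean in terms of the $h$'s, $b$'s, $m$'s, and $q$'s rather than the $X,Y$ matrices, and the definitions \eqref{eq:hatalph_hatgam_def} of $\hat\alpha^t,\hat\gamma^t$ as the "ideal" projection coefficients to split off the $\sum_r(\alpha^t_r - \hat\alpha^t_r)h^{r+1}$ correction.

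The main obstacle — though it is more bookkeeping than conceptual — is verifying that the column-rank and invertibility hypotheses needed for the orthogonal-projection formulas actually hold along the induction, i.e.\ that $Q_{t+1}$ and $M_t$ have full column rank and that $\tilde C^t,\breve C^t$ are invertible so that $\hat\alpha^t,\hat\gamma^t,(\tau^\perp_t)^2,(\sigma^\perp_t)^2$ are well-defined; this is handled by Lemma \ref{lem:Ct_invert} together with the inductive hypothesis that $(\sigma^\perp_r)^2,(\tau^\perp_r)^2$ stay bounded below, which is itself part of the conclusion of Lemma \ref{lem:main_lem} at earlier iterations. A secondary subtlety is making sure the conditioning sigma-algebra $\mscrs_{t+1,t}$ is rich enough: it must contain $\beta_0$ and $w$ so that $q^t$ and the scalars $\xi_t,\lambda_t$ are measurable with respect to it, which is why $\beta_0,w$ were included in the definition of $\mscrs_{t_1,t_2}$. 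Since none of these steps uses anything about $\beta_0$ beyond measurability, the proof is essentially identical to \cite[Lemma 4]{RushV16}, and I would simply point to that reference after indicating the (cosmetic) changes in the definitions of $f_t$, $\lambda_t$, and the window-dependent constants.
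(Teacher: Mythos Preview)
Your proposal is correct and follows the same approach as the paper, which simply cites \cite{RushV16} for the proof; your sketch accurately reconstructs the Gaussian-conditioning argument from that reference and correctly identifies that nothing about the Markov structure of $\beta_0$ enters, since $\beta_0$ and $w$ are held fixed in $\mscrs_{t_1,t_2}$.
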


\begin{proof}
The proof can be found in \cite{RushV16}.
\end{proof}

Lemma \ref{lem:hb_cond} holds only when $M^*_t M_t$ and $Q^*_{t_1} Q_{t_1}$  are invertible.

%%%%%%%%%%%%%%%%%%%%%%%%%%
%%%%%%%%%%%%%%%%%%%%%%%%%%

%The conditional distribution representation in Lemma \ref{lem:hb_cond} implies that for each $t \geq 0$,  $h^{t+1}$ is the sum of an i.i.d.\ $\mc{N}(0, \tau_t^2)$ random vector plus a deviation term. This is straightforward to verify for the  special case of the Bayes-optimal AMP recursion \eqref{eq:hqbm_def_AMP} with the de-noising function $\eta_t(\cdot)$ chosen as the conditional expectation of $\beta$ given the noisy observation $\beta+ \tau_t Z$, as in \eqref{eq:opt_cond_expec}.  Using \eqref{eq:simpl_exps} in Lemma \ref{lem:hb_cond}, we obtain
%\be
%h^{t+1} \lvert_{\mscrs_{t+1, t}} \stackrel{d}{=} (\tau_t^2/\tau_{t-1}^2) h^{t} + \tau^{\perp}_t Z_t + \Delta_{t+1,t}.
%\label{eq:bt_simp}
%\ee
%Assuming $h^{t}$ has representation  $\tau_{t-1} \tilde{Z}_{t-1} +  \Delta_{t}$, substituting in \eqref{eq:bt_simp} gives
%\ben
%\begin{split} 
%h^{t+1} & \stackrel{d}{=}   (\tau_t^2/\tau_{t-1}) \tilde{Z}_{t-1}   +  \tau^{\perp}_t Z_t  + \Delta_{t}  + \Delta_{t+1,t}  \stackrel{d}{=}   \tau_t \tilde{Z}_t + \Delta_{t}  + \Delta_{t+1,t} .  \end{split} \een
%To obtain the last equality above, we combine independent Gaussians $\tilde{Z}_{t-1}$ and $Z_t$ using the expression for $\tau_t^{\perp}$ in \eqref{eq:simpl_exps}.   Similarly $b^t$ is the sum of an i.i.d.\ $\mc{N}(0, \sigma_t^2)$ random vector and a deviation term.  The next lemma  shows that these deviation terms are small with high probability. 

% !TEX root = main.tex

\subsection{Main Concentration Lemma}

We use the shorthand $X_n \doteq c$ to denote the concentration inequality $P(\abs{X_n-c} \geq \epsilon) \leq K_{k,t} e^{-\kappa_{k,t} n \e^2}$.
As specified in the theorem statement, the lemma holds for all $\e\in(0,1)$, with $K_{k,t}, \kappa_{k,t}$ denoting generic constants depending on half window-size $k$ and iteration index $t$, but not on $n$ or $\e$.
%\be
%K_t = what?, \quad \text{ and } \quad \kappa_t = what? %K_t = K_1(K_2)^t(t!)^{10}, \quad \text{ and } \quad \kappa_t = \kappa_1[\kappa_2^t (t!)^{18}]^{-1},
%\label{eq:Kkappa_def}
%\ee 
%where $K_1, K_2, \kappa_1, \kappa_2 > 0$ are universal constants (not depending on $t$, $n$, or $\e$).  As specified in the theorem statement, the lemma holds for all $\e \in (0,1)$.  

\begin{lem}
With the $\doteq$ notation defined above, the following statements hold for $t\geq 0$.
%$0 \leq t < T^*$.

\begin{enumerate}[(a)]

\item %For all $0 \leq r \leq t$,
\begin{align}
P\left(\frac{1}{N}\norm{\Delta_{{t+1,t}}}^2 \geq \epsilon \right) \leq K_{k,t} e^{-\kappa_{k,t} n \e}, \label{eq:Ha} \\
P \left(\frac{1}{n}\norm{\Delta_{{t,t}}}^2 \geq \epsilon \right) \leq K_{k,t} e^{-\kappa_{k,t} n \e}. \label{eq:Ba} 
\end{align}
%\begin{align}
%P\left(\frac{1}{N}\norm{\Delta_{{t+1,t}}}^2 \geq \epsilon \right) \leq t^3 K_{t-1} \exp\left\{-\frac{\kappa_{t-1} n \epsilon}{t^5}\right\}, \label{eq:Ha} \\
%P \left(\frac{1}{n}\norm{\Delta_{{t,t}}}^2 \geq \epsilon \right) \leq t^3 K_{t-1} \exp\left\{-\frac{\kappa_{t-1} n \epsilon}{t^5}\right\}. \label{eq:Ba} 
%\end{align}

\item %\emph{i)}  
For pseudo-Lipschitz functions $\phi_h: \mathbb{R}^{(t+2)(2k+1)} \rightarrow \mathbb{R}$
\be
\frac{1}{N-2k}\sum_{i=k+1}^{N-k} \phi_h\left([h^1]_{i-k}^{i+k}, \ldots, [h^{t+1}]_{i-k}^{i+k}, [\beta_{0}]_{i-k}^{i+k}\right)  \doteq 
 \expec\bigg[\phi_h\left(\tau_0 \underline{\tilde{Z}}_0, \ldots, \tau_t \underline{\tilde{Z}}_t, \underline{\beta} \right)\bigg].  \label{eq:Hb1}
\ee
The random vectors $\underline{\tilde{Z}}_{0}, \ldots, \underline{\tilde{Z}}_t \in \mathbb{R}^{2k+1}$ are jointly Gaussian with zero mean entries which are independent of the other entries in the same vector with covariance across iterations given by \eqref{eq:tildeZcov}, and are independent of $\underline{\beta} \sim \pi$.

%\emph{ii)} Let $\psi_h: \mathbb{R}^{2(2k+1)} \rightarrow \mathbb{R}$ be a bounded function that is differentiable in the first argument with bounded derivative, meaning that the first $2k+1$ partial derivatives exist and are bounded.  Then,
%\be
%\frac{1}{N-2k} \sum_{i=k+1}^{N-k} \psi_h([h^{t+1}]_{i-k}^{i+k}, [\beta_{0}]_{i-k}^{i+k}) \doteq \mathbb{E}\left[ \psi_h( \tau_{t} \underline{\tilde{Z}}_{t}, \underline{\beta})\right].   \label{eq:Hb2}
% \ee
% As above, $\underline{\tilde{Z}}_t \in \mathbb{R}^{2k+1}$ has entries that are i.i.d.\ $\sim \mc{N}(0,1)$ and $\underline{\beta} \in \mathbb{R}^{2k+1} \sim \pi$ are independent.  
 %The case where the derivative of $\psi_h$ has a finite number of discontinuities is handled in the Supplementary Material.

%\emph{ii)} 
For pseudo-Lipschitz functions $\phi_b: \mathbb{R}^{t+2} \rightarrow \mathbb{R}$
\be
\frac{1}{n}\sum_{i=1}^n \phi_b\left(b^0_i, \ldots, b^{t}_i, w_{i}\right)  \doteq 
 \expec\bigg[\phi_b\left(\sigma_0 \breve{Z}_0, \ldots, \sigma_t \breve{Z}_t, W \right)\bigg].  \label{eq:Bb1}
\ee
The random variables $\breve{Z}_{0}, \ldots, \breve{Z}_t$ are jointly Gaussian with zero mean and covariance given by \eqref{eq:tildeZcov}, and are independent of $W \sim p_{w}$.

%\emph{iv)} Let $\psi_b: \mathbb{R} \rightarrow \mathbb{R}$ be a bounded function that is differentiable in the first argument with bounded derivative.  Then,
%%except possibly at a finite number of points, with bounded derivative where it exists.  Then,
%\be
%\frac{1}{n} \sum_{i=1}^n \psi_b(b^{t}_i, w_{i}) \doteq  \mathbb{E}\left[ \psi_b(\sigma_{t} \breve{Z}_{t}, W)\right].   \label{eq:Bb2}
% \ee
%  As above, $\breve{Z}_t \sim \mc{N}(0,1)$ and $W \sim p_w$ are independent.  
  %The case where the derivative of $\psi_b$ has a finite number of discontinuities is handled in the Supplementary Material.

%%%%

\item
\begin{align}
\frac{(h^{t+1})^* q^0}{n} &\doteq 0, \quad \frac{(h^{t+1})^* \beta_0}{n} \doteq 0, \label{eq:Hc} \\
\frac{(b^t)^* w}{n} &\doteq 0.  \label{eq:Bc}
\end{align}

%%%%%%%%

\item For all $0 \leq r \leq t$, 
\begin{align}
\frac{(h^{r+1})^* h^{t+1}}{N} &\doteq \breve{E}_{r,t}, \label{eq:Hd} \\
\frac{(b^r)^*b^t}{n} &\doteq \tilde{E}_{r,t}. \label{eq:Bd}
\end{align}

%%%%%

\item For all $0 \leq r \leq t$,
\begin{align}
\frac{(q^{0})^* q^{t+1}}{n} &\doteq \tilde{E}_{0,t+1}, \quad \frac{(q^{r+1})^*q^{t+1}}{n} \doteq \tilde{E}_{r+1,t+1},  \label{eq:He}  \\
\frac{(m^r)^* m^t}{n} &\doteq \breve{E}_{r,t} \label{eq:Be}
\end{align}

%%%%%%%
\item For all $0 \leq r \leq t$,  
\begin{align}
\lambda_t \doteq \hat{\lambda}_{t},  \quad  \frac{(h^{t+1})^*q^{r+1}}{n} &\doteq  \hat{\lambda}_{r+1} \breve{E}_{r,t},  \quad  \frac{(h^{r+1})^*q^{t+1}}{n} \doteq  \hat{\lambda}_{t+1} \breve{E}_{r,t}, \label{eq:Hf} \\
\xi_t \doteq \hat{\xi}_{t},  \quad \frac{(b^r)^*m^t}{n} &\doteq \hat{\xi}_{t}\tilde{E}_{r,t} ,  \quad \frac{(b^t)^*m^r}{n} \doteq \hat{\xi}_{r} \tilde{E}_{r,t}.\label{eq:Bf}
\end{align}

%%%%%%%%
\item Let $\textbf{Q}_{t+1} = \frac{1}{n} Q_{t+1}^* Q_{t+1}$ and $\textbf{M}_{t} = \frac{1}{n} M_{t}^* M_{t}$.  Then,
\begin{align}
P\left(\textbf{Q}_{t+1} \text{ is singular}\right) \leq K_{k,t} e^{-\kappa_{k,t} n}, \label{eq:Qsing} \\
P\left(\textbf{M}_{t} \text{ is singular}\right) \leq K_{k,t} e^{-\kappa_{k,t} n}.  \label{eq:Msing} 
\end{align}
When the inverses of $\textbf{Q}_{t+1}$ and $\textbf{M}_{t} $ exist, for all $0 \leq  i,j  \leq t$ and $0 \leq  i',j' \leq t-1$:
% at the following rates:
\begin{align}
\left[\mathbf{Q}_{t+1}^{-1} \right]_{i+1,j+1} &\doteq  [(\tilde{C}^{t+1})^{-1}]_{i+1,j+1}, \quad \gamma^{t+1}_{i} \doteq \hat{\gamma}^{t+1}_{i}, \label{eq:Hg}\\ %\quad  \  0 \leq k \leq t,  
\left[\mathbf{M}_t^{-1} \right]_{i'+1,j'+1} &\doteq  [(\breve{C}^t)^{-1}]_{i'+1,j'+1}, \quad \alpha^{t}_{k'} \doteq \hat{\alpha}^{t}_{i'}, \quad   t \geq 1,\label{eq:Bg}  % 0 \leq k' \leq t-1,  
\end{align}
where $\hat{\gamma}^{t+1}_{k}$ and $\hat{\alpha}^{t}_{k'}$ are defined in \eqref{eq:hatalph_hatgam_def},

%%%%
\item With $\sigma_{t+1}^{\perp}, \tau_{t}^{\perp}$ defined in \eqref{eq:sigperp_defs},
\begin{align} 
\frac{1}{n}\norm{q^{t+1}_{\perp}}^2 &\doteq (\sigma_{t+1}^{\perp})^2, \label{eq:Hh} \\
\frac{1}{n}\norm{m^t_{\perp}}^2 &\doteq (\tau_{t}^{\perp})^2. \label{eq:Bh}
\end{align}

\end{enumerate}
\label{lem:main_lem}
\end{lem}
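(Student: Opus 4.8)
The plan is to prove all the statements (a)--(h) simultaneously by induction on the iteration index $t$, following the inductive architecture of \cite[Lemma 5]{RushV16} but adapting every step that touches the signal $\beta_0$ to the Markov-chain / sliding-window setting. Write $\mathcal{H}_{t}$ for the conjunction of the statements indexed by $h$ (namely \eqref{eq:Ha}, \eqref{eq:Hb1}, \eqref{eq:Hc}, \eqref{eq:Hd}, \eqref{eq:He}, \eqref{eq:Hf}, \eqref{eq:Hg}, \eqref{eq:Hh}) and $\mathcal{B}_{t}$ for the conjunction of the $b$-indexed statements. I would prove the chain of implications $\mathcal{B}_{0} \Rightarrow \mathcal{H}_{1} \Rightarrow \mathcal{B}_{1} \Rightarrow \cdots \Rightarrow \mathcal{B}_{t} \Rightarrow \mathcal{H}_{t+1} \Rightarrow \cdots$. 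The base case $\mathcal{B}_{0}$ uses the conditional distribution $b^{0}|_{\mscrs_{0,0}} \stackrel{d}{=} \sigma_{0}Z_{0}' + \Delta_{0,0}$ from Lemma \ref{lem:hb_cond}, together with assumption \eqref{eq:qassumption} on $\tfrac1n\|q^{0}\|^{2}$ (which for AMP follows from \eqref{eq:beta0_assumption}), and concentration of Lipschitz / PL functions of i.i.d.\ Gaussians (a standard Gaussian-concentration fact, as in \cite{RushV16}); the noise concentration $P(|\tfrac1n\|w\|^{2}-\sigma^{2}|\geq\e)\leq Ke^{-\kappa n\e^{2}}$ from the Assumptions handles the $w$-dependence in \eqref{eq:Bb1} and \eqref{eq:Bc}.

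For the inductive step $\mathcal{B}_{t}\Rightarrow\mathcal{H}_{t+1}$ (the step $\mathcal{H}_{t+1}\Rightarrow\mathcal{B}_{t+1}$ is entirely analogous, and in fact simpler because the $b$-side has i.i.d.\ noise rather than Markov input), I would substitute the conditional representation $h^{t+1}|_{\mscrs_{t+1,t}} \stackrel{d}{=} \sum_{r=0}^{t-1}\hat\alpha^{t}_{r}h^{r+1} + \tau^{\perp}_{t}Z_{t} + \Delta_{t+1,t}$ into each target quantity. Part (a), the bound \eqref{eq:Ha} on $\tfrac1N\|\Delta_{t+1,t}\|^{2}$, is proved by expanding \eqref{eq:Dt1t} term-by-term: each summand is a product of factors that concentrate by the inductive hypotheses $\mathcal{H}_{t},\mathcal{B}_{t}$ — e.g. $\alpha^{t}_{r}-\hat\alpha^{t}_{r}\doteq0$ from \eqref{eq:Bg}, $\tfrac1n\|m^{t}_{\perp}\|^{2}\doteq(\tau^{\perp}_{t})^{2}$ from \eqref{eq:Bh}, nonsingularity of $\mathbf{Q}_{t+1}$ from \eqref{eq:Qsing} — so one invokes the elementary fact that sums and products of quantities concentrating at their limits again concentrate (with the usual loss in the constants and the passage from $\e^{2}$ to $\e$ in the exponent, which accounts for the $e^{-\kappa n\e}$ rather than $e^{-\kappa n\e^{2}}$ in \eqref{eq:Ha}--\eqref{eq:Ba}). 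Then part (b): conditionally on $\mscrs_{t+1,t}$ the vector $[h^{1}]_{i-k}^{i+k},\dots,[h^{t+1}]_{i-k}^{i+k}$ is an affine-Gaussian-plus-$\Delta$ object; I would first replace $h^{t+1}$ by $\sum_{r}\hat\alpha^{t}_{r}h^{r+1}+\tau^{\perp}_{t}Z_{t}$ (absorbing the $\Delta_{t+1,t}$ error using part (a) and the pseudo-Lipschitz property of $\phi_{h}$, exactly as in \cite{RushV16}), reducing \eqref{eq:Hb1} to a statement about a PL function applied to overlapping $(2k+1)$-blocks of a Gaussian sequence and of $\beta_{0}$; that statement is precisely what Lemma \ref{lem:PL_overlap_gauss_conc} (for the Gaussian blocks) and Lemma \ref{lem:PLMCconc_new} (for the Markov-chain blocks $[\beta_{0}]_{i-k}^{i+k}$) are designed to deliver, and the limiting expectation matches because the covariance recursion \eqref{eq:tildeZcov}--\eqref{eq:Edef} was defined to make it so. Parts (c)--(h) are then obtained as corollaries of (b) by choosing appropriate (pseudo-)Lipschitz test functions and using the definitions \eqref{eq:hatlambda_hatxi}, \eqref{eq:hatalph_hatgam_def}, \eqref{eq:sigperp_defs}: e.g. \eqref{eq:Hd} is \eqref{eq:Hb1} with $\phi_{h}$ a product of center coordinates, \eqref{eq:Hf} uses the differentiated denoiser and Lipschitzness of $f'_{t}$ (whose differentiability is exactly the extra hypothesis imposed in the Assumptions), the nonsingularity statement \eqref{eq:Qsing} and the matrix-inverse concentration \eqref{eq:Hg} follow from \eqref{eq:He} plus the fact that $(\sigma^{\perp}_{t})^{2}>0$ is bounded below (via Lemma \ref{lem:Ct_invert}) and continuity of matrix inversion on the set of matrices with smallest eigenvalue bounded away from zero, and \eqref{eq:Hh} follows by writing $\|q^{t+1}_{\perp}\|^{2} = \|q^{t+1}\|^{2} - (\gamma^{t+1})^{*}(\text{Gram})\gamma^{t+1}$ and plugging in \eqref{eq:He} and \eqref{eq:Hg}.

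The main obstacle is part (b), and specifically the Markov-chain concentration needed inside it: the sliding-window structure forces the test function $\phi_{h}$ to act on \emph{overlapping} blocks $[\beta_{0}]_{i-k}^{i+k}$, so the summands in \eqref{eq:Hb1} are not a function of an independent or even a Markov sequence in $i$ — consecutive summands share $2k$ coordinates. Controlling $\tfrac{1}{N-2k}\sum_{i}\phi_{h}(\cdots)$ around its mean therefore cannot be done by a naive independent-blocks argument; it requires the spectral-gap hypothesis on the $\beta_{0}$-chain (Definition \ref{def:geom_ergo}), exploited in Lemma \ref{lem:PLMCconc_new}, to get an exponential concentration rate for PL functions of the induced ``block chain.'' A secondary subtlety, also inherited from the dependence structure, is bookkeeping the weighting $w_{k}=2k/N$ and the split between the estimated middle $N-2k$ coordinates and the boundary coordinates throughout every inner product and every state-evolution constant; this is routine but must be done carefully so that the claimed limits (which all carry the $(1-w_{k})(\cdot)+w_{k}\sigma_{\beta}^{2}$ form from \eqref{eq:sigmatdef} and \eqref{eq:Edef}) come out exactly right. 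Everything else — the Gaussian-conditioning algebra, the term-by-term error bounds, the passage from (b) to (c)--(h) — is a faithful transcription of the separable-case argument in \cite{RushV16} with $\mathbb{R}$ replaced by $\mathbb{R}^{2k+1}$ in the appropriate places.
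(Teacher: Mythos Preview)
Your proposal is correct and follows essentially the same approach as the paper: the same four-step induction ($\mathcal{B}_0$, $\mathcal{H}_1$, $\mathcal{B}_t$, $\mathcal{H}_{t+1}$), the same reduction of part (b) via Lemma \ref{lem:hb_cond} to a $\Delta$-error term plus an overlapping-block concentration handled by Lemmas \ref{lem:PL_overlap_gauss_conc} and \ref{lem:PLMCconc_new}, and the same derivation of (c)--(h) as corollaries of (b) with the middle/boundary splitting. One small point you gloss over: in part (f) the paper identifies the limiting constant $\hat{\lambda}_{r+1}\breve{E}_{r,t}$ by an application of Stein's lemma (Fact \ref{fact:stein}) to $\mathbb{E}[\tau_t\underline{\tilde{Z}}_{t,k+1}f_{r+1}(\tau_r\underline{\tilde{Z}}_r,\underline{\beta})]$, not merely by Lipschitzness of $f_t'$; and in part (b) the Gaussian and Markov concentrations are applied \emph{sequentially} (first integrate out the fresh $Z_t$ conditionally on $h^1,\dots,h^t,\beta_0$ via Lemma \ref{lem:PL_overlap_gauss_conc}, then invoke the induction hypothesis $\mathcal{H}_t(b)$, with the distributional identity $\sum_r\hat\alpha^t_r\tau_r\underline{\tilde{Z}}_r+\tau_t^\perp\underline{Z}\stackrel{d}{=}\tau_t\underline{\tilde{Z}}_t$ verified via \eqref{eq:sigperp_defs} and \eqref{eq:hatalph_hatgam_def}) rather than simultaneously---but this is exactly the structure your sketch would unfold into.
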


%\textbf{Remarks}:

\subsection{Proof of Theorem \ref{thm:main_amp_perf}}

\begin{proof}
Applying Part (b)(i) of Lemma \ref{lem:main_lem} to a pseudo-Lipschitz (PL) function $\phi_h: \mathbb{R}^{2(2k +1)} \to \mathbb{R}$,
\ben
P\left(\left\lvert \frac{1}{N-2k} \sum_{i=k+1}^{N-k} \phi_h([h^{t+1}]_{i-k}^{i+k}, [\beta_{0}]_{i-k}^{i+k}) - \mathbb{E}\left[\phi_h(\tau_t \underline{Z}, \underline{\beta})\right] \right \lvert \geq \e\right) \leq K_{k,t} e^{-\kappa_{k,t} n \e^2} 
\een
where the random vectors $\underline{\beta} \in S^{2k+1} \sim \pi$ and $\underline{Z} \in \mathbb{R}^{2k+1} $, whose entries are i.i.d.\ standard normal random variables, are independent. Now for $i=k+1,...,N-k$ let
\be
 \phi_h([h^{t+1}]_{i-k}^{i+k}, [\beta_{0}]_{i-k}^{i+k}) := \phi(\eta_{t}([\beta_{0} - h^{t+1}]_{i-k}^{i+k}), \beta_{0_i}),
 \label{eq:phih_map}
\ee
where $\phi: \mathbb{R}^{2} \to \mathbb{R}$ is the PL function in the statement of the theorem. The function  $\phi_h([h^{t+1}]_{i-k}^{i+k}, [\beta_{0}]_{i-k}^{i+k})$ in \eqref{eq:phih_map} is PL since $\phi$ is PL and $\eta_t$ is Lipschitz. We therefore obtain
\be
\begin{split}
P\left( \left\lvert \frac{1}{N-2k} \sum_{i=k+1}^{N-k} \phi(\eta_{t}([\beta_{0} - h^{t+1}]_{i-k}^{i+k}), \beta_{0_i})  - \mathbb{E}\left[ \phi(\eta_{t}(\underline{\beta} - \tau_t \underline{Z}), \underline{\beta}_{k+1})\right] \right \lvert \geq \e \right) \leq K_{k,t} e^{-\kappa_{k,t} n \e^2}. \nonumber
\end{split}
\ee
The proof is completed by noting from \eqref{eq:amp2} and \eqref{eq:hqbm_def_AMP} that 
$\beta^{t+1}_i = \eta_{t}([A^* z^t + \beta^t]_{i-k}^{i+k}) = \eta_{t}([\beta_0 - h^{t+1}]_{i-k}^{i+k})$.
\end{proof}

% !TEX root = main.tex

\section{Proof of Lemma \ref{lem:main_lem}} \label{sec:main_lem_proof}

%We first list some results that will be used in the proof.

\subsection{Mathematical Preliminaries} \label{subsec:mathpre}
%Some of the results written below can be found in \cite[Section III.G]{BayMont11}, but we summarize them here for completeness.

\begin{fact}\cite[Fact 1]{RushV16}
 Let  $u \in \mathbb{R}^N$ and $v \in \mathbb{R}^n$ be deterministic vectors, and let $\tilde{A} \in \mathbb{R}^{n \times N}$ be a matrix with independent  $\mc{N}(0, 1/n)$ entries. Then:
 
(a)
\begin{equation*}
\tilde{A} u \overset{d}{=} \frac{\norm{u}}{\sqrt{n}} Z_u  \quad \text{ and }  \quad \tilde{A}^*v \overset{d}{=} \frac{\norm{v}}{\sqrt{n}} Z_v,
%\label{eq:Au_dist}
\end{equation*}
where $Z_u \in \mathbb{R}^n$ and $Z_v \in \mathbb{R}^N$ are i.i.d.\ standard Gaussian random vectors.

%Let $\tilde{A} \in \mathbb{R}^{n \times N}$ be a matrix with independent  $\mc{N}(0, 1/n)$ entries. 
(b)  Let $\mc{W}$  be a $d$-dimensional subspace of $\mathbb{R}^n$ for $d \leq n$. Let $(w_1, ..., w_d)$ be an orthogonal basis of $\mc{W}$ with 
$\norm{w_\ell}^2 = n$ for
$\ell \in [d]$, and let  $\mathsf{P}^{\parallel}_\mc{W}$ denote  the orthogonal projection operator onto $\mc{W}$.  Then for $D = [w_1\mid \ldots \mid w_d]$, we have 
$\mathsf{P}^{\parallel}_{\mc{W}} \tilde{A} u \overset{d}{=}   \frac{\norm{u}}{\sqrt{n}} \mathsf{P}^{\parallel}_{\mc{W}} Z_u\overset{d}{=}  \frac{\norm{u}}{\sqrt{n}} Dx$ where $x \in \mathbb{R}^d$ is a random vector with i.i.d.\ $\mc{N}(0, 1/n)$ entries. 
\label{fact:gauss_p0}
\end{fact}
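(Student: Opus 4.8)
The statement is essentially the rotational invariance of i.i.d.\ Gaussian matrices, so the plan is short: a second-moment computation for part~(a), and an expansion of the projection in the prescribed orthogonal basis for part~(b). Nothing beyond covariance bookkeeping should be required.

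For part~(a), I would first note that $\tilde{A}u$ is a fixed linear image of the Gaussian vector formed by stacking the entries of $\tilde{A}$, hence is jointly Gaussian with mean zero, so it suffices to identify its covariance matrix. Writing $(\tilde{A}u)_i=\sum_{k=1}^{N}\tilde{A}_{ik}u_k$ and using $\expec[\tilde{A}_{ik}\tilde{A}_{jl}]=1/n$ when $i=j,\,k=l$ and $0$ otherwise, one obtains $\expec[(\tilde{A}u)_i(\tilde{A}u)_j]=\norm{u}^2/n$ if $i=j$ and $0$ if $i\neq j$; hence $\tilde{A}u\sim\mc{N}(0,\tfrac{\norm{u}^2}{n}\mathsf{I}_n)$, which is the law of $\tfrac{\norm{u}}{\sqrt{n}}Z_u$. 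The statement for $\tilde{A}^*v$ follows verbatim, since the entries of $\tilde{A}^*$ are also i.i.d.\ $\mc{N}(0,1/n)$.

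For part~(b), the first equality in distribution comes from applying part~(a) and then pushing both sides through the deterministic linear map $\mathsf{P}^{\parallel}_{\mc{W}}$. For the second equality I would expand the projection in the given orthogonal basis: $\mathsf{P}^{\parallel}_{\mc{W}}Z_u=\sum_{\ell=1}^{d}\tfrac{\langle w_\ell,Z_u\rangle}{\norm{w_\ell}^2}\,w_\ell=Dx$, where $x_\ell:=\tfrac{1}{n}\langle w_\ell,Z_u\rangle$. Since $Z_u$ is standard Gaussian and the $w_\ell$ are orthogonal with $\norm{w_\ell}^2=n$, the vector $x$ is jointly Gaussian with $\expec[x_\ell x_m]=\tfrac{1}{n^2}\langle w_\ell,w_m\rangle$, equal to $1/n$ for $\ell=m$ and $0$ otherwise, so the $x_\ell$ are i.i.d.\ $\mc{N}(0,1/n)$.

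I do not expect a real obstacle here: the content is exactly the covariance computation of part~(a) plus the elementary observation that the coordinates of a white Gaussian vector in an orthogonal basis are independent. The only thing that needs care is normalization --- in part~(b) one genuinely needs the $w_\ell$ to be mutually orthogonal (not merely spanning) for the $x_\ell$ to be independent, and scaled so that $\norm{w_\ell}^2=n$ for the coordinate variances to come out as $1/n$. The point of isolating this fact is its downstream role: it lets one replace $Aq^t$ and $A^*m^t$ in the recursion~\eqref{eq:hqbm_def} by explicit Gaussian vectors, conditionally on the appropriate sigma-algebra, which is precisely the mechanism behind the conditional-distribution statement in Lemma~\ref{lem:hb_cond}.
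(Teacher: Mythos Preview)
Your proof is correct and is the standard covariance/rotational-invariance argument. The paper does not actually prove this fact; it simply cites it from \cite{RushV16}, so there is no in-paper proof to compare against.
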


\begin{fact}[Stein's lemma]
For zero-mean jointly Gaussian random variables $Z_1, Z_2$, and any function $f:\mathbb{R} \to \mathbb{R}$ for which $\expec[Z_1 f(Z_2)]$ and $\expec[f'(Z_2)]$  both exist, we have $\expec[Z_1 f(Z_2)] = \expec[Z_1Z_2] \expec[f'(Z_2)]$.
\label{fact:stein}
\end{fact}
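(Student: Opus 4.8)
The plan is to reduce the bivariate identity to the standard one-dimensional Gaussian integration-by-parts formula. First I would dispose of the degenerate case: if $\expec[Z_2^2]=0$ then $Z_2 = 0$ almost surely, so $\expec[Z_1 f(Z_2)] = f(0)\expec[Z_1] = 0$ and $\expec[Z_1 Z_2] = 0$, and the identity holds trivially. So assume $\sigma_2^2 := \expec[Z_2^2] > 0$. Set $\rho := \expec[Z_1 Z_2]/\sigma_2^2$ and $W := Z_1 - \rho Z_2$. Then $(W, Z_2)$ is jointly Gaussian, $\expec[W Z_2] = \expec[Z_1 Z_2] - \rho \sigma_2^2 = 0$, so $W$ and $Z_2$ are independent, and $\expec[W] = 0$. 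Consequently
\ben
\expec[Z_1 f(Z_2)] = \rho\,\expec[Z_2 f(Z_2)] + \expec[W]\,\expec[f(Z_2)] = \rho\,\expec[Z_2 f(Z_2)],
\een
so everything reduces to proving the scalar Stein identity $\expec[Z_2 f(Z_2)] = \sigma_2^2\,\expec[f'(Z_2)]$; granting it, $\expec[Z_1 f(Z_2)] = \rho\sigma_2^2\,\expec[f'(Z_2)] = \expec[Z_1 Z_2]\,\expec[f'(Z_2)]$, as claimed.

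For the scalar identity, let $\varphi$ be the $\mc{N}(0,\sigma_2^2)$ density, which obeys the log-derivative relation $z\,\varphi(z) = -\sigma_2^2\,\varphi'(z)$, and write $f(z) = f(0) + \int_0^z f'(u)\,du$; the latter is legitimate once $f$ is locally absolutely continuous, which is the natural hypothesis under which ``$\expec[f'(Z_2)]$ exists'' is meaningful and which holds in all our applications (the relevant $f$ are Lipschitz, or Lipschitz with finitely many kinks). Then, using $\expec[Z_2] = 0$ to kill the $f(0)$ term,
\ben
\expec[Z_2 f(Z_2)] = \int_{\mathbb{R}} z\,\varphi(z) \int_0^z f'(u)\,du\,dz,
\een
and I would apply Fubini's theorem together with the elementary tail integrals $\int_u^{\infty} z\varphi(z)\,dz = \sigma_2^2\varphi(u)$ and $\int_{-\infty}^u z\varphi(z)\,dz = -\sigma_2^2\varphi(u)$ (both immediate from $z\varphi = -\sigma_2^2\varphi'$) to collapse the double integral to $\sigma_2^2\int_{\mathbb{R}} f'(u)\,\varphi(u)\,du = \sigma_2^2\,\expec[f'(Z_2)]$. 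An equivalent route is to integrate by parts directly in $\int z f(z)\varphi(z)\,dz = -\sigma_2^2 \int f(z)\varphi'(z)\,dz$ and verify that the boundary term $[f(z)\varphi(z)]_{-\infty}^{\infty}$ vanishes.

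The only genuinely delicate point is justifying this interchange of integration (equivalently, the vanishing of the boundary contributions at $\pm\infty$) under the minimal stated hypotheses. The Fubini route handles it cleanly: the required absolute-integrability bound is $\int_{\mathbb{R}} |z|\varphi(z)\big|\int_0^z |f'(u)|\,du\big|\,dz = 2\sigma_2^2\,\expec[\abs{f'(Z_2)}]$, which is finite precisely because $\expec[f'(Z_2)]$ exists (understood, as is standard, as an absolutely convergent integral). Everything else --- the Gaussian regression decomposition and the density identity --- is routine bookkeeping.
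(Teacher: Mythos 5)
The paper states this result as a \emph{Fact} and gives no proof (it is the classical Gaussian integration-by-parts identity, cited from the literature), so there is no in-paper argument to compare against; your proof is correct and is the standard one. The regression decomposition $Z_1 = \rho Z_2 + W$ with $W$ independent of $Z_2$ correctly reduces the bivariate claim to the scalar identity $\expec[Z_2 f(Z_2)] = \sigma_2^2\,\expec[f'(Z_2)]$, and your Fubini argument with the tail identity $\int_u^\infty z\varphi(z)\,dz = \sigma_2^2\varphi(u)$ is sound and handles the boundary-term issue cleanly. Two trivial remarks only: the absolute-integrability bound actually evaluates to $\sigma_2^2\,\expec[\,\abs{f'(Z_2)}\,]$ rather than $2\sigma_2^2\,\expec[\,\abs{f'(Z_2)}\,]$ (immaterial, since only finiteness is needed), and the factorization $\expec[W f(Z_2)] = \expec[W]\,\expec[f(Z_2)]$ tacitly assumes $\expec[f(Z_2)]$ is finite, which is automatic for the Lipschitz denoisers to which the paper applies this fact but is worth a half-sentence under the minimal stated hypotheses.
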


We also make use of concentration results that are listed in Appendices \ref{app:conc_lemma}, \ref{app:gauss_lemmas}, and \ref{app:other}.  Many of these results and their proofs can be found in Rush and Venkataramanan \cite{RushV16}.  Appendix \ref{app:conc_dependent} holds concentration results for dependent random variables that were needed to provide the new results in this paper, 
%many of which we believe to be interesting results on their own, 
such as concentration for psuedo-Lipschitz functions acting on Markovian input.

The proof of  Lemma \ref{lem:main_lem}. proceeds by induction on $t$.  We label as $\mathcal{H}_{t+1}$ the results \eqref{eq:Ha}, \eqref{eq:Hb1}, %\eqref{eq:Hb2},
\eqref{eq:Hc}, \eqref{eq:Hd}, \eqref{eq:He}, \eqref{eq:Hf}, \eqref{eq:Qsing}, \eqref{eq:Hg}, \eqref{eq:Hh} and similarly as $\mathcal{B}_t$ the results \eqref{eq:Ba}, \eqref{eq:Bb1}, 
%\eqref{eq:Bb2}, 
\eqref{eq:Bc}, \eqref{eq:Bd}, \eqref{eq:Be}, \eqref{eq:Bf}, \eqref{eq:Msing}, \eqref{eq:Bg}, \eqref{eq:Bh}.  The proof consists of four steps: (1) $\mathcal{B}_0$ holds; (2) $\mathcal{H}_1$ holds; (3) if $\mathcal{B}_r, \mathcal{H}_s$ holds for all $r < t $ and $s \leq t $, then $\mathcal{B}_t$ holds; and (4) if $\mathcal{B}_r, \mathcal{H}_s$ holds for all $r \leq t $ and $s \leq t$, then $\mathcal{H}_{t+1}$ holds.

For each step, in parts $(a)$--$(h)$ of the proof, we use $K$ and $\kappa$ to label universal constants, meaning they do not depend on $n$ or $\e$, but may depend on $t$, in the concentration upper bounds.  
%At the end of each step, these constants are multiplied to obtain $K_1, K_2, \kappa_1, \kappa_2$ of \eqref{eq:Kkappa_def}.

%---------------------------------------------------------------------------------------------------------------------------------------------------------
%-----------------------------------------------------------------------------------------
\subsection{Step 1: Showing $\mc{B}_0$ holds}

We wish to show results (a)--(h) in \eqref{eq:Ba}, \eqref{eq:Bb1}, 
%\eqref{eq:Bb2}, 
\eqref{eq:Bc}, \eqref{eq:Bd}, \eqref{eq:Be}, \eqref{eq:Bf}, \eqref{eq:Msing}, \eqref{eq:Bg}, \eqref{eq:Bh} for $t=0$.  The proof of these results is the same as in the step $\mc{B}_0$ of the proof in \cite{RushV16} and therefore is not repeated here.

%---------------------------------------------------------------------------------------------------------------------------------------------------------
%---------------------------------------------------------------------------------------------------------------------------------------------------------

\subsection{Step 2: Showing $\mc{H}_1$ holds}

We wish to show results (a)--(h) in \eqref{eq:Ha}, \eqref{eq:Hb1}, 
%\eqref{eq:Hb2}, 
\eqref{eq:Hc}, \eqref{eq:Hd}, \eqref{eq:He}, \eqref{eq:Hf}, \eqref{eq:Qsing}, \eqref{eq:Hg}, \eqref{eq:Hh} for $t=0$.

%---------------------------------------------------------------------------------------------------------------------------------------------------------

\textbf{(a)} The proof of $\mc{H}_1 (a)$ follows as the corresponding proof in \cite{RushV16}.

\textbf{(b)(i)} 
For $t=0$, the LHS of \eqref{eq:Hb1} can be bounded as
\begin{equation}
\begin{split}
P&\left(\left \lvert \frac{1}{N-2k} \sum_{i=k+1}^{N-k} \phi_h ([h^{1}]_{i-k}^{i+k}, [\beta_{0}]_{i-k}^{i+k}) - \mathbb{E}[\phi_h(\tau_0 \underline{\tilde{Z}}_0, \underline{\beta})] \right \lvert \geq \epsilon \right) \\
&\overset{(a)}{=} P\left(\left \lvert \frac{1}{N-2k} \sum_{i=k+1}^{N-k} \phi_h([\tau_0 Z_{0} + \Delta_{1,0}]_{i-k}^{i+k}, [\beta_{0}]_{i-k}^{i+k})  - \mathbb{E}[\phi_h(\tau_0 \underline{\tilde{Z}}_0, \underline{\beta})] \right \lvert \geq \epsilon \right)  \\
&\overset{(b)}{\leq} P\left(\left \lvert \frac{1}{N-2k} \sum_{i=k+1}^{N-k}\mathbb{E}_{\underline{\tilde{Z}}_0}[\phi_h(\tau_0 \underline{\tilde{Z}}_0,  [\beta_{0}]_{i-k}^{i+k})] - \mathbb{E}_{\underline{\tilde{Z}}_0, \underline{\beta}}[\phi_h(\tau_0 \underline{\tilde{Z}}_0, \underline{\beta})] \right \lvert \geq \frac{\epsilon}{3} \right) \\
& \quad + P\left(\left \lvert \frac{1}{N-2k} \sum_{i=k+1}^{N-k} \left[\phi_h(\tau_0 [Z_{0}]_{i-k}^{i+k},  [\beta_{0}]_{i-k}^{i+k}) - \mathbb{E}_{\underline{\tilde{Z}}_0}[\phi_h(\tau_0 \underline{\tilde{Z}}_0,  [\beta_{0}]_{i-k}^{i+k})]\right] \right \lvert \geq \frac{\epsilon}{3} \right) \\
&\quad + P\left(\left \lvert \frac{1}{N-2k} \sum_{i=k+1}^{N-k} \left[\phi_h([\tau_0 Z_{0} + \Delta_{1,0}]_{i-k}^{i+k},  [\beta_{0}]_{i-k}^{i+k}) - \phi_h(\tau_0 [Z_{0}]_{i-k}^{i+k}, [\beta_{0}]_{i-k}^{i+k}) \right] \right \lvert \geq \frac{\epsilon}{3} \right).
\end{split}
\label{eq:phih_0}
\end{equation}
Step $(a)$ follows from the conditional distribution of $h^1$ given in Lemma \ref{lem:hb_cond} \eqref{eq:Ha_dist} and step $(b)$ Lemma \ref{sums}.  
Label the terms on the RHS of \eqref{eq:phih_0} as $T_1 -T_3$.   We show that each of these terms  is bounded above by $K e^{-\kappa N \e^2}.$ Term $T_1$ is upper bounded by $K e^{-\kappa (N-2k) \e^2}$ using Lemma \ref{lem:PLMCconc_new} since the function $\tilde{\phi}_h: \mathbb{R}^{2k+1} \rightarrow \mathbb{R}$ defined as $\tilde{\phi}_h(\underline{s}) := \mathbb{E}_{\underline{Z}}[\phi_h(\tau_0 \underline{Z}, \underline{s})]$ is PL(2) by Lemma \ref{lem:PLexamples}.  Term $T_2$ is upper bounded by $K e^{-\kappa (N-2k) \e^2}$ using Lemma \ref{lem:PL_overlap_gauss_conc} since the function $\hat{\phi}_{h,i}:\mathbb{R}^{2k+1}\rightarrow\mathbb{R}$ defined as 
\be
\hat{\phi}_{h,i}(\underline{s}):=\phi_h(\underline{s},[\beta]_{i-k}^{i+k}) \in PL(2), \quad \text{ for } k+1 \leq i \leq N-k,
\label{eq:hat_phi_def}
\ee
%by Lemma \ref{lem:PLexamples} \textcolor{blue}{I don't think we used \ref{lem:PLexamples} here. I think that $\hat{\phi}_{h,i}$ is PL(2) directly follows from the fact that $\phi_h$ is PL(2).}, 
where we have used the fact that $\mathbb{E}_{[Z_{0}]_{i-k}^{i+k}}[\phi_h(\tau_0 [Z_{0}]_{i-k}^{i+k},  [\beta_{0}]_{i-k}^{i+k})] = \mathbb{E}_{\underline{\tilde{Z}}_0}[\phi_h(\tau_0 \underline{\tilde{Z}}_0,  [\beta_{0}]_{i-k}^{i+k})]$ for each $k+1 \leq i \leq N-k$. Finally consider $T_3$, the third term on the RHS of \eqref{eq:phih_0}.
\begin{align}
T_3 &\overset{(a)}{\leq} P\left(\frac{1}{N-2k} \sum_{i=k+1}^{N-k} L \left(1 + \norm{[\tau_0 Z_{0} + \Delta_{1,0}]_{i-k}^{i+k}} + \norm{\tau_0 [Z_{0}]_{i-k}^{i+k}}\right) \norm{[\Delta_{1,0}]_{i-k}^{i+k}} \geq \frac{\epsilon}{3} \right) \nonumber \\
&\overset{(b)}{\leq} P\left( \frac{\norm{\Delta_{1,0}}}{\sqrt{N-2k}}  \cdot \left(1 + \sqrt{2k+1}\frac{\norm{\Delta_{1,0}}}{\sqrt{N-2k}} + 2\tau_0 \sqrt{2k+1} \frac{\norm{Z_{0}}}{\sqrt{N-2k}}\right)\geq \frac{\epsilon}{3L\sqrt{3(2k+1)}} \right).  \label{eq:B1func1eq1}
\end{align}
Step $(a)$ follows from the fact that $\phi_h$ is PL(2).
%the triangle inequality and \eqref{eq:hat_phi_def} with the PL constant denoted by $L$ 
Step $(b)$ uses $\norm{[\tau_0 Z_{0} + \Delta_{1,0}]_{i-k}^{i+k}} \leq \norm{\tau_0 [Z_{0}]_{i-k}^{i+k}} + \norm{[\Delta_{1,0}]_{i-k}^{i+k}}$ by the triangle inequality, the Cauchy-Schwarz inequality, the fact that for $a \in \mathbb{R}^N$, $\sum_{i=k+1}^{N-k} \norm{[a]_{i-k}^{i+k}}^2 \leq (2k+1) \norm{a}^2$, and the following application of Lemma \ref{lem:squaredsums}:
\ben
\begin{split}
\sum_{i=k+1}^{N-k} \left(1 + \norm{[\Delta_{1,0}]_{i-k}^{i+k}} + 2\norm{\tau_0 [Z_{0}]_{i-k}^{i+k}}\right)^2 &\leq 3\left((N-2k) + (2k+1)\norm{\Delta_{1,0}}^2 + 4 \tau_0^2 (2k+1) \norm{Z_{0}}^2\right).
\end{split}
\een
From \eqref{eq:B1func1eq1},  we have
\ben
T_3 \leq  P\left( \frac{\norm{Z_{0}}}{\sqrt{N-2k}} \geq 2 \right)  + P\left( \frac{\norm{\Delta_{1,0}}}{\sqrt{N-2k}} \geq \frac{\frac{\e}{\sqrt{2k+1}}  \min\left\{1, \frac{1}{6L \sqrt{3}} \right\}}{2 + 4 \tau_0 \sqrt{2k+1}} \right)  \overset{(a)}{\leq} e^{-(N-2k)} + K e^{-\kappa (N-2k) \e^2},
\een
where to obtain $(a)$, we use Lemma \ref{subexp} and  $\mc{H}_1 (a)$.

\textbf{(c)} We first show concentration for $(h^1)^*\beta_0/n$.  This result follows directly from $\mc{H}_1(b)$: we can write $\abs{(h^1)^* \beta_0} = \left \lvert \sum_{i=1}^N h^1_i \beta_{0_i} \right \lvert \leq \left \lvert \sum_{i=1}^{N/2} h^1_i \beta_{0_i} \right \lvert + \left \lvert \sum_{j=N/2+1}^N h^1_j \beta_{0_j} \right \lvert $ and it follows by Lemma \ref{sums},
\begin{align*}
P\left(\left \lvert \frac{(h^1)^* \beta_0}{n}\right \lvert  \geq \e \right) &\leq P\left(\left \lvert \sum_{i=1}^{N/2} \frac{h^1_i \beta_{0_i}}{N/2}\right \lvert  \geq \e \delta \right) + P\left(\left \lvert \sum_{j=N/2+1}^N \frac{h^1_j \beta_{0_j}}{N/2}\right \lvert  \geq  \e \delta \right) \overset{(a)}{\leq} K e^{-\frac{\kappa N \e^2 \delta^2}{4}} + K e^{-\frac{\kappa N \e^2 \delta^2}{4}}.
\end{align*}
In the above, step $(a)$ follows by applying $\mc{H}_1(b)$ using PL(2) functions $\phi_{1,h}, \phi_{2,h}$ both defined from $\mathbb{R}^{2(2k+1)} \rightarrow \mathbb{R}$ equal to $\phi_{1,h}(\underline{x}, \underline{y}) = \underline{x}_1 \underline{y}_1$, and $\phi_{2,h}(\underline{x}, \underline{y}) = \underline{x}_{2k+1} \underline{y}_{2k+1}.$  Note that $\mathbb{E}[ \tau_0 \tilde{\underline{Z}}_{0_1} \underline{\beta}_1] = 0.$

Next we show concentration for $(h^1)^*q^0/n$.  Note that 
\[(h^1)^*q^0 = \sum_{i=1}^N h^1_i q^0_i = \sum_{i=k+1}^{N-k} h^1_i f_0(\underline{0}, [\beta_0]_{i-k}^{i+k}) - \sum_{i=1}^{k} h^1_i \beta_{0_i} - \sum_{i=N-k+1}^{N} h^1_i \beta_{0_i}\]
where the last equality follows by definition of $q^0$ provided in \eqref{eq:q0def}.  It follows by Lemma \ref{sums},
\begin{align*}
&P\left(\left \lvert \frac{(h^1)^* q^0}{n}\right \lvert  \geq \e \right) \\
&\leq P\left(\left \lvert \sum_{i=k+1}^{N-k} \frac{h^1_i f_0(\underline{0}, [\beta_0]_{i-k}^{i+k})}{N-2k} \right \lvert  \geq \frac{\e n}{3(N-2k)} \right) + P\left(\left \lvert \sum_{i=1}^{k} \frac{h^1_i \beta_{0_i}}{k} \right \lvert  \geq \frac{\e n}{3k} \right) + P\left(\left \lvert \sum_{i=N-k+1}^{N} \frac{h^1_i \beta_{0_i}}{k} \right \lvert  \geq \frac{\e n}{3k} \right) \\
&\overset{(a)}{\leq} K\exp\left\{-\frac{\kappa n^2\e^2}{9(N-2k)}\right\} + K\exp\left\{-\frac{\kappa n^2 \e^2}{9k}\right\} + K\exp\left\{-\frac{\kappa n^2 \e^2}{9k}\right\}.
\end{align*}
In the above, step $(a)$ follows from $\mc{H}_1(b)$ using PL(2) functions $\phi_{1,h}, \phi_{2,h}, \phi_{3,h}: \mathbb{R}^{2(2k+1)} \rightarrow \mathbb{R}$ equal to $\phi_{1, h}(\underline{x}, \underline{y}) = \underline{x}_{k+1} f_0(\underline{0}, \underline{y})$, $\phi_{2, h}(\underline{x}, \underline{y}) = \underline{x}_1 \underline{y}_1$, $\phi_{3, h}(\underline{x}, \underline{y}) = \underline{x}_{2k+1} \underline{y}_{2k+1}$ which are all PL(2) since products of Lipschitz functions are PL(2) by Lemma \ref{lem:Lprods}.  Note that $\mathbb{E}[\tau_0 \underline{\tilde{Z}}_{0, k+1} f(\underline{0}, \underline{\beta})] = 0$ and also that $n^2/(N-2k) \leq \kappa n$ where $\kappa$ depends on $k$ and $\delta$.

\textbf{(d)}  The result follows as in $\mc{H}_1(c)$.  
%We can write $\norm{h^1}^2 =  \lvert \sum_{i=1}^N (h^1_i)^2  \lvert \leq  \lvert \sum_{i=1}^{N/2} (h^1_i)^2  \lvert + \lvert \sum_{j=N/2+1}^N (h^1_j)^2 \lvert $ and therefore it follows by Lemma \ref{sums},
%\begin{align*}
%P\left(\left \lvert \frac{\norm{h^1}^2}{N} - \tau_0^2 \right \lvert  \geq \e \right) &\leq P\left(\left \lvert \sum_{i=1}^{N/2} \frac{(h^1_i)^2}{N/2} - \tau_0^2 \right \lvert  \geq \e \right) + P\left(\left \lvert \sum_{j=N/2+1}^N \frac{(h^1_j)^2}{N/2} - \tau_0^2\right \lvert  \geq \e \right) \\
%&\overset{(a)}{\leq} K e^{-\frac{\kappa N \e^2}{4}} + K e^{-\frac{\kappa N \e^2}{4}}.
%\end{align*}
We can write $\norm{h^1}^2 =  \sum_{i=1}^N (h^1_i)^2  = \sum_{i=1}^{N/2} (h^1_i)^2  + \sum_{j=N/2+1}^N (h^1_j)^2$ and therefore it follows by Lemma \ref{sums},
\begin{align*}
&P\left(\left \lvert \frac{\norm{h^1}^2}{N} - \tau_0^2 \right \lvert  \geq \e \right) = P\left(\left \lvert  \sum_{i=1}^{N/2}\frac{ (h^1_i)^2 }{N/2} +\sum_{j=N/2+1}^N \frac{ (h^1_j)^2}{N/2}  - 2\tau_0^2 \right \lvert  \geq 2\e \right)  \\
&\leq P\left(\left \lvert \sum_{i=1}^{N/2} \frac{(h^1_i)^2}{N/2} - \tau_0^2 \right \lvert  \geq \e \right) + P\left(\left \lvert \sum_{j=N/2+1}^N \frac{(h^1_j)^2}{N/2} - \tau_0^2\right \lvert  \geq \e \right) \overset{(a)}{\leq} K \exp\left\{-\frac{\kappa N\e^2}{2}\right\} + K \exp\left\{-\frac{\kappa N \e^2}{2}\right\}.
\end{align*}
In the above, step $(a)$ follows by applying $\mc{H}_1(b)$ using PL(2) functions $\phi_{1,h}, \phi_{2,h}$ both defined from $\mathbb{R}^{2(2+1)} \rightarrow \mathbb{R}$ equal to $\phi_{1,h}(\underline{x}, \underline{y}) = (\underline{x}_1)^2$, and $\phi_{2,h}(\underline{x}, \underline{y}) = (\underline{x}_{2k+1})^2.$

\textbf{(e)} We prove concentration for $(q^0)^*q^{1}$ first.  Notice that
\[(q^0)^*q^{1} = \sum_{i=1}^N q^0_i q^{1}_i = \sum_{i=k+1}^{N-k} f_0(\underline{0}, [\beta_{0}]_{i-k}^{i+k}) f_1([h^1]_{i-k}^{i+k}, [\beta_{0}]_{i-k}^{i+k})  + \sum_{i=1}^k \beta_{0_i}^2  + \sum_{i=N-k+1}^N \beta_{0_i}^2.\]
Therefore it follows by Lemma \ref{sums},
\begin{align*}
&P\left(\left \lvert \frac{(q^0)^* q^1}{n} - \tilde{E}_{0,1}\right \lvert  \geq \e \right) \\
&\leq P\left(\left \lvert \sum_{i=k+1}^{N-k} \frac{f_0(\underline{0}, [\beta_{0}]_{i-k}^{i+k}) f_1([h^1]_{i-k}^{i+k}, [\beta_{0}]_{i-k}^{i+k})}{N-2k} - \mathbb{E}[f_0(\underline{0}, \underline{\beta}) f_1(\tau_0 \underline{\tilde{Z}}_0, \underline{\beta})] \right \lvert  \geq \frac{\e n}{3(N-2k)} \right) \\
&\quad + P\left(\left \lvert \sum_{i=1}^{k} \frac{\beta_{0_i}^2}{k} -\sigma_{\beta}^2 \right \lvert  \geq \frac{\e n}{3k} \right) + P\left(\left \lvert \sum_{i=N-k+1}^{N} \frac{\beta_{0_i}^2}{k} - \sigma_{\beta}^2\right \lvert  \geq \frac{\e n}{3k} \right) \\
&\overset{(a)}{\leq} K\exp\left\{-\frac{\kappa n^2 \e^2}{9(N-2k)}\right\} + K\exp\left\{-\frac{\kappa n^2 \e^2}{9k}\right\} + K\exp\left\{-\frac{\kappa n^2 \e^2}{9k}\right\}.
\end{align*}
In the above, step $(a)$ follows from $\mc{H}_1(b)$ using PL(2) functions $\phi_{1,h}, \phi_{2,h}, \phi_{3,h}: \mathbb{R}^{2(2k+1)} \rightarrow \mathbb{R}$ equal to $\phi_{1, h}(\underline{x}, \underline{y}) = f_0(\underline{0}, \underline{y}) f_1(\underline{x}, \underline{y})$, $\phi_{2, h}(\underline{x}, \underline{y}) = \underline{y}_1^2$, $\phi_{3, h}(\underline{x}, \underline{y}) =\underline{y}_{2k+1}^2$ which are all PL(2) since products of Lipschitz functions are PL(2) by Lemma \ref{lem:Lprods}.  The result follows by noting $\mathbb{E}[\underline{\beta}_1^2] = \mathbb{E}[\underline{\beta}_{2k+1}^2]= \sigma_{\beta}^2.$

Concentration for $\norm{q^{1}}^2$ follows similarly by applying $\mc{H}_1(b)$ with the representation
\[\norm{q^{1}}^2 = \sum_{i=1}^N (q^{1}_i)^2 = \sum_{i=k+1}^{N-k} (f_1([h^1]_{i-k}^{i+k}, [\beta_{0}]_{i-k}^{i+k}))^2  + \sum_{i=1}^k \beta_{0_i}^2  + \sum_{i=N-k+1}^N \beta_{0_i}^2.\]

\textbf{(f)} The concentration of $\lambda_0$ around $\hat{\lambda}_0$ follows from $\mathcal{H}_1 (b)(i)$ applied to the function $\phi_h([h^1]_{i-k}^{i+k}, [\beta_0]_{i-k}^{i+k}) := f_0'([h^1]_{i-k}^{i+k}, [\beta_0]_{i-k}^{i+k})$.  The only other result to prove is concentration for $(h^{1})^*q^{1}$.  Notice that
\[(h^1)^*q^{1} = \sum_{i=1}^N h^1_i q^{1}_i = \sum_{i=k+1}^{N-k} h^1_i f_1([h^1]_{i-k}^{i+k}, [\beta_{0}]_{i-k}^{i+k})  + \sum_{i=1}^k h^1_i \beta_{0_i}  + \sum_{i=N-k+1}^N h^1_i \beta_{0_i}.\]
Therefore it follows by Lemma \ref{sums},
\begin{align*}
P\left(\left \lvert \frac{(h^1)^* q^1}{n} - \hat{\lambda}_{1} \breve{E}_{0,0} \right \lvert \geq \e \right) &\leq P\left(\left \lvert \sum_{i=k+1}^{N-k} \frac{h^1_i f_1([h^1]_{i-k}^{i+k}, [\beta_{0}]_{i-k}^{i+k})}{N-2k} - \frac{n \hat{\lambda}_{1} \breve{E}_{0,0}}{N-2k} \right \lvert  \geq \frac{\e n}{3(N-2k)} \right) \\
&\qquad + P\left(\left \lvert \sum_{i=1}^{k} \frac{h^1_i \beta_{0_i}}{k}  \right \lvert  \geq \frac{\e n}{3k} \right)  + P\left(\left \lvert \sum_{i=k+1}^{N-k} \frac{h^1_i \beta_{0_i} }{k} \right \lvert  \geq \frac{\e n}{3k} \right) \\
&\overset{(a)}{\leq}  K\exp\left\{-\frac{\kappa n^2 \e^2}{9(N-2k)}\right\} + K\exp\left\{-\frac{\kappa n^2 \e^2}{9k}\right\} + K\exp\left\{-\frac{\kappa n^2 \e^2}{9k}\right\},
\end{align*}
In the above, step $(a)$ follows from $\mc{H}_1(b)$ using PL(2) functions $\phi_{1,h}, \phi_{2,h}, \phi_{3,h}: \mathbb{R}^{2(2k+1)} \rightarrow \mathbb{R}$ equal to $\phi_{1, h}(\underline{x}, \underline{y}) = \underline{x}_{k+1} f_1(\underline{x}, \underline{y})$, $\phi_{2, h}(\underline{x}, \underline{y}) = \underline{x}_1 \underline{y}_1$, $\phi_{3, h}(\underline{x}, \underline{y}) = \underline{x}_{2k+1} \underline{y}_{2k+1}$ which are all PL(2) since products of Lipschitz functions are PL(2) by Lemma \ref{lem:Lprods}.  The result follows by noting that $\mathbb{E}[\tau_0 \underline{\tilde{Z}}_{k+1} \underline{\beta}_{k+1}] = 0$ and $\mathbb{E}[\tau_0 \underline{\tilde{Z}}_{0, k+1} f_1(\tau_0 \underline{\tilde{Z}}_0, \underline{\beta})] =  \left(\frac{n}{N-2k}\right) \hat{\lambda}_{1} \breve{E}_{0,0},$ which follows by Stein's Lemma given in Fact \ref{fact:stein}.  We demonstrate this in the following.  Think of a function $\tilde{f}: \mathbb{R} \rightarrow \mathbb{R}$ defined as $\tilde{f}(x) =  f_1(\tau_0 \underline{\tilde{Z}}_{0, 1}, \ldots, \tau_0 \underline{\tilde{Z}}_{0,k}, x, \tau_0 \underline{\tilde{Z}}_{0, k+1}, \ldots, \tau_0 \underline{\tilde{Z}}_{0, 2k+1}, \underline{\beta})$.  Then,
\begin{align*}
\mathbb{E}[\tau_0 \underline{\tilde{Z}}_{0,k+1} f_1(\tau_0 \underline{\tilde{Z}}_0, \underline{\beta})] = \mathbb{E}[\tau_0 \underline{\tilde{Z}}_{0, k+1} \tilde{f}(\tau_0 \underline{\tilde{Z}}_{0, k+1})] \overset{(b)}{=} \tau_0^2 \mathbb{E}[f'_1(\tau_0 \underline{\tilde{Z}}_0, \underline{\beta})] =  \breve{E}_{0,0} \left(\frac{n}{N-2k}\right) \hat{\lambda}_{1}.
\end{align*}
In the above, step $(b)$ follows by Fact \ref{fact:stein}

\textbf{(g), (h)}  The proof of $\mc{H}_1 (g), (h)$ follow as the corresponding proofs in \cite{RushV16}.

\subsection{Step 3: Showing $\mc{B}_t$ holds}

We wish to show results (a) -- (h) in \eqref{eq:Ba}, \eqref{eq:Bb1}, 
%\eqref{eq:Bb2}, 
\eqref{eq:Bc}, \eqref{eq:Bd}, \eqref{eq:Be}, \eqref{eq:Bf}, \eqref{eq:Bg}, \eqref{eq:Bh} assuming that $\mc{B}_{r}$ and $\mc{H}_{r+1}$ hold for all $0 \leq r \leq t-1$ due to the inductive hypothesis.  
The proof of these results is the same as in the step $\mc{B}_t$ of the proof in \cite{RushV16} and therefore is not repeated here.

\subsection{Step 4: Showing $\mc{H}_{t+1}$ holds}
We wish to show results (a) -- (h) in \eqref{eq:Ha}, \eqref{eq:Hb1}, 
%\eqref{eq:Hb2},  
\eqref{eq:Hc},  \eqref{eq:Hd}, \eqref{eq:He}, \eqref{eq:Hf}, \eqref{eq:Hg}, \eqref{eq:Hh} assuming $\mc{B}_r$ holds for all $0 \leq r \leq t$ and $\mc{H}_{s+1}$ holds for all $0 \leq s \leq t-1$.

The probability statements in the lemma and the other parts of $\mc{H}_{t+1}$ are conditioned on the event that the matrices $\mathbf{Q}_{1}, \ldots, \mathbf{Q}_{t+1}$ are invertible, but for the sake of brevity, we do not explicitly state the conditioning in the probabilities.  The following lemma, whose proof is the same as in \cite{RushV16}, will be used to prove $\mc{H}_{t+1}$. 

%\textcolor{cyan}{The most recent version of the Finite Sample paper -- on ArXiv now -- doesn't have Lemma \ref{lem:mat_inverse} any longer.  We prove the invertibility as part of the recursion so we need to update our results to match these so that we can reference that work. \textcolor{blue}{Instead of citing the latest version, can we cite an earlier version that has Lemma \ref{lem:mat_inverse}?}}

%\begin{lem}\cite[Lemma 6]{RushV16}
%\label{lem:mat_inverse}
%The symmetric matrix $\mathbf{Q}_{t+1} := \frac{1}{n}Q_{t+1}^* Q_{t+1}$ is invertible with high probability. In particular, 
%\be
%P\left(\mathbf{Q}_{t+1} \text{ singular}\right) \leq K e^{-\kappa n \e^2}  %\ \ P\left(\mathbf{M}_t \text{ singular}\right) \leq t K_{t-1} e^{-\kappa_{t-1} \varepsilon_4 n}.
%\ee
%\end{lem}

\begin{lem}\cite[Lemma 8]{RushV16}
\label{lem:Qv_conc}
Let $v := \frac{1}{n}B^*_{t+1} m_t^{\perp} - \frac{1}{n}Q_{t+1}^*(\xi_t q^t - \sum_{i=0}^{t-1} \alpha^t_i \xi_i q^i)$ and $\mathbf{Q}_{t+1} := \frac{1}{n}Q_{t+1}^* Q_{t+1}$.  Then for $j \in [t+1]$,
\ben
P\left(\left \lvert \left[\mathbf{Q}_{t+1}^{-1} v\right]_{j} \right \lvert \geq \e\right) \leq e^{-\kappa n \e^2}.
\een
\end{lem}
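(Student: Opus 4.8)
\emph{Proof strategy.}  The plan is to write, on the event that $\mathbf{Q}_{t+1}$ is invertible (which holds outside an exponentially small set by \eqref{eq:Qsing}),
\ben
[\mathbf{Q}_{t+1}^{-1}v]_{j} = \sum_{\ell=1}^{t+1} [\mathbf{Q}_{t+1}^{-1}]_{j\ell}\, v_{\ell},
\een
and to argue that each matrix entry $[\mathbf{Q}_{t+1}^{-1}]_{j\ell}$ concentrates on a bounded deterministic constant while each coordinate $v_{\ell}$ concentrates on $0$; since the sum has only $t+1$ terms, a fixed number independent of $n$, combining these through the product and sum concentration lemmas of the appendices yields $[\mathbf{Q}_{t+1}^{-1}v]_{j} \doteq 0$. (This is precisely the estimate needed to control the last term of $\Delta_{t+1,t}$ in \eqref{eq:Dt1t}, hence to prove $\mc{H}_{t+1}(a)$.)

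First I would handle the matrix factor. The $(a{+}1,b{+}1)$ entry of $\mathbf{Q}_{t+1}$ is $\frac1n (q^{a})^{*}q^{b}$ for $0 \le a,b \le t$, and every such inner product concentrates on $\tilde{E}_{a,b}$ by the statements $\mc{H}_{s}(e)$ for $1 \le s \le t$ (with \eqref{eq:qassumption} covering the $a=b=0$ diagonal), all of which are available from the inductive hypothesis of Step~4; hence $\mathbf{Q}_{t+1}$ concentrates entrywise on the matrix $\tilde{C}^{t+1}$ of \eqref{eq:Ct_def}. By Lemma~\ref{lem:Ct_invert}, using that the $(\sigma^{\perp}_{k})^{2}$ in \eqref{eq:sigperp_defs} are bounded below for $k \le t$, the limit $\tilde{C}^{t+1}$ is invertible and its inverse has bounded entries; feeding entrywise concentration of $\mathbf{Q}_{t+1}$ together with invertibility of its limit into the standard matrix-inverse concentration argument (write the inverse as adjugate over determinant, a ratio of polynomials in concentrating quantities whose denominator concentrates on a nonzero constant) gives $[\mathbf{Q}_{t+1}^{-1}]_{j\ell} \doteq [(\tilde{C}^{t+1})^{-1}]_{j\ell}$, which are bounded constants. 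This is exactly the $\mathbf{Q}_{t+1}^{-1}$-part of $\mc{H}_{t+1}(g)$, established before the present lemma is invoked.

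The core step is to show $v_{\ell} \doteq 0$ for each $\ell \in [t+1]$. Writing $\ell = j+1$ and using $m^{t}_{\perp} = m^{t} - \sum_{i=0}^{t-1} \alpha^{t}_{i} m^{i}$, the coordinate $v_{j+1}$ decomposes into finitely many scalar terms, each a coefficient ($\xi_{i}$ or $\alpha^{t}_{i}$) times a normalized inner product $\frac1n (b^{j})^{*}m^{i}$ or $\frac1n (q^{j})^{*}q^{i}$ with $0 \le i,j \le t$. All these inner products are controlled by $\mc{B}_{r}(f)$ for $r \le t$ and $\mc{H}_{s}(e)$ for $s \le t$, and all these coefficients satisfy $\xi_{i}\doteq\hat{\xi}_{i}$ (by $\mc{B}_{i}(f)$) and $\alpha^{t}_{i}\doteq\hat{\alpha}^{t}_{i}$ (by $\mc{B}_{t}(g)$), all from the inductive hypothesis. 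Propagating these through the product and sum lemmas (again finitely many terms), $v_{j+1}$ concentrates on
\ben
\Big(\hat{\xi}_{t}\tilde{E}_{j,t} - \sum_{i=0}^{t-1}\hat{\alpha}^{t}_{i}\hat{\xi}_{i}\tilde{E}_{j,i}\Big) - \Big(\hat{\xi}_{t}\tilde{E}_{j,t} - \sum_{i=0}^{t-1}\hat{\alpha}^{t}_{i}\hat{\xi}_{i}\tilde{E}_{j,i}\Big) = 0,
\een
where the first bracket is the limit of $\frac1n (b^{j})^{*}m^{t}_{\perp}$, obtained from $\frac1n (b^{a})^{*}m^{b}\doteq\hat{\xi}_{b}\tilde{E}_{a,b}$ in \eqref{eq:Bf} (the index of $\hat{\xi}$ matching that of the $m$-factor, and $\tilde E$ symmetric), and the second bracket is the limit of $\frac1n (q^{j})^{*}\big(\xi_{t}q^{t} - \sum_{i=0}^{t-1}\alpha^{t}_{i}\xi_{i}q^{i}\big)$, obtained from $\frac1n (q^{a})^{*}q^{b}\doteq\tilde{E}_{a,b}$. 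The two brackets are literally the same expression, so the cancellation is exact and $v_{\ell}\doteq 0$. Combining with the previous paragraph, each product $[\mathbf{Q}_{t+1}^{-1}]_{j\ell}\,v_{\ell}$ concentrates on $0$ and the sum over the $t+1$ values of $\ell$ preserves concentration, giving the claimed bound.

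I expect the main obstacle to be organizational rather than conceptual: the cancellation producing $v_{\ell}\doteq 0$ is automatic once the inductive hypotheses are available --- it simply reflects how $\hat{\xi}$, $\hat{\alpha}^{t}$ and $\tilde{E}$ are defined in \eqref{eq:hatlambda_hatxi}, \eqref{eq:hatalph_hatgam_def}, \eqref{eq:Edef} --- so the real care goes into (i) checking that every inner product and coefficient appearing in the expansion of $v$ genuinely falls under a listed part of the inductive hypothesis with all indices in range, including boundary terms such as the diagonal $\frac1n\|q^{a}\|^{2}$ of $\mathbf{Q}_{t+1}$ and the endpoint coordinate $\ell=t+1$; (ii) keeping the constants $K_{k,t},\kappa_{k,t}$ under control through the matrix-inverse step and the finitely-many-term product and sum combinations (they may depend on $k,t$ but must stay free of $n$ and $\e$); and (iii) handling the conditioning on $\{\mathbf{Q}_{t+1}\text{ invertible}\}$ consistently throughout.
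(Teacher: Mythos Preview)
Your proposal is correct and follows the standard argument that the paper defers to \cite{RushV16} for; the paper itself gives no proof beyond the citation, so there is nothing further to compare. One small organizational remark: your opening invocation of \eqref{eq:Qsing} is a forward reference to $\mc{H}_{t+1}(g)$, but since you then re-derive the needed invertibility and inverse concentration directly from the inductive hypotheses $\mc{H}_{s}(e)$, $s\le t$, and Lemma~\ref{lem:Ct_invert}, there is no circularity.
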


\textbf{(a)}  Recall the definition of $\Delta_{t+1,t}$ from Lemma \ref{lem:hb_cond} \eqref{eq:Dt1t}.  Using Fact \ref{fact:gauss_p0}, we have
\ben
\frac{\norm{m^t_{\perp}}}{\sqrt{n}}  \mathsf{P}^{\parallel}_{Q_{t+1}} Z_t \overset{d}{=} \frac{\norm{m^t_{\perp}}}{\sqrt{n}}  \frac{1}{\sqrt{N}}\tilde{Q}_{t+1} \bar{Z}_{t+1},
\een
where matrix $\tilde{Q}_{t+1} \in \mathbb{R}^{N \times (t+1)}$ forms an orthogonal basis for the column space of $Q_{t+1}$ such that $\tilde{Q}_{t+1}^* \tilde{Q}_{t+1} = N\mathsf{I}_{t+1}$ and $\bar{Z}_{t+1} \in \mathbb{R}^{t+1}$ is an independent random vector with i.i.d.\ $\mc{N}(0,1)$ entries.  We can then write
\begin{align*}
\Delta_{t+1,t} \overset{d}{=} & \, \sum_{r=0}^{t-1} (\alpha^t_r - \hat{\alpha}^{t}_r) h^{r+1} + Z_t \left(\frac{\norm{m^t_{\perp}}}{\sqrt{n}} - \tau_{t}^{\perp}\right) - \frac{\norm{m^t_{\perp}}}{\sqrt{n}} \frac{\tilde{Q}_{t+1} \bar{Z}_{t+1}}{\sqrt{N}} + Q_{t+1} \mathbf{Q}_{t+1}^{-1}v,
\end{align*}
where $ \mathbf{Q}_{t+1} \in \mathbb{R}^{(t+1) \times (t+1)}$ and $v \in \mathbb{R}^{t+1}$ are defined in Lemma \ref{lem:Qv_conc}. By Lemma \ref{lem:squaredsums},
\begin{align*}
\frac{\norm{\Delta_{t+1,t}}^2}{(2t+3)} & \leq \sum_{r=0}^{t-1} (\alpha^t_r - \hat{\alpha}^t_r)^2 \norm{h^{r+1}}^2 +\norm{Z_t}^2 \left \lvert \frac{\norm{m^t_{\perp}}}{\sqrt{n}} - \tau_{t}^{\perp} \right \lvert^2  + \frac{\norm{m^t_{\perp}}^2}{n}\frac{ || \tilde{Q}_{t+1} \bar{Z}_{t+1}||^2 }{N} + \sum_{j=0}^{t} \norm{q^j}^2 \left[\textbf{Q}_{t+1}^{-1} v \right]_{j+1}^2, 
\end{align*}
where we have used $Q_{t+1}\textbf{Q}_{t+1}^{-1}v = \sum_{j=0}^{t} q^j \left[\textbf{Q}_{t+1}^{-1} v \right]_{j+1}$. Applying Lemma \ref{sums},
\begin{align}
& P\left(\frac{\norm{\Delta_{t+1,t}}^2}{N} \geq \epsilon \right) \leq \sum_{r=0}^{t-1} P\left(\left \lvert\alpha^t_r - \hat{\alpha}^t_r \right \lvert \frac{||h^{r+1}||}{\sqrt{N}} \geq  \sqrt{\e_t} \right) +  P\left(\left \lvert \frac{||m^t_{\perp}||}{\sqrt{n}} - \tau_{t}^{\perp}\right \lvert \frac{\norm{Z_{t}}}{\sqrt{N}} \geq  \sqrt{\e_t} \right)\nonumber \\
& \quad + P\left(\frac{||m^t_{\perp}||}{\sqrt{n}}  \frac{||\tilde{Q}_{t+1} \bar{Z}_{t+1}||}{N} \geq  \sqrt{\e_t}\right)  + \sum_{j=0}^{t} P\left(\left \lvert \left[\textbf{Q}_{t+1}^{-1} v \right]_{j+1} \right \lvert \frac{||q^j||}{\sqrt{n}} \geq \sqrt{\e_t} \right),
\label{eq:deltt1t_sq_conc}
\end{align}
where $\e_t = \frac{\e}{(2t+3)^2}$.  We now show each of the terms in \eqref{eq:deltt1t_sq_conc} has the desired upper bound.  For  $0 \leq r \leq t-1$,
\begin{align*}
& P\left(\left \lvert\alpha^t_r - \hat{\alpha}^t_r \right \lvert \frac{\norm{h^{r+1}}}{\sqrt{N}} \geq  \sqrt{\e_t} \right)  
\leq P\left(\left \lvert\alpha^t_r - \hat{\alpha}^t_r \right \lvert \left(  \left\lvert\frac{\norm{h^{r+1}}}{\sqrt{N}}  -\tau_r \right \lvert+\tau_r\right) \geq \sqrt{\e_t} \right)  \\ 
& \leq P\left(\left \lvert \alpha^t_r - \hat{\alpha}^t_r \right \lvert \geq \frac{\sqrt{\e_t}}{2}  \min\{1 , \tau_r^{-1} \} \right)+ P\left(\left \lvert \frac{\norm{h^{r+1}}}{\sqrt{N}} - \tau_r \right \lvert \geq \sqrt{\e_t}  \right) \overset{(a)}{\leq} K e^{-\kappa N \e} + K e^{- \kappa N \e} ,
\end{align*}
where step $(a)$ follows from induction hypotheses $\mathcal{B}_t (g)$, $\mathcal{H}_{1} (d)-\mathcal{H}_{t} (d)$, and Lemma \ref{sqroots}.  Next, the second term on the right side of \eqref{eq:deltt1t_sq_conc} can be bounded similarly
%\ben
%\begin{split}
% P&\left(\abs{\frac{\norm{m^t_{\perp}}}{\sqrt{n}} - \tau_{t}^{\perp}} \frac{\norm{Z_t}}{\sqrt{N}} \geq \sqrt{\e_t} \right) \leq P\left(\abs{\frac{\norm{m^t_{\perp}}}{\sqrt{n}} - \tau_{t}^{\perp}} \geq \e_t^{\frac{1}{4}} \right)  +  P\left( \frac{\norm{Z_t}}{\sqrt{N}} \geq \e_t^{\frac{1}{4}} \right) \overset{(b)}{\leq}  Ke^{-\kappa  n \sqrt{\e}} + Ke^{- \kappa N \sqrt{\e}},
%\end{split}
%\een
%where step $(b)$ is obtained 
using induction hypothesis $\mathcal{B}_t (h)$, Lemma \ref{sqroots}, and Lemma \ref{subexp}. Since $\norm{m^t_{\perp}}/\sqrt{n}$ concentrates on $\tau_{t}^{\perp}$ by $\mc{B}_t (h)$, the third term in \eqref{eq:deltt1t_sq_conc} can be bounded as 
\begin{equation}
\begin{split}
P&\left(\frac{\norm{m^t_{\perp}}}{\sqrt{n}} \cdot  \frac{\|\tilde{Q}_{t+1} \bar{Z}_{t+1}\| }{N}\geq  \sqrt{\e_t}  \right) = P\left( \left(\abs{\frac{\|m^t_\perp\|}{\sqrt{n}} - \tau_t^\perp} + \tau_t^\perp\right)\cdot \frac{\|\tilde{Q}_{t+1}\bar{Z}_{t+1}\|}{N}  \geq \sqrt{\e_t}\right) \\
 &\leq P\left( \abs{\frac{\norm{m^t_{\perp}}}{\sqrt{n}} - \tau_{t}^{\perp}} \geq  \sqrt{\e_t} \right) + P\left(\frac{1}{N}\| \tilde{Q}_{t+1}\bar{Z}_{t+1} \| \geq 
 \frac{ \sqrt{\e_t}}{2}  \min\{ 1 , (\tau_t^{\perp})^{-1}\} \right).
 \label{eq:T3split}
 \end{split}
\end{equation}
For the second term in \eqref{eq:T3split}, denoting the columns of $\tilde{Q}_{t+1} $ as $\{\tilde{q}_0, \ldots \tilde{q}_{t}\}$, we have $\|\tilde{Q}_{t+1} \bar{Z}_{t+1}\|^2 = \sum_{i=0}^{t} \norm{\tilde{q}_i}^2 (\bar{Z}_{t+1_i})^2 = N \sum_{i=0}^{t} (\bar{Z}_{t+1_i})^2,$ since the $\{\tilde{q}_i\}$ are orthogonal, and $\norm{\tilde{q}_i}^2 = N$ for $0 \leq i \leq t$.  Therefore,
\be
\begin{split}
P\left(\frac{1}{N^2} \|\tilde{Q}_{t+1}\bar{Z}_{t+1}\|^2 \geq \e' \right) \overset{(b)}{\leq} \sum_{i=0}^{t}P\left(\abs{\bar{Z}_{t+1_i}} \geq \sqrt{\frac{N\e'}{t+1}} \right) \overset{(c)}{\leq} 2 e^{-\frac{1}{2(t+1)} N \e'}.
\label{eq:Qt1Zt1conc}
\end{split}
\ee
Step $(b)$ uses Lemma \ref{sums} and step $(c)$ Lemma \ref{lem:normalconc}. Using \eqref{eq:Qt1Zt1conc} and $\mc{B}_t (h)$, the RHS of \eqref{eq:T3split} is bounded by $K\exp\{-\kappa n\e \}$.  Finally, for $0 \leq j \leq t$, the last term in \eqref{eq:deltt1t_sq_conc} can be bounded by 
\ben
\begin{split}
& P\left(\left \lvert \left[\textbf{Q}_{t+1}^{-1} v \right]_{j+1} \right \lvert \frac{\norm{q^j}}{\sqrt{n}} \geq \sqrt{\e_t} \right)
= P\left(\left \lvert \left[\textbf{Q}_{t+1}^{-1} v \right]_{j+1} \right \lvert \left( \left \lvert\frac{\norm{q^j}}{\sqrt{n}} - \sigma_j \right \lvert + \sigma_j \right) \geq \sqrt{\e_t} \right) 
\\  & \leq  P\left(\left \lvert \frac{\norm{q^j}}{\sqrt{n}} - \sigma_j \right \lvert \geq  \sqrt{\e_t} \right) + P\left(\left \lvert \left[\textbf{Q}_{t+1}^{-1} v \right]_{j+1} \right \lvert \geq \frac{\sqrt{\e_t}}{2}  \min\{1, \sigma_j^{-1}\} \right)\overset{(d)}\leq K e^{-\kappa n \e^2} + K e^{-\kappa n \e^2},
\end{split}
\een
where step $(d)$ follows from Lemma \ref{lem:Qv_conc}, the induction hypothesis $\mc{H}_{t}(e)$, and Lemma \ref{sqroots}. Thus we have bounded each term of \eqref{eq:deltt1t_sq_conc} as desired.

\textbf{(b) (i)} For brevity we define the shorthand notation $\mathbb{E}_{\phi_h}:=\mathbb{E}\left[\phi_h(\tau_0\tilde{\underline{Z}}_0,...,\tau_t\tilde{\underline{Z}}_t,\underline{\beta})\right]$,
and
\begin{equation}
a_i =\left(h^1_i,...,h^t_i,\sum_{r=0}^{t-1}\hat{\alpha}^t_r h^{r+1}_i+\tau_t^\perp Z_{t_i} + [\Delta_{t+1,t}]_i,\beta_{0_i}\right),\quad
c_i = \left(h^1_i,...,h^t_i,\sum_{r=0}^{t-1}\hat{\alpha}^t_r h^{r+1}_i + \tau_t^\perp Z_{t_i},\beta_{0_i}\right),
\label{eq:Htaici}
\end{equation}
for $i=1,...,N$. Hence $a$, $c$ are length-$N$ vectors with entries $a_i$, $c_i\in\mathbb{R}^{(t+2)}$.

Then, using the conditional distribution of $h^{t+1}$ from Lemma~\ref{lem:hb_cond} and Lemma~\ref{sums}, we have
\begin{align}
&P\left( \left\vert \frac{1}{N-2k}\sum_{i=k+1}^{N-k}\phi_h([h^1]_{i-k}^{i+k},...,[h^{t+1}]_{i-k}^{i+k},[\beta_0]_{i-k}^{i+k}) -\mathbb{E}_{\phi_h}\right\vert \geq \epsilon  \right) = P\left( \left\vert \frac{1}{N-2k}\sum_{i=k+1}^{N-k}\phi_h([a]_{i-k}^{i+k}) -\mathbb{E}_{\phi_h}\right\vert \geq \epsilon  \right)\nonumber\\
&\leq P\left(\left\vert \frac{1}{N-2k}\sum_{k+1}^{N-k} \left(\phi_h([a]_{i-k}^{i+k})-\phi_h([c]_{i-k}^{i+k})\right) \right\vert \geq \frac{\epsilon}{2}\right)
+ P\left(\left\vert \frac{1}{N-2k}\sum_{i=k+1}^{N-k}\phi_h([c]_{i-k}^{i+k})-\mathbb{E}_{\phi_h}\right\vert \geq \frac{\epsilon}{2}\right).
\label{eq:Htbi}
\end{align}
Label the two terms of \eqref{eq:Htbi} as $T_1$ and $T_2$. To complete the proof we show both are bounded by
$Ke^{-\kappa n \epsilon^2}$. First consider term $T_1$. Using the pseudo-Lipschitz property of $\phi_h$, we have
\be
\begin{split}
T_1&\leq P\left(\frac{1}{N-2k}\sum_{i=k+1}^{N-k} L (1 + \|[a]_{i-k}^{i+k}\| + \|[c]_{i-k}^{i+k}\|) \|[a-c]_{i-k}^{i+k}\| \geq \frac{\epsilon}{2}\right)\\
&\overset{(a)}{\leq}P\left( \frac{1}{N-2k} \left(\sum_{i=k+1}^{N-k} \left(1+\|[a]_{i-k}^{i+k}\|+\|[c]_{i-k}^{i+k}\|\right)^2 \right)^{1/2}\left(\sum_{i=k+1}^{N-k}\|[a-c]_{i-k}^{i+k}\|^2\right)^{1/2}  \geq \frac{\e}{2L}\right)\\
&\overset{(b)}{\leq}P\left( \left(\sum_{i=k+1}^{N-k} \frac{(1+\|[\Delta_{t+1,t}]_{i-k}^{i+k}\|^2+4\|[c]_{i-k}^{i+k}\|^2)}{N-2k} \right)^{1/2}\left(\sum_{i=k+1}^{N-k} \frac{\|[\Delta_{t+1,t}]_{i-k}^{i+k}\|^2}{N-2k}\right)^{1/2}  \geq \frac{\e}{2\sqrt{3}L}\right)\\
&\overset{(c)}{\leq} P\left( \left( 1 + \sqrt{2k+1} \frac{\|\Delta_{t+1,t}\|}{\sqrt{N-2k}} + 2\sqrt{2k+1} \frac{\|c\|}{\sqrt{N-2k}} \right) \left(\sqrt{2k+1} \frac{\|\Delta_{t+1,t}\|}{\sqrt{N-2k}}\right) \geq \frac{\e}{2\sqrt{3}L} \right).
\label{eq:bi_t1}
\end{split}
\ee
We note that in the above the notation $\|[a]_{i-k}^{i+k}\|^2$ means the sum of the $(2k+1) \times (t+2)$ squared elements of $[a]_{i-k}^{i+k}$ as defined in \eqref{eq:Htaici}.  Step $(a)$ follows by Cauchy-Schwarz, step $(b)$ uses $\|[a]_{i-k}^{i+k}\|\leq \|[c]_{i-k}^{i+k}\| + \|[\Delta_{t+1,t}]_{i-k}^{i+k}\|$, Lemma~\ref{lem:squaredsums}, and $\|[a-c]_{i-k}^{i+k}\|^2=\|[\Delta_{t+1,t}]_{i-k}^{i+k}\|^2$, and step $(c)$ uses the fact that for $a \in \mathbb{R}^N$, $\sum_{i=k+1}^{N-k} \|[a]_{i-k}^{i+k}\|^2 \leq (2k+1) \|a\|^2$.

From \eqref{eq:Htaici} and Lemma \ref{lem:squaredsums}, we have
\begin{equation}
\|c\|^2 \leq \sum_{r=0}^{t-1} \|h^{r+1}\|^2 + 2\sum_{r=0}^{t-1}\sum_{l=0}^{t-1}\hat{\alpha}_r\hat{\alpha}_l
(h^{r+1})^*h^{l+1} + 2(\tau_t^{\perp})^2\|Z_t\|^2 + \|\beta_0\|^2
\label{eq:Htupboundc}
\end{equation}
Denote the RHS of above by $\tilde{c}^2$.
From the induction hypothesis, $\frac{1}{N}(h^{r+1})^*h^{l+1}$ concentrates on $\breve{E}_{r,l}$ for $0\leq r,l\leq (t-1)$.
Using this in \eqref{eq:Htupboundc}, we will argue that $\frac{1}{N}\tilde{c}^2$ concentrates on
\begin{equation}
\mathbb{E}_{\tilde{c}}:= \sum_{l=0}^{t-1}\breve{E}_{l,l} + 2\sum_{r=0}^{t-1}\sum_{l=0}^{t-1}\hat{\alpha}^t_{r}\hat{\alpha}^t_l \breve{E}_{l,r} + 2(\tau_t^{\perp})^2 + \sigma_{\beta}^2 = \sum_{l=0}^{t-1}\tau_{l}^2 + 2\tau_t^2 + \sigma_{\beta}^2,
\label{eq:Ec_def}
\end{equation}
where the last equality is obtained using $\breve{E}_{l,l}=\tau_l^2$, and by rewriting the double sum as follows:
\begin{equation}
\sum_{r=0}^{t-1}\sum_{l=0}^{t-1}\hat{\alpha}^t_{r}\hat{\alpha}^t_{l}\breve{E}_{r,l}=(\hat{\alpha}^t)^*\breve{C}^t\hat{\alpha}^t = [\breve{E}_t^*(\breve{C}^t)^{-1}](\breve{C}^t)^{-1}[(\breve{C}^t)^{-1}\breve{E}_t]=\breve{E}_t^*(\breve{C}^t)^{-1}\breve{E}_t = \breve{E}_{t,t}-(\tau_t^{\perp})^2.
\label{eq:Htbitaut}
\end{equation}
Using Lemma~\ref{sums}, let $\e_t = \e / (t+t^2+2)$,
\begin{align}
&P \left( \left\vert \frac{\tilde{c}^2}{N}-\mathbb{E}_{\tilde{c}} \right\vert \geq \epsilon \right)
\leq \sum_{l=0}^{t-1} P\left(\left\vert \frac{\|h^{l+1}\|^2}{N} - \tau_l^2  \right\vert \leq \e_t\right)
+ P\left(\left\vert \frac{\|\beta_0\|^2}{N} - \sigma_{\beta}^2 \right\vert \geq \e_t\right)\nonumber\\
&+ \sum_{r=0}^{t-1}\sum_{l=0}^{t-1} P\left(\left\vert \frac{(h^{r+1})^*h^{l+1}}{N} - \breve{E}_{r,l} \right\vert \geq \frac{\e_t}{2\hat{\alpha}^t_r\hat{\alpha}^t_l} \right) + P\left(\left\vert \frac{\|Z_t\|^2}{N} - 1 \right\vert \geq \frac{\e_t}{2(\tau_t^{\perp})^2}\right) \overset{(a)}{\leq} K e^{-\kappa N \epsilon^2}.
\label{eq:Htbi_c}
\end{align}
In step $(a)$, we used induction hypothesis $\mathcal{H}_1(d) - \mathcal{H}_t(d)$, result \eqref{eq:beta0_assumption}, %our assumptions on $\beta_0$, 
and Lemma~\ref{subexp}.

Therefore, using \eqref{eq:bi_t1}, term $T_1$ of \eqref{eq:Htbi} can be bounded as
\begin{align*}
T_1&\leq P\left( \left( 1+ \frac{\|\Delta_{t+1,t}\|}{\sqrt{N}} + 2 \frac{\tilde{c}}{\sqrt{N}} \right) \cdot \frac{\|\Delta_{t+1,t}\|}{\sqrt{N}} \geq \frac{\epsilon(1-2k/N)}{2\sqrt{3}(2k+1)L} \right)\\
&= P\left( \left( 1+ \frac{\|\Delta_{t+1,t}\|}{\sqrt{N}} + 2 \left(\frac{\tilde{c}}{\sqrt{N}}-\mathbb{E}_{\tilde{c}}^{1/2} \right) + 2\mathbb{E}_{\tilde{c}}^{1/2} \right) \cdot \frac{\|\Delta_{t+1,t}\|}{\sqrt{N}}  \geq \frac{\epsilon(1-2k/N)}{2\sqrt{3}(2k+1)L} \right)\\
&\leq P\left(\left\vert \frac{\tilde{c}}{\sqrt{N}} -\mathbb{E}_{\tilde{c}}^{1/2} \right\vert\geq \epsilon \right)
+P\left(\frac{\|\Delta_{t+1,t}\|}{\sqrt{N}} \geq \frac{\epsilon(1-2k/N)}{2\sqrt{3}(2k+1)L(4+2\mathbb{E}_{\tilde{c}}^{1/2})} \right)\overset{(a)}{\leq} K e^{- \kappa N \epsilon^2}.
\end{align*}
In step $(a)$, we used \eqref{eq:Htbi_c}, $\mathcal{H}_{t+1}(a)$, and Lemma~\ref{sqroots}.

Next consider term $T_2$ of \eqref{eq:Htbi}. Define function $\tilde{\phi}_{h_i}:\mathbb{R}^{2k+1}\rightarrow\mathbb{R}$ as
\begin{equation}
\tilde{\phi}_{h_i}(\underline{z}):=\phi_h([h^1]_{i-k}^{i+k},...,[h^t]_{i-k}^{i+k},\sum_{r=0}^{t-1}\hat{\alpha}^t_r[h^{r+1}]_{i-k}^{i+k}+\tau_t^{\perp}\underline{z},[\beta_0]_{i-k}^{i+k})\in PL(2),
\end{equation}
for each $i=k+1,...,N-k$, where we treat all arguments except $\underline{z}$ as fixed. Let
$\underline{Z}\in\mathbb{R}^{2k+1}$ be a random vector of i.i.d. $\mc{N}(0,1)$ entries, and assume that $\underline{Z}$ is independent of $\tilde{\underline{Z}}_0,...,\tilde{\underline{Z}}_{t-1}$, then 
\begin{align*}
T_2 &= P\left(\left\vert \frac{1}{N-2k}\sum_{i=k+1}^{N-k} \tilde{\phi}_{h_i}([Z_t]_{i-k}^{i+k}) - \mathbb{E}_{\phi_h}  \right\vert\geq \frac{\epsilon}{2}\right)\\
& \leq P\left(\left\vert  \frac{1}{N-2k}\sum_{i=k+1}^{N-k}\left(\tilde{\phi}_{h_i}([Z_t]_{i-k}^{i+k}) - \mathbb{E}_{\underline{Z}}\left[\tilde{\phi}_{h_i}(\underline{Z})\right]\right)   \right\vert\geq \frac{\epsilon}{4}\right)
+ P\left(\left\vert \frac{1}{N-2k}\sum_{i=k+1}^{N-k} \mathbb{E}_{\underline{Z}}\left[\tilde{\phi}_{h_i}(\underline{Z})\right] - \mathbb{E}_{\phi_h}   \right\vert \geq \frac{\epsilon}{4}\right).
\end{align*}
The first term on the RHS of the above has the desired bound using Lemma~\ref{lem:PL_overlap_gauss_conc}. We now bound the second term.
\begin{align}
&P\left(\left\vert \frac{1}{N-2k}\sum_{i=k+1}^{N-k} \mathbb{E}_{\underline{Z}}\left[\tilde{\phi}_{h_i}(\underline{Z})\right] -\mathbb{E}_{\phi_h} \right\vert \geq \frac{\epsilon}{4}\right)\nonumber\\
&=P\left(\left\vert \frac{1}{N-2k}\sum_{i=k+1}^{N-k} \mathbb{E}_{\underline{Z}}\left[\phi_h\left( [h^1]_{i-k}^{i+k},...,[h^t]_{i-k}^{i+k},\sum_{r=0}^{t-1}\hat{\alpha}^t_r[h^{r+1}]_{i-k}^{i+k} + \tau_t^{\perp}\underline{Z},[\beta_0]_{i-k}^{i+k} \right)\right]  - \mathbb{E}_{\phi_h}\right\vert \geq \frac{\epsilon}{4}\right)\nonumber\\
&\overset{(a)}{=} P\left(\left\vert \frac{1}{N-2k}\sum_{i=k+1}^{N-k}\hat{\phi}_h\left([h^1]_{i-k}^{i+k},...,[h^t]_{i-k}^{i+k},[\beta_0]_{i-k}^{i+k}\right) - \mathbb{E}_{\phi_h} \right\vert \geq \frac{\epsilon}{4}\right).
\label{eq:HtbiT1}
\end{align}
Step $(a)$ uses the function $\hat{\phi}_h:\mathbb{R}^{(2k+1)(t+1)}\rightarrow\mathbb{R}$ defined as
\begin{equation*}
\hat{\phi}_h\left([h^1]_{i-k}^{i+k},...,[h^t]_{i-k}^{i+k},[\beta_0]_{i-k}^{i+k}\right):= \mathbb{E}_{\underline{Z}}\left[\phi_h\left( [h^1]_{i-k}^{i+k},...,[h^t]_{i-k}^{i+k},\sum_{r=0}^{t-1}\hat{\alpha}^t_r[h^{r+1}]_{i-k}^{i+k} + \tau_t^{\perp}\underline{Z},[\beta_0]_{i-k}^{i+k} \right)\right], 
\end{equation*}
which is $PL(2)$ by Lemma~\ref{lem:PLexamples}. We will now show that
\begin{equation}
\mathbb{E}\left[ \hat{\phi}_h\left(\tau_0\tilde{\underline{Z}}_0,...,\tau_{t-1}\tilde{\underline{Z}}_{t-1},\underline{\beta}\right)\right]
=\mathbb{E}\left[\phi_h(\tau_0\tilde{\underline{Z}}_0,...,\tau_{t}\tilde{\underline{Z}}_{t},\underline{\beta})\right]
=\mathbb{E}_{\phi_h},
\label{eq:HtbiEphi}
\end{equation}
and then the probability in \eqref{eq:HtbiT1} can be upper bounded by $Ke^{-\kappa n \epsilon^2}$ using the inductive hypothesis $\mathcal{H}_t(b)$. We have
\begin{equation*}
\mathbb{E}\left[\hat{\phi}_h\left(\tau_0\tilde{\underline{Z}}_0,...,\tau_{t-1}\tilde{\underline{Z}}_{t-1},\underline{\beta}\right)\right] = \mathbb{E}\left[\phi_h\left( \tau_0\tilde{\underline{Z}}_0,...,\tau_{t-1}\tilde{\underline{Z}}_{t-1},\sum_{r=0}^{t-1} \hat{\alpha}_r^t\tau_r\tilde{\underline{Z}}_r + \tau_t^{\perp}\underline{Z},\underline{\beta} \right) \right],
\end{equation*}
where we recall that $\underline{Z}$ is independent of $\tilde{\underline{Z}}_0,...,\tilde{\underline{Z}}_{t-1}$.
To prove \eqref{eq:HtbiEphi}, we need to show that 
\begin{equation*}
\left(\tau_0\tilde{\underline{Z}}_0,...,\tau_{t-1}\tilde{\underline{Z}}_{t-1},\tau_{t}\tilde{\underline{Z}}_{t},\underline{\beta}\right)
\overset{d}{=} \left( \tau_0\tilde{\underline{Z}}_0,...,\tau_{t-1}\tilde{\underline{Z}}_{t-1},\sum_{r=0}^{t-1} \hat{\alpha}_r^t\tau_r\tilde{\underline{Z}}_r + \tau_t^{\perp}\underline{Z},\underline{\beta} \right).
%\sum_{r=0}^{t-1} \hat{\alpha}_r^t\tau_r\tilde{\underline{Z}}_r + \tau_t^{\perp}\underline{Z} 
%\overset{d}{=}\tau_t \tilde{\underline{Z}}_t.
\end{equation*}
We do this by demonstrating that: 
(\emph{i}) the covariance matrix of $\sum_{r=0}^{t-1} \hat{\alpha}_r^t\tau_r\tilde{\underline{Z}}_{r} + \tau_t^{\perp}\underline{Z}$ is $\tau_t^2\textsf{I}$; and 
(\emph{ii}) the covariance $\text{Cov}\left(\tau_l \tilde{\underline{Z}}_{l} ,\sum_{r=0}^{t-1} \hat{\alpha}_r^t\tau_r\tilde{\underline{Z}}_{r} + \tau_t^{\perp}\underline{Z}\right)=\text{Cov}\left(\tau_l\tilde{\underline{Z}}_l, \tau_t\tilde{\underline{Z}}_t \right)= \breve{E}_{l,t}\textsf{I}$, for
$0 \leq l \leq (t-1)$. First consider (\emph{i}). The $(i,j)^{th}$ entry of the covariance matrix is
\begin{align*}
&\mathbb{E}\left[ \left(\sum_{r=0}^{t-1} \hat{\alpha}_r^t\tau_r\tilde{\underline{Z}}_{r,i} + \tau_t^{\perp}\underline{Z}_i\right)\left(\sum_{l=0}^{t-1} \hat{\alpha}_{l}^t\tau_l\tilde{\underline{Z}}_{l,j} + \tau_t^{\perp}\underline{Z}_j\right) \right] = \sum_{r=0}^{t-1}\sum_{l=0}^{t-1}\hat{\alpha}^t_r\hat{\alpha}^t_l \tau_r \tau_l \mathbb{E}\left[\tilde{\underline{Z}}_{r,i}\tilde{\underline{Z}}_{l,j}\right] + (\tau_t^\perp)^2\mathbb{E}\left[\underline{Z}_i \underline{Z}_j\right]\\
& \qquad\qquad\qquad \overset{(a)}{=} \begin{cases}
\sum_{r=0}^{t-1}\sum_{l=0}^{t-1}\hat{\alpha}^t_r\hat{\alpha}^t_l\breve{E}_{r,l} + (\tau_t^\perp)^2\overset{(b)}{=}\tau_t^2,\quad \text{if } i=j\\
0, \quad \text{otherwise}
\end{cases},
\end{align*}
where step $(a)$ follows from \eqref{eq:tildeZcov} and step $(b)$ follows from \eqref{eq:Htbitaut}.
Therefore, we have showed that the covariance matrix is $\tau_t^2\textsf{I}$. Next consider (\emph{ii}), for any $0\leq l \leq (t-1)$, the $(i,j)^{th}$ entry of the covariance matrix is
\begin{align*}
\mathbb{E} \left[ \tau_l \tilde{\underline{Z}}_{l,i} \sum_{r=0}^{t-1} \hat{\alpha}^t_{r} \tau_r \tilde{\underline{Z}}_{r,j} + \tau_t^{\perp} \underline{Z}_j \right] 
=\sum_{r=0}^{t-1}\hat{\alpha}^t_{r}\tau_l\tau_r \mathbb{E}\left[\tilde{\underline{Z}}_{l,i}\tilde{\underline{Z}}_{r,j}\right]\overset{(a)}{=} 
\begin{cases}
\sum_{r=0}^{t-1} \breve{E}_{l,r} \hat{\alpha}^t_{r},\quad\text{if } i=j\\
0,\quad \text{otherwise}
\end{cases},
\end{align*}
where step $(a)$ follows from \eqref{eq:tildeZcov}.
Moreover, notice that $\sum_{r=0}^{t-1} \breve{E}_{l,r}\hat{\alpha}^t_{r} = [\breve{C}^t\hat{\alpha}^t]_{l+1} = \breve{E}_{l,t},$ where the first equality holds because the required sum is the inner product of the $(l+1)^{th}$ row of $\breve{C}^t$ and $\hat{\alpha}^t$, and the second inequality follows the definition of $\hat{\alpha}^t$ in \eqref{eq:hatalph_hatgam_def}.

\textbf{(c)} We first show the concentration of $(h^{t+1})^*\beta_0/n$. Note, $\left\vert \sum_{i=1}^N h^{t+1}_i \beta_{0_i} \right\vert\leq \left\vert \sum_{i=1}^{N/2} h^{t+1}_i \beta_{0_i}  \right\vert + \left\vert \sum_{i=N/2+1}^{N} h^{t+1}_i \beta_{0_i}  \right\vert$. Then we have
\begin{align*}
P\left(\left\vert \frac{(h^{t+1})^*\beta_{0}}{n} \right\vert \geq \epsilon \right)
&\overset{(a)}{\leq} P\left(\left\vert \sum_{i=1}^{N/2} \frac{h^{t+1}_i\beta_{0_i}}{N/2} \right\vert \geq \delta\epsilon\right)
+ P\left(\left\vert \sum_{i=N/2+1}^{N} \frac{h^{t+1}_i\beta_{0_i}}{N/2} \right\vert \geq \delta\epsilon\right)\overset{(b)}{\leq} 2Ke^{-\kappa N \delta^2\epsilon^2}, 
\end{align*}
where step $(a)$ follows Lemma~\ref{sums} and step $(b)$ follows $\mathcal{H}_{t+1}(b)$ by considering $PL(2)$ functions $\phi_{1,h}, \phi_{2,h}:\mathbb{R}^{2(2k+1)}\rightarrow\mathbb{R}$ defined as $\phi_{1,h}(\underline{x},\underline{y}):=\underline{x}_1\underline{y}_1$ and $\phi_{2,h}(\underline{x},\underline{y}):=\underline{x}_{2k+1}\underline{y}_{2k+1}$.  Note that $\mathbb{E}[\tau_t \underline{\tilde{Z}}_{t_1} \underline{\beta}_1] = 0$.

We now show the concentration of $(h^{t+1})^*q^0/n$. Rewrite $(h^{t+1})^*q^0$ as
\begin{equation*}
(h^{t+1})^*q^0 = \sum_{i=k+1}^{N-k} h^{t+1}_i f_0(\underline{0},[\beta_0]_{i-k}^{i+k}) + \sum_{i=1}^k h^{t+1}_i \beta_{0_i} + \sum_{i=N-k+1}^{N} h^{t+1}_i \beta_{0_i}.
\end{equation*} 
Then we have
\begin{align*}
&P\left(\left\vert \frac{(h^{t+1})^*q^0}{n} \right\vert \geq \epsilon \right) \\
&\overset{(a)}{\leq} P\left(\left\vert \sum_{i=k+1}^{N-k} \frac{h^{t+1}_i f_0(\underline{0},[\beta_0]_{i-k}^{i+k})}{N-2k} \right\vert \geq \frac{n\epsilon}{3(N-2k)}\right)  + P\left(\left\vert \sum_{i=1}^{k} \frac{h^{t+1}_i \beta_{0_i} }{k} \right\vert \geq \frac{n\epsilon}{3k}\right)+ P\left(\left\vert \sum_{i=N-k+1}^{N} \frac{h^{t+1}_i \beta_{0_i} }{k} \right\vert \geq \frac{n\epsilon}{3k}\right)\\
& \overset{(b)}{\leq} K \exp\left\{-\frac{\kappa n^2 \epsilon^2}{9(N-2k)}\right\} + K \exp\left\{-\frac{\kappa n^2 \epsilon^2}{9k}\right\} +K \exp\left\{-\frac{\kappa n^2 \epsilon^2}{9k}\right\},
\end{align*}
where step $(a)$ follows Lemma~\ref{sums} and step $(b)$ follows $\mathcal{H}_{t+1}(b)$ by considering $PL(2)$ functions
$\phi_{1,h},\phi_{2,h},\phi_{3,h}:\mathbb{R}^{2(2k+1)}\rightarrow\mathbb{R}$ defined as
$\phi_{1,h}(\underline{x}, \underline{y}):=\underline{x}_{k+1}f_0(\underline{0},\underline{y})$, 
$\phi_{2,h}(\underline{x},\underline{y}):=\underline{x}_{1}\underline{y}_{1}$, and
$\phi_{3,h}(\underline{x},\underline{y}):=\underline{x}_{2k+1}\underline{y}_{2k+1}$.  Note that $\mathbb{E}\{\tau_t [\underline{\tilde{Z}}_{t}]_{k+1} f(\underline{0}, \underline{\beta})\} = 0$.

\textbf{(d)} Similar to $\mathcal{H}_{t+1}(c)$,  we split the inner product $(h^{r+1})^*h^{t+1}$ and then from Lemma~\ref{sums}, 
\begin{align*}
P\left(\left\vert \frac{(h^{r+1})^*h^{t+1}}{N} - \breve{E}_{r,t} \right\vert \geq \epsilon \right) &\leq P\left(\left\vert  \sum_{i=1}^{N/2} \frac{h^{r+1}_i h^{t+1}_i}{N/2} - \breve{E}_{r,t} \right\vert \geq \epsilon \right)
+  P\left(\left\vert  \sum_{i=N/2+1}^{N} \frac{h^{r+1}_i h^{t+1}_i}{N/2} - \breve{E}_{r,t} \right\vert \geq \epsilon \right)\\
&\overset{(a)}{\leq} K \exp\left\{-\frac{\kappa N \epsilon^2}{2}\right\} + K \exp\left\{-\frac{\kappa N \epsilon^2}{2}\right\},
\end{align*}
where step $(a)$ follows $\mathcal{H}_{t+1}(b)$ by considering $PL(2)$ functions $\phi_{1,h},\phi_{2,h}:\mathbb{R}^{2(2k+1)}\rightarrow\mathbb{R}$ defined as $\phi_{1,h}(\underline{x},\underline{y}):=\underline{x}_1\underline{y}_1$ and
$\phi_{1,h}(\underline{x},\underline{y}):=\underline{x}_{2k+1}\underline{y}_{2k+1}$.

\textbf{(e)} We first show the concentration of $(q^0)^*q^{t+1}/n$. Recall from \eqref{eq:Edef}, for $0 \leq r, s \leq t+1$,
\begin{equation}
\delta \tilde{E}_{r,s} = \frac{N-2k}{N}\mathbb{E}[f_r(\tau_{r-1}\tilde{\underline{Z}}_{r-1},\underline{\beta})f_s(\tau_{s-1}\tilde{\underline{Z}}_{s-1},\underline{\beta})]+\frac{2k}{N}\sigma_{\beta}^2.
\end{equation}
Then splitting $(q^0)^*q^{t+1}$ as in $\mc{H}_1 (e)$, we have
\begin{align*}
&P\left(\left\vert \frac{(q^0)^*q^{t+1}}{n} - \tilde{E}_{0,t+1} \right\vert \geq \epsilon\right)\\
&\overset{(a)}{\leq} P\left(\left\vert  \frac{1}{N-2k}\sum_{i=k+1}^{N-k} f_0(\underline{0},[\beta_0]_{i-k}^{i+k})f_{t+1}([h^{t+1}]_{i-k}^{i+k},[\beta_0]_{i+k}^{i-k}) - \mathbb{E}[f_0(\underline{0},\underline{\beta})f_{t+1}(\tau_t\tilde{\underline{Z}}_t,\underline{\beta})] \right\vert \geq \frac{n\epsilon}{3(N-2k)}\right)\\
& \qquad + P\left( \left\vert \sum_{i=1}^k \frac{(\beta_{0_i})^2}{k} - \sigma_{\beta}^2  \right\vert \geq \frac{n\epsilon}{3k}\right) 
+ P\left( \left\vert \sum_{i=N-k+1}^N \frac{(\beta_{0_i})^2}{k} - \sigma_{\beta}^2  \right\vert \geq \frac{n\epsilon}{3k} \right) \\
& \overset{(b)}{\leq} K\exp\left\{-\frac{\kappa n^2 \e^2}{9(N-2k)}\right\} + K\exp\left\{-\frac{\kappa n^2 \e^2}{9k}\right\} + K\left\{-\frac{\kappa n^2 \e^2}{9k}\right\}, 
\end{align*}
where step $(a)$ follows Lemma~\ref{sums} and step $(b)$ follows $\mathcal{H}_{t+1}(b)$ by considering
the functions
$\phi_{1,h}, \phi_{2,h}, \phi_{3,h}:\mathbb{R}^{2(2k+1)}\rightarrow\mathbb{R}$ defined as
$\phi_{1,h}(\underline{x},\underline{y}):=f_0(\underline{0},\underline{y})f_{t+1}(\underline{x},\underline{y})$,
$\phi_{2,h}(\underline{x},\underline{y}):=\underline{y}_1^2$,
and $\phi_{3,h}(\underline{x},\underline{y}):=\underline{y}_{2k+1}^2$, which are PL(2) by Lemma \ref{lem:Lprods}.

Concentration of $(q^{r+1})^*q^{t+1}/n$ can be obtained similarly by representing
\begin{equation}
(q^{r+1})^*q^{t+1}=\sum_{i=k+1}^{N-k} f_{r+1}([h^{r+1}]_{i-k}^{i+k},[\beta_0]_{i-k}^{i+k}) f_{t+1}([h^{t+1}]_{i-k}^{i+k},[\beta_0]_{i-k}^{i+k}) + \sum_{i=1}^{k} \beta_{0_i}^2 + \sum_{i=N-k+1}^{N} \beta_{0_i}^2,
\end{equation}
and using $\mathcal{H}_{t+1}(b)$ as above.

\textbf{(f)} The concentration of $\lambda_t$ around $\hat{\lambda}_t$ follows $\mathcal{H}_{t+1}(b)$ applied to
the function $\phi_h([h^{t+1}]_{i-k}^{i+k},[\beta_0]_{i-k}^{i+k}):=f_{t+1}'([h^{t+1}]_{i-k}^{i+k},[\beta_0]_{i-k}^{i+k})$.
Next, for $r \leq t$, splitting $(h^{t+1})^*q^{r+1}$ as in $\mc{H}_1 (f)$,
\begin{align*}
&P\left(\left\vert \frac{(h^{t+1})^*q^{r+1}}{n} - \hat{\lambda}_{r+1}\breve{E}_{r,t} \right\vert\geq \epsilon\right)\\
&\overset{(a)}{\leq}P\left( \left\vert \frac{1}{N-2k} \left[\sum_{i=k+1}^{N-k} h^{t+1}_i f_{r+1}([h^{r+1}]_{i-k}^{i+k},[\beta_0]_{i-k}^{i+k}) - n\hat{\lambda}_{r+1} \breve{E}_{r,t} \right] \right\vert \geq \frac{n\epsilon}{3(N-2k)}\right)\\
& \qquad +P\left( \left\vert  \sum_{i=1}^{k} \frac{h^{t+1}_i \beta_{0_i}}{k}  \right\vert   \geq \frac{n\epsilon}{3k}\right)
+ P\left( \left\vert \frac{1}{k} \sum_{i=N-k+1}^{N} \frac{h^{t+1}_i \beta_{0_i}}{k}  \right\vert   \geq \frac{n\epsilon}{3k}\right)\\
&\overset{(b)}{\leq} K \exp\left\{-\frac{\kappa n^2 \epsilon^2}{9(N-2k)}\right\} + K \exp\left\{-\frac{\kappa n^2 \epsilon^2}{9k}\right\} + K \exp\left\{-\frac{\kappa n^2 \epsilon^2}{9k}\right\},
\end{align*}
where step $(a)$ follows from Lemma \ref{sums} and step $(b)$ from $\mathcal{H}_{t+1}(b)$ by considering $PL(2)$ functions $\phi_{1,h},\phi_{2,h},\phi_{3,h}:\mathbb{R}^{2(2k+1)}\rightarrow\mathbb{R}$ defined as
$\phi_{1,h}(\underline{x},\underline{y}):=\underline{x}_{k+1}f_{r+1}(\underline{x},\underline{y})$,
$\phi_{2,h}(\underline{x},\underline{y}):=\underline{x}_1\underline{y}_1$,
$\phi_{3,h}(\underline{x},\underline{y}):=\underline{x}_{2k+1}\underline{y}_{2k+1}$. The result follows
by noticing $\mathbb{E}[\tau_t\tilde{\underline{Z}}_{t_i}\underline{\beta}_{i}]=0$, for all $i \in [N]$, 
and 
\begin{equation*}
\mathbb{E}[\tau_t\tilde{\underline{Z}}_{t_{k+1}}f_{r+1}(\tau_r\tilde{\underline{Z}}_r,\underline{\beta})] = \frac{n}{N-2k}\hat{\lambda}_{r+1} \breve{E}_{r,t},
\end{equation*}
which follows by Stein's Lemma given in Fact~\ref{fact:stein}. We demonstrate this in the following. Think of a function $\tilde{f}:\mathbb{R}\rightarrow\mathbb{R}$ defined as $\tilde{f}(x):=f_{r+1}(\tau_r\tilde{\underline{Z}}_{r_1},...,\tilde{\underline{Z}}_{r_k},x,\tilde{\underline{Z}}_{r_{k+2}},...,\tilde{\underline{Z}}_{r_{2k+1}},\underline{\beta})$. Then,
\begin{equation*}
\mathbb{E}[\tau_t\tilde{\underline{Z}}_{t_{k+1}}f_{r+1}(\tau_r\tilde{\underline{Z}}_r,\underline{\beta})]
=\mathbb{E}[\tau_t\tilde{\underline{Z}}_{t_{k+1}}\tilde{f}(\tau_r\tilde{\underline{Z}}_{r_{k+1}})]
\overset{(a)} = \tau_t\tau_r\mathbb{E}[\tilde{\underline{Z}}_{t_{k+1}}\tilde{\underline{Z}}_{r_{k+1}}]\mathbb{E}[\tilde{f}'(\tau_r\tilde{\underline{Z}}_{r})] \overset{(b)} =\frac{n\hat{\lambda}_{r+1} \breve{E}_{r,t}}{N-2k}.
\end{equation*}
Step $(a)$ applies Stein's Lemma, Fact \ref{fact:stein}.  Step $(b)$ uses the facts that $\tau_t\tau_r\mathbb{E}[\tilde{\underline{Z}}_{t_{k+1}}\tilde{\underline{Z}}_{r_{k+1}}]=\breve{E}_{r,t}$ from \eqref{eq:tildeZcov} and that the derivative of $\tilde{f}$ is the derivative of $f_t$ with respect to the middle coordinate of the first argument, along with the definition of $\hat{\lambda}_{r+1}$ in \eqref{eq:hatlambda_hatxi}. Therefore, we have obtained the desired result. 

\textbf{(g) (h)} The proof of $\mc{H}_{t+1} (g), (h)$ is similar to the proof of $\mc{B}_{t} (g), (h)$ in \cite{RushV16}.

\appendices
%\section{Appendices}
% !TEX root = main.tex

\section{Concentration Lemmas}
\label{app:conc_lemma}
In the following $\e >0$ is assumed to be a generic constant, with additional conditions specified whenever needed.  The proof of the Lemmas in this section can be found in \cite{RushV16}.

%%%State Hoeffding's inequality

%\begin{applem}[Hoeffding's inequality]
%\label{lem:hoeff_lem}
%If $X_1, \ldots, X_n$ are bounded random variables such that $a_i \leq X_i \leq b_i$, then for $\nu = 2\left[\sum_{i} (b_i -a_i)^2\right]^{-1}$
%\begin{align*}
%P\left( \frac{1}{n}\sum_{i=1}^n (X_i - \expec X_i) \geq \e \right) &\leq e^{ -\nu n^2 \e^2}, \, \, P\left( \abs{ \frac{1}{n}\sum_{i=1}^n (X_i - \expec X_i)} \geq \e \right) \leq 2e^{ -\nu n^2 \e^2}.
%\end{align*}
%\end{applem}

%%%%%%%%%
\begin{applem}[Concentration of Sums]
\label{sums}
If random variables $X_1, \ldots, X_M$ satisfy $P(\abs{X_i} \geq \e) \leq e^{-n\kappa_i \e^2}$ for $1 \leq i \leq M$, then 
\ben
P\left(\abs{  \sum_{i=1}^M X_i } \geq \e\right) \leq \sum_{i=1}^M P\left(|X_i| \geq \frac{\e}{M}\right) \leq M e^{-n (\min_i \kappa_i) \e^2/M^2}.
\een
\end{applem}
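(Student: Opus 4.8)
The plan is to bound the tail event $\{\abs{\sum_{i=1}^M X_i} \geq \e\}$ by a union of the individual tail events via a pigeonhole argument, then apply the union bound and the hypothesis. First I would observe that by the triangle inequality $\abs{\sum_{i=1}^M X_i} \leq \sum_{i=1}^M \abs{X_i}$, so on the event $\{\abs{\sum_{i=1}^M X_i} \geq \e\}$ we must have $\sum_{i=1}^M \abs{X_i} \geq \e$; this forces $\abs{X_i} \geq \e/M$ for at least one index $i$, since otherwise every term would be strictly less than $\e/M$ and the sum strictly less than $\e$. Hence
\[
\left\{ \abs{\sum_{i=1}^M X_i} \geq \e \right\} \subseteq \bigcup_{i=1}^M \left\{ \abs{X_i} \geq \frac{\e}{M} \right\}.
\]

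Next I would take probabilities of both sides and apply the union bound, which immediately gives the first claimed inequality,
\[
P\left( \abs{\sum_{i=1}^M X_i} \geq \e \right) \leq \sum_{i=1}^M P\left( \abs{X_i} \geq \frac{\e}{M} \right).
\]

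Finally, I would invoke the hypothesis $P(\abs{X_i} \geq \e') \leq e^{-n\kappa_i (\e')^2}$ with $\e' = \e/M$ for each $i$, which yields $P(\abs{X_i} \geq \e/M) \leq e^{-n\kappa_i \e^2/M^2} \leq e^{-n(\min_{j}\kappa_j)\e^2/M^2}$, and then sum these $M$ bounds to conclude
\[
P\left( \abs{\sum_{i=1}^M X_i} \geq \e \right) \leq M e^{-n(\min_i \kappa_i)\e^2/M^2}.
\]

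There is no substantive obstacle here; the only points needing a moment's care are the pigeonhole containment of events in the first display and the monotonicity step replacing each $\kappa_i$ by $\min_i \kappa_i$ in the exponent. Since the hypothesis is assumed for all positive arguments, it applies in particular at $\e/M$, so no additional restriction on the range of $\e$ is required.
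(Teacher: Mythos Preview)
Your proof is correct and is exactly the standard argument: triangle inequality plus pigeonhole gives the event inclusion, then the union bound and the hypothesis at $\e/M$ finish it. The paper does not write out its own proof here but simply cites \cite{RushV16}; your write-up is the natural elementary argument one expects behind that citation.
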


%\begin{proof}
%Notice that if $|X_i| < \frac{\e}{n}$ for all $i \in [n]$ then $|\sum_{i=0}^n X_i| < \e$ and therefore
%\ben
%P\left(|\sum_{i=0}^n X_i| \geq \e\right) \leq P\left(|X_i| \geq \frac{\e}{n} \text{ for some } i\right) \leq \sum_{i=0}^n P\left(|X_i| \geq \frac{\e}{n}\right).
%\een
%\end{proof}

\begin{applem}[Concentration of Products]
\label{products} 
For random  variables $X,Y$ and non-zero constants $c_X, c_Y$, if
\ben
P\left( \left | X- c_X \right |  \geq \e \right) \leq K e^{-\kappa n \e^2}, \quad \text{ and } \quad P\left( \left | Y- c_Y \right |  \geq \e \right) \leq K e^{-\kappa n \e^2},
\een
then the probability  $P\left( \left | XY - c_Xc_Y \right |  \geq \e \right)$ is bounded by 
\begin{align*}  
&  P\left( \left | X- c_X \right |  \geq \min\left( \sqrt{\frac{\e}{3}}, \frac{\e}{3 c_Y} \right) \right)  +  
P\left( \left | Y- c_Y \right |  \geq \min\left( \sqrt{\frac{\e}{3}}, \frac{\e}{3 c_X} \right) \right) \\
& \qquad \qquad \leq 2K \exp\left\{-\frac{\kappa n \e^2}{9\max(1, c_X^2, c_Y^2)}\right\}.
\end{align*}
\end{applem}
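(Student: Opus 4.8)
The plan is to exploit the algebraic identity
$XY - c_X c_Y = (X - c_X)(Y - c_Y) + c_Y(X - c_X) + c_X(Y - c_Y)$
to reduce the event $\{|XY - c_Xc_Y| \geq \e\}$ to a union of one-sided deviation events for $X$ and for $Y$ separately, and then to conclude with the hypotheses together with a union bound (Lemma~\ref{sums} with $M=2$). Since $c_X, c_Y \neq 0$, the thresholds $\delta_X := \min\{\sqrt{\e/3},\, \e/(3|c_Y|)\}$ and $\delta_Y := \min\{\sqrt{\e/3},\, \e/(3|c_X|)\}$ are well-defined and strictly positive, and these are exactly the thresholds appearing in the first displayed bound of the lemma.

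The first step is to establish the inclusion $\{|XY - c_Xc_Y| \geq \e\} \subseteq \{|X - c_X| \geq \delta_X\} \cup \{|Y - c_Y| \geq \delta_Y\}$. On the complement of the right-hand side, i.e.\ when $|X-c_X| < \delta_X$ and $|Y-c_Y|<\delta_Y$, the triangle inequality applied to the identity above gives $|XY - c_Xc_Y| \leq \delta_X\delta_Y + |c_Y|\delta_X + |c_X|\delta_Y$. By the choice of $\delta_X,\delta_Y$, each of the three summands is at most $\e/3$: for the first, $\delta_X\delta_Y \leq (\sqrt{\e/3})^2 = \e/3$; for the second, $|c_Y|\delta_X \leq |c_Y|\cdot\e/(3|c_Y|) = \e/3$; for the third, $|c_X|\delta_Y \leq \e/3$. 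Hence $|XY - c_Xc_Y| < \e$ there, which proves the inclusion; a union bound then yields the first displayed inequality.

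The last step is to substitute $\delta_X,\delta_Y$ into the hypotheses, giving $P(|X - c_X| \geq \delta_X) \leq K e^{-\kappa n \delta_X^2}$ and the analogous bound for $Y$, and then to lower-bound $\delta_X^2, \delta_Y^2$ by $\e^2/\bigl(9\max\{1, c_X^2, c_Y^2\}\bigr)$. This final estimate is the only place needing a little care: when $\delta_X = \e/(3|c_Y|)$ it is immediate since $c_Y^2 \leq \max\{1, c_X^2, c_Y^2\}$, and when $\delta_X = \sqrt{\e/3}$ one uses $\e/3 \geq \e^2/9$, which holds because $\e < 3$ (indeed $\e \in (0,1)$ throughout the paper). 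Summing the two exponential bounds yields $2K e^{-\kappa n \e^2/(9\max\{1, c_X^2, c_Y^2\})}$, as claimed. The argument is entirely elementary; there is no genuine obstacle beyond this bookkeeping with the minima and the tacit use of the standing assumption on the range of $\e$.
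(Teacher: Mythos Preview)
Your proof is correct and is the standard argument for this result; the paper does not actually give its own proof here but instead refers to \cite{RushV16}, where essentially the same decomposition and union-bound reasoning is used. One small notational point: the lemma as stated writes $\e/(3c_Y)$ rather than $\e/(3|c_Y|)$, but your reading with absolute values is clearly the intended one, and your handling of the two cases for $\delta_X^2$ (including the tacit use of $\e\in(0,1)$ when $\delta_X=\sqrt{\e/3}$) is exactly right.
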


%\begin{proof}
%The probability of interest, $P\left( \left | XY - c_Xc_Y \right |  \geq \e \right) $, equals
%\[P\left( \left | (X -c_X)(Y-c_Y) +  (X-c_X)c_Y  + (Y-c_Y)c_X \right |  \geq \e \right). \]
%The result then follows by noting that if $ \left | X- c_X \right |  \leq \min( \sqrt{\frac{\e}{3}}, \frac{\e}{3 c_Y})$ and $ \left | Y- c_Y \right |  \leq \min ( \sqrt{\frac{\e}{3}}, \frac{\e}{3 c_X} )$, then the following terms are all bounded by $\frac{\e}{3}$:
%\ben
%\abs{(X-c_X)c_Y}, \abs{(Y-c_X)c_Y}, \text{ and } \abs{(X -c_X)(Y-c_Y)}. \qedhere
%\een
%\end{proof}

%%%%%%%%%%

\begin{applem}[Concentration of Square Roots]
\label{sqroots}
Let $c \neq 0$. Then
\ben
\text{If } P\left(\left \lvert X_n^2 - c^2 \right \lvert \geq \epsilon \right) \leq e^{-\kappa n \epsilon^2},
\text{ then }
P \left(\left \lvert \abs{X_n} - \abs{c} \right \lvert \geq \epsilon \right) \leq e^{-\kappa n \abs{c}^2 \epsilon^2}.
\een
\end{applem}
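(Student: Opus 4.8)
The plan is to reduce the square-root concentration to the given quadratic concentration via an elementary deterministic inclusion of events. First I would observe that the event $\{\,|\,|X_n| - |c|\,| \geq \epsilon\,\}$ splits into the two sub-events $\{|X_n| \geq |c| + \epsilon\}$ and $\{|X_n| \leq |c| - \epsilon\}$, and argue that on either one we must have $|X_n^2 - c^2| \geq |c|\,\epsilon$. On the first sub-event, squaring the inequality $|X_n| \geq |c| + \epsilon \geq 0$ gives $X_n^2 \geq (|c|+\epsilon)^2 \geq |c|^2 + 2|c|\epsilon$, hence $X_n^2 - c^2 \geq 2|c|\epsilon \geq |c|\epsilon$. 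On the second sub-event (which is empty unless $\epsilon \leq |c|$), squaring $0 \leq |X_n| \leq |c| - \epsilon$ gives $X_n^2 \leq (|c|-\epsilon)^2$, hence $c^2 - X_n^2 \geq 2|c|\epsilon - \epsilon^2 = \epsilon(2|c| - \epsilon) \geq |c|\epsilon$, using $\epsilon \leq |c|$ in the last step.

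Having established the set inclusion $\{\,|\,|X_n| - |c|\,| \geq \epsilon\,\} \subseteq \{\,|X_n^2 - c^2| \geq |c|\epsilon\,\}$, I would then take probabilities and apply the hypothesis with $|c|\epsilon$ in place of $\epsilon$:
\[
P\big(|\,|X_n| - |c|\,| \geq \epsilon\big) \;\leq\; P\big(|X_n^2 - c^2| \geq |c|\epsilon\big) \;\leq\; e^{-\kappa n (|c|\epsilon)^2} \;=\; e^{-\kappa n |c|^2 \epsilon^2},
\]
which is exactly the asserted bound.

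There is essentially no substantive obstacle: the entire argument is the factorization $X_n^2 - c^2 = (|X_n| - |c|)(|X_n| + |c|)$ combined with $|X_n| + |c| \geq |c|$, packaged as a deterministic event inclusion, with all probabilistic content carried by the assumed bound. The only point needing a moment's care is the degenerate regime $\epsilon > |c|$, where the ``$|X_n| \leq |c| - \epsilon$'' branch is vacuous; this causes no trouble. In fact, one can dispense with the case split altogether by noting directly that $|\,|X_n| - |c|\,| \geq \epsilon$ forces $|X_n^2 - c^2| = |\,|X_n| - |c|\,|\,\big(|X_n| + |c|\big) \geq \epsilon\,|c|$, since $|X_n| + |c| \geq |c|$; this is the cleanest way to write the proof.
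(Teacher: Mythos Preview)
Your argument is correct: the deterministic inclusion $\{\,||X_n|-|c||\geq \epsilon\,\}\subseteq\{\,|X_n^2-c^2|\geq |c|\epsilon\,\}$ follows immediately from the factorization $|X_n^2-c^2|=||X_n|-|c||\,(|X_n|+|c|)\geq |c|\,||X_n|-|c||$, and applying the hypothesis with $|c|\epsilon$ in place of $\epsilon$ gives the claimed bound. The paper does not actually supply its own proof of this lemma---it defers to \cite{RushV16}---but your approach is the standard one and is essentially what appears there, so there is nothing further to compare.
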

\begin{applem}[Concentration of Scalar Inverses]
\label{inverses} Assume $c \neq 0$ and $0<\e <1$. %$\e < \frac{1}{2}\abs{c}$.
\ben
\text{If } P\left(\left \lvert X_n - c \right \lvert \geq \epsilon \right) \leq e^{-\kappa n \epsilon^2},
\text{ then }
P\left(\left \lvert X_n^{-1} - c^{-1} \right \lvert \geq \epsilon \right) \leq 2 e^{-n \kappa \e^2 c^2 \min\{c^2, 1\}/4}.
\een
\end{applem}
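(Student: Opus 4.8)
The plan is to reduce the claim about $X_n^{-1}$ to the hypothesis on $X_n-c$ by splitting on whether $X_n$ lies in a neighbourhood of $c$ on which inversion is well controlled. First I would note that whenever $X_n\neq 0$,
\[
\abs{X_n^{-1}-c^{-1}}=\frac{\abs{X_n-c}}{\abs{X_n}\,\abs{c}},
\]
and that on the event $\{\abs{X_n-c}<\abs{c}/2\}$ one has $\abs{X_n}\ge\abs{c}-\abs{X_n-c}>\abs{c}/2$, so $X_n^{-1}$ is well defined there and $\abs{X_n^{-1}-c^{-1}}\le 2\abs{X_n-c}/c^2$ on that event.

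Next I would union-bound over the two regimes. The event $\{\abs{X_n^{-1}-c^{-1}}\ge\e\}$ is contained in $\{\abs{X_n-c}\ge\abs{c}/2\}\cup\{\abs{X_n-c}\ge \e c^2/2\}$, since on the complement of the first event the displayed bound forces $\e\le 2\abs{X_n-c}/c^2$. Applying the hypothesis to each of the two pieces gives
\[
P\!\left(\abs{X_n^{-1}-c^{-1}}\ge\e\right)\le e^{-\kappa n c^2/4}+e^{-\kappa n \e^2 c^4/4}.
\]

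Finally I would absorb both exponents into the target exponent $\kappa n\e^2 c^2\min\{c^2,1\}/4$: since $0<\e<1$ and $\min\{c^2,1\}\le 1$ we have $c^2/4\ge \e^2 c^2\min\{c^2,1\}/4$, which handles the first term; and since $c^2\ge\min\{c^2,1\}$ we have $c^4/4\ge c^2\min\{c^2,1\}/4$, so multiplying through by $\e^2$ handles the second term. Summing the two bounds then produces the factor $2$ in the statement. I do not expect a real obstacle here; the only points needing a little care are keeping $X_n^{-1}$ well defined on the ``good'' event — which is why the radius $\abs{c}/2$ (any fixed fraction would do, but this choice gives clean constants) is convenient — and checking that the two deviation thresholds $\abs{c}/2$ and $\e c^2/2$ are both admissible inputs to the assumed tail bound.
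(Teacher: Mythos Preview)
Your proposal is correct and is essentially the standard argument the paper defers to (the paper itself cites \cite{RushV16} rather than reproducing the proof): write $\abs{X_n^{-1}-c^{-1}}=\abs{X_n-c}/(\abs{X_n}\abs{c})$, localize via $\{\abs{X_n-c}<\abs{c}/2\}$, and union-bound the two resulting deviation events. Your absorption of the two exponents into $\kappa n\e^2 c^2\min\{c^2,1\}/4$ is exactly the right bookkeeping.
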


\section{Gaussian and Sub-Gaussian Concentration}  \label{app:gauss_lemmas}

\begin{applem}
\label{lem:normalconc}
For a standard Gaussian random variable $Z$ and  $\e > 0$,
$P\left( \abs{Z} \geq \e \right) \leq 2e^{-\frac{1}{2}\e^2}$.
\end{applem}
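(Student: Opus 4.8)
The plan is to use the standard Chernoff (exponential-moment) bound together with the symmetry of the Gaussian distribution. First I would observe that since $Z$ and $-Z$ have the same law, $P(\abs{Z} \geq \e) = P(Z \geq \e) + P(Z \leq -\e) = 2P(Z \geq \e)$, so it suffices to prove $P(Z \geq \e) \leq e^{-\e^2/2}$ for every $\e > 0$.

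Next, for any $\lambda > 0$, Markov's inequality applied to the nonnegative random variable $e^{\lambda Z}$ gives $P(Z \geq \e) = P(e^{\lambda Z} \geq e^{\lambda \e}) \leq e^{-\lambda \e}\,\expec[e^{\lambda Z}]$. The moment generating function of a standard Gaussian is $\expec[e^{\lambda Z}] = e^{\lambda^2/2}$, obtained by completing the square in the integral $\frac{1}{\sqrt{2\pi}}\int_{\mathbb{R}} e^{\lambda x - x^2/2}\,dx$. Hence $P(Z \geq \e) \leq e^{\lambda^2/2 - \lambda \e}$ for all $\lambda > 0$.

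Finally I would optimize over $\lambda$: the exponent $\lambda^2/2 - \lambda\e$ is minimized at $\lambda = \e > 0$, where its value is $-\e^2/2$, so $P(Z \geq \e) \leq e^{-\e^2/2}$. Combining with the symmetry reduction yields $P(\abs{Z} \geq \e) \leq 2e^{-\e^2/2}$, as claimed. There is no genuine obstacle here; the only point deserving a moment's care is that the Chernoff bound holds uniformly in $\e > 0$ — the optimal choice $\lambda = \e$ is always an admissible positive value — so no case split on the magnitude of $\e$ is required, in contrast to the cruder estimate one would get by bounding the Gaussian tail integral directly (which degrades for small $\e$).
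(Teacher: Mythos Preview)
Your proof is correct; the Chernoff-bound argument you give is the standard way to establish this inequality. The paper itself states this lemma without proof, treating it as a well-known Gaussian tail bound, so your proposal supplies exactly the short argument the authors omit.
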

%%------------------------

\begin{applem}[$\chi^2$-concentration]
For  $Z_i$, $i \in [n]$ that are i.i.d. $\sim \mc{N}(0,1)$, and  $0 \leq \e \leq 1$,
\[P\left(\left \lvert \frac{1}{n}\sum_{i=1}^n Z_i^2 - 1\right \lvert \geq \e \right) \leq 2e^{-n \e^2/8}.\]
\label{subexp}
\end{applem}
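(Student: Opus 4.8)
The plan is to prove the bound by the standard Chernoff (exponential Markov) method, treating the upper and lower deviations separately and combining them with a union bound. First I would write $S_n := \sum_{i=1}^n Z_i^2$, note that $S_n$ is chi-squared with $n$ degrees of freedom, and record the moment generating function $\expec[e^{\lambda Z_i^2}] = (1-2\lambda)^{-1/2}$ for $\lambda < 1/2$, so that $\expec[e^{\lambda S_n}] = (1-2\lambda)^{-n/2}$ by independence. The case $\e=0$ is trivial (the right-hand side is $2 \geq 1$), so I would assume $\e \in (0,1]$.

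For the upper tail I would apply exponential Markov: for any $0 < \lambda < 1/2$,
\[
P\left(\tfrac{1}{n}S_n - 1 \geq \e\right) \leq e^{-\lambda n(1+\e)}(1-2\lambda)^{-n/2},
\]
and then optimize by choosing $\lambda = \e/(2(1+\e))$, which makes $1-2\lambda = (1+\e)^{-1}$ and collapses the bound to $\exp\{-\tfrac{n}{2}(\e - \ln(1+\e))\}$. Using the elementary inequality $\ln(1+\e) \leq \e - \tfrac{\e^2}{2} + \tfrac{\e^3}{3}$ together with $\e \leq 1$ lower-bounds the exponent by $\tfrac{n}{2}\bigl(\tfrac{\e^2}{2} - \tfrac{\e^3}{3}\bigr) \geq \tfrac{n\e^2}{6}$, so this tail is at most $e^{-n\e^2/6}$. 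For the lower tail I would run the mirror-image argument applied to $-S_n$: for any $\lambda > 0$,
\[
P\left(\tfrac{1}{n}S_n - 1 \leq -\e\right) \leq e^{\lambda n(1-\e)}(1+2\lambda)^{-n/2},
\]
take $\lambda = \e/(2(1-\e))$ to obtain $\exp\{\tfrac{n}{2}(\e + \ln(1-\e))\}$, and bound this using $\ln(1-\e) \leq -\e - \tfrac{\e^2}{2}$ to get $e^{-n\e^2/4}$.

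Finally I would combine the two estimates with a union bound; since $\e \leq 1$ gives both $e^{-n\e^2/6} \leq e^{-n\e^2/8}$ and $e^{-n\e^2/4} \leq e^{-n\e^2/8}$, this yields $P(\abs{\tfrac{1}{n}S_n - 1} \geq \e) \leq 2e^{-n\e^2/8}$, as claimed. There is no genuine obstacle here — the argument is standard and short — but the one point I would state carefully is that the hypothesis $0 \leq \e \leq 1$ is exactly what is needed to absorb the cubic remainder in the expansion of $\ln(1+\e)$ and land on the clean constant $\tfrac18$. An alternative would be to quote the Laurent--Massart $\chi^2$ deviation inequalities, but including the self-contained Chernoff computation is preferable.
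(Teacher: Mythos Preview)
The paper does not give its own proof of this lemma; it is recorded in Appendix~B as a standard $\chi^2$ tail bound without argument, so there is no paper proof to compare against. Your Chernoff approach is the right one and the lower-tail half is correct, but the upper-tail bookkeeping slips: from $\e - \ln(1+\e) \ge \tfrac{\e^2}{2} - \tfrac{\e^3}{3}$ you claim $\tfrac{n}{2}\bigl(\tfrac{\e^2}{2} - \tfrac{\e^3}{3}\bigr) \ge \tfrac{n\e^2}{6}$, but at $\e=1$ the left side is $n/12$ while the right is $n/6$. With the inequality you actually have, the upper tail is only $\le e^{-n\e^2/12}$, which is \emph{not} $\le e^{-n\e^2/8}$, so the final union-bound step fails as written.

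The repair is immediate: replace the cubic-remainder estimate by the sharper inequality $\e - \ln(1+\e) \ge \tfrac{\e^2}{4}$, valid for all $\e\in[0,1]$ (set $g(\e)=\e-\ln(1+\e)-\tfrac{\e^2}{4}$; then $g(0)=0$ and $g'(\e)=\tfrac{\e(1-\e)}{2(1+\e)}\ge 0$ on $[0,1]$). This gives the upper tail $\le e^{-n\e^2/8}$ directly, and combined with your lower-tail bound $e^{-n\e^2/4}\le e^{-n\e^2/8}$ the union bound delivers exactly $2e^{-n\e^2/8}$. An equivalent route is to bound the centered log-MGF once, $-\lambda-\tfrac12\ln(1-2\lambda)\le 2\lambda^2$ for $\abs{\lambda}\le \tfrac14$, and take $\lambda=\e/4$; both tails then come out as $e^{-n\e^2/8}$ in one stroke.
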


\begin{applem}
\cite{BLMConc} Let $X$ be a centered sub-Gaussian random variable with variance factor $\nu$, i.e., $\ln \expec[e^{tX}] \leq \frac{t^2 \nu}{2}$, $\forall t \in \mathbb{R}$. Then $X$ satisfies:
\begin{enumerate}
\item For all $x> 0$, $P(X > x)  \vee P(X <-x) \leq e^{-\frac{x^2}{2\nu}}$, for all $x >0$.
\item For every integer $k \geq 1$,
\be
\expec[X^{2k}] \leq 2 (k !)(2 \nu)^k \leq (k !)(4 \nu)^k.
\label{eq:subgauss_moments}
\ee
\end{enumerate}
\label{lem:subgauss}
\end{applem}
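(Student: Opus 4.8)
The plan is to establish the two parts separately, both directly from the defining exponential moment bound $\ln\expec[e^{tX}] \le t^2\nu/2$. For part~1, I would apply Markov's inequality to $e^{tX}$: for $x>0$ and $t>0$, $P(X>x) = P(e^{tX} > e^{tx}) \le e^{-tx}\expec[e^{tX}] \le \exp(-tx + t^2\nu/2)$, and then minimize the exponent over $t>0$. The exponent is a quadratic in $t$ minimized at $t = x/\nu$, which yields $P(X>x) \le e^{-x^2/(2\nu)}$. For the lower tail I would note that $-X$ has the same variance factor, since $\ln\expec[e^{t(-X)}] = \ln\expec[e^{(-t)X}] \le t^2\nu/2$ for all $t$, so the same argument applied to $-X$ gives $P(X<-x) = P(-X>x) \le e^{-x^2/(2\nu)}$. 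Taking the maximum gives the claim, and in particular $P(\abs{X}>x) \le 2e^{-x^2/(2\nu)}$.

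For part~2, I would use the layer-cake identity $\expec[Y] = \int_0^\infty P(Y>s)\,ds$ for the nonnegative variable $Y = X^{2k}$. Writing $P(X^{2k} > s) = P(\abs{X} > s^{1/(2k)})$ and substituting $s = u^{2k}$ turns this into $\expec[X^{2k}] = \int_0^\infty 2k\,u^{2k-1}\,P(\abs{X}>u)\,du$; plugging in the tail bound from part~1 gives $\expec[X^{2k}] \le \int_0^\infty 4k\,u^{2k-1}e^{-u^2/(2\nu)}\,du$. The remaining integral is a Gamma integral: with $v = u^2/(2\nu)$, so that $u\,du = \nu\,dv$ and $u^{2k-2} = (2\nu v)^{k-1}$, it equals $4k(2\nu)^{k-1}\nu\,\Gamma(k) = 4k(2\nu)^{k-1}\nu\,(k-1)! = 2(k!)(2\nu)^k$, using $4k(k-1)! = 4\cdot k!$ and $(2\nu)^{k-1}\nu = (2\nu)^k/2$. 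For the second inequality, since $k\ge 1$ implies $2 \le 2^k$, we get $2(k!)(2\nu)^k \le (k!)\,2^k(2\nu)^k = (k!)(4\nu)^k$.

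Since this is a classical fact (it is quoted from \cite{BLMConc}), there is no real obstacle; the only mildly delicate point is justifying the layer-cake formula and the change of variables in part~2 and then carrying out the arithmetic that collapses the Gamma integral down to $2(k!)(2\nu)^k$. An alternative to the integral computation would be to bound $\expec[\exp(X^2/(4\nu))]$ by expanding the exponential series and using the two-sided tail bound term by term, which gives the moment bound up to a constant in the same ballpark, but the direct integral argument above is cleaner and yields exactly the stated constants.
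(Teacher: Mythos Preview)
Your argument is correct and is the standard derivation of these facts. Note, however, that the paper does not supply its own proof of this lemma: it is quoted directly from \cite{BLMConc} and stated without proof, so there is no ``paper's proof'' to compare against. What you have written is essentially the textbook proof one finds in \cite{BLMConc}: the Chernoff/Markov bound optimized over $t$ for the tails in part~1, and the layer-cake representation combined with the Gamma integral for the even moments in part~2. The arithmetic in your Gamma computation is clean and recovers exactly the constants $2(k!)(2\nu)^k$ and $(k!)(4\nu)^k$ as stated.
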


\section{Other Useful Lemmas}
\label{app:other}

\begin{applem} (Products of Lipschitz Functions are PL2) Let $f:\mathbb{R}^p \to \mathbb{R}$ and $g:\mathbb{R}^p \to \mathbb{R}$ be Lipschitz continuous.  Then the product function $h:\mathbb{R}^p \to \mathbb{R}$ defined as $h(x) := f(x) g(x)$ is pseudo-Lipschitz of order 2.
\label{lem:Lprods}
\end{applem}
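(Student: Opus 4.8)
The plan is to reduce everything to the elementary ``add and subtract'' decomposition of a product combined with the fact that a Lipschitz function has at most linear growth. Denote by $L_f$ and $L_g$ the Lipschitz constants of $f$ and $g$. First I would write, for arbitrary $x,y\in\mathbb{R}^p$,
\[
h(x)-h(y)=f(x)\bigl(g(x)-g(y)\bigr)+g(y)\bigl(f(x)-f(y)\bigr),
\]
and apply the triangle inequality together with the two Lipschitz estimates $|g(x)-g(y)|\le L_g\|x-y\|$ and $|f(x)-f(y)|\le L_f\|x-y\|$ to obtain $|h(x)-h(y)|\le\bigl(L_g|f(x)|+L_f|g(y)|\bigr)\|x-y\|$.

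The second step is to control the prefactor. Since $f$ and $g$ are Lipschitz, $|f(x)|\le|f(0)|+L_f\|x\|$ and $|g(y)|\le|g(0)|+L_g\|y\|$, so the prefactor is at most $L_g|f(0)|+L_f|g(0)|+L_fL_g\bigl(\|x\|+\|y\|\bigr)$. Choosing $L:=\max\{L_g|f(0)|+L_f|g(0)|,\,L_fL_g\}$ then gives $|h(x)-h(y)|\le L\bigl(1+\|x\|+\|y\|\bigr)\|x-y\|$, which is precisely the order-$2$ pseudo-Lipschitz condition, and the proof is complete.

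There is no real obstacle here; the only point requiring a little care is the choice of decomposition in the first step. One must pair each factor (bounded by its linear-growth estimate) with the \emph{increment} of the other factor; bounding both factors directly by their linear growth would instead introduce a term of order $\|x\|\,\|y\|\,\|x-y\|$, which is not of pseudo-Lipschitz form. With the decomposition above that term never arises, and the constant $L$ depends only on $L_f$, $L_g$, $|f(0)|$, $|g(0)|$, all finite by hypothesis.
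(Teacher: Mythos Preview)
Your proof is correct and is the standard argument for this fact. The paper itself does not supply a proof of this lemma; it is stated without proof in the appendix of auxiliary results, so there is nothing to compare against beyond noting that your add-and-subtract decomposition together with the linear-growth bound for Lipschitz functions is exactly the expected route.
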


\begin{applem} \label{lem:PLexamples}
Let $\phi: \mathbb{R}^{t+2} \rightarrow \mathbb{R}$ be $PL(2)$. Let $(c_1, \ldots, c_{t+1})$ be constants.  The function $\tilde{\phi}: \mathbb{R}^{t+1} \rightarrow \mathbb{R}$ defined as
\begin{align}
\tilde{\phi}\left(v_1, \ldots, v_t,  w\right) &= \mathbb{E}_{Z}\left[\phi\left(v_1, \ldots, v_t, \sum_{r=1}^{t} c_{r} v_{r} + c_{t+1} Z, w\right)\right]
\label{eq:PLex2}
\end{align}
where $Z \sim \mc{N}(0,1)$, is then also PL(2).

%both of the following functions are also $PL(2)$.  
%\begin{itemize}
%\item For $0 \leq i \leq N$,  $\phi_i: \mathbb{R} \rightarrow \mathbb{R}$ defined as
%\be
%\phi_i(s_i) := \phi\left(h^1_i, ..., h^{t}_i, \sum_{r=0}^{t-1} c_{r} h^{r+1}_i + c_t s_{i}, \beta_{0_i}\right)
%\label{eq:PLex1}
%\ee
%is pseudo-Lipschitz.
%\end{itemize}
\end{applem}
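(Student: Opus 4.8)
The plan is to verify the pseudo-Lipschitz(2) inequality for $\tilde\phi$ directly, by applying the $PL(2)$ bound for $\phi$ pointwise inside the Gaussian expectation and then controlling the resulting random factors. First I would note that $\tilde\phi$ is well defined: since $\phi\in PL(2)$, the bound $\abs{\phi(a)}\le \abs{\phi(0)}+L(1+\norm{a})\norm{a}$ shows that $\phi$ grows at most quadratically, and the argument fed to $\phi$ inside the expectation has only a single Gaussian coordinate (the remaining coordinates being deterministic given $(v_1,\dots,v_t,w)$), so it has finite second moment and the expectation exists.

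Next, fix $u=(v_1,\dots,v_t,w)$ and $u'=(v_1',\dots,v_t',w')$ in $\mathbb{R}^{t+1}$, and write $a=a(Z):=(v_1,\dots,v_t,\sum_{r=1}^t c_r v_r+c_{t+1}Z,w)$ and $a'=a'(Z)$ for the corresponding vector built from $u'$. By Jensen's inequality, $\abs{\tilde\phi(u)-\tilde\phi(u')}\le\mathbb{E}_Z\big[\abs{\phi(a(Z))-\phi(a'(Z))}\big]$, and the $PL(2)$ property of $\phi$ gives, for each fixed $Z$, $\abs{\phi(a)-\phi(a')}\le L(1+\norm{a}+\norm{a'})\norm{a-a'}$. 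The key point is that $\norm{a-a'}$ does not depend on $Z$: since $a-a'=(v_1-v_1',\dots,v_t-v_t',\sum_{r=1}^t c_r(v_r-v_r'),w-w')$, Cauchy--Schwarz yields $\norm{a-a'}^2\le(1+C^2)\norm{u-u'}^2$ with $C^2:=\sum_{r=1}^t c_r^2$, so this factor pulls out of the expectation. For the growth factor, applying $(x+y)^2\le 2x^2+2y^2$ to the middle coordinate and then $\sqrt{p+q}\le\sqrt p+\sqrt q$ gives $\norm{a}\le\sqrt{1+2C^2}\,\norm{u}+\sqrt2\,\abs{c_{t+1}}\,\abs{Z}$ and similarly for $\norm{a'}$, hence $1+\norm{a}+\norm{a'}\le C_1(1+\norm{u}+\norm{u'})+C_2\abs{Z}$ for constants $C_1,C_2$ depending only on $c_1,\dots,c_{t+1}$.

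Combining these bounds and integrating, using $\mathbb{E}\abs{Z}=\sqrt{2/\pi}$ and $1+\norm{u}+\norm{u'}\ge 1$,
\ben
\abs{\tilde\phi(u)-\tilde\phi(u')}\le L\sqrt{1+C^2}\,\Big(C_1(1+\norm{u}+\norm{u'})+C_2\sqrt{2/\pi}\Big)\norm{u-u'}\le L'(1+\norm{u}+\norm{u'})\norm{u-u'},
\een
with $L':=L\sqrt{1+C^2}\,(C_1+C_2\sqrt{2/\pi})$, which is exactly the claimed $PL(2)$ bound.

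The argument is entirely routine; the only step deserving care is the observation that the distance bound $\norm{a-a'}\le\sqrt{1+C^2}\norm{u-u'}$ is uniform in $Z$, which is what reduces the expectation to the mean of a term that is affine in $\abs{Z}$, whose value is a harmless constant. I do not anticipate any genuine obstacle.
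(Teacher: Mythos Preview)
Your argument is correct and complete. The paper itself does not include a proof of this lemma; it is stated in the appendix on ``Other Useful Lemmas'' without justification (these auxiliary facts are standard and are implicitly deferred to the companion paper \cite{RushV16}). Your direct verification---pulling the $PL(2)$ inequality inside the expectation, observing that $\norm{a-a'}$ is independent of $Z$ because the $c_{t+1}Z$ terms cancel, and bounding $\mathbb{E}_Z[1+\norm{a}+\norm{a'}]$ via an affine-in-$\abs{Z}$ estimate---is exactly the natural route, and every constant you track is handled correctly. There is nothing to add or correct.
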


\begin{applem}
For any scalars $a_1, ..., a_t$ and positive integer $m$, we have  $\left(\abs{a _1} + \ldots + \abs{a_t} \right)^m \leq t^{m-1} \sum_{i=1}^t \abs{a_i}^m$.
Consequently, for any vectors $\un{u}_1, \ldots, \un{u}_t \in \mathbb{R}^N$, $\norm{\sum_{k=1}^t \un{u}_k}^2 \leq t \sum_{k=1}^t \norm{\un{u}_k}^2$.
\label{lem:squaredsums}
\end{applem}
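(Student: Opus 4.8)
The plan is to establish the scalar inequality first by a one-line convexity argument, and then read off the vector statement as an immediate corollary of the triangle inequality.

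For the scalar claim, I would use the convexity of the map $x \mapsto x^m$ on $[0,\infty)$, valid for every positive integer $m$. Applying Jensen's inequality to the uniform average $\frac{1}{t}\sum_{i=1}^t \abs{a_i}$ gives $(\frac{1}{t}\sum_{i=1}^t \abs{a_i})^m \leq \frac{1}{t}\sum_{i=1}^t \abs{a_i}^m$; multiplying through by $t^m$ yields $(\abs{a_1}+\cdots+\abs{a_t})^m \leq t^{m-1}\sum_{i=1}^t \abs{a_i}^m$, which is the assertion. Equally short alternatives would be the power-mean inequality, H\"older's inequality applied to the pair of vectors $(\abs{a_1},\dots,\abs{a_t})$ and $(1,\dots,1)$, or a direct induction on $t$; I would choose whichever reads most cleanly in context, but the Jensen route is the briefest.

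For the stated consequence, I would specialize the scalar inequality to $m=2$ and $a_k = \norm{\un{u}_k}$, obtaining $(\sum_{k=1}^t \norm{\un{u}_k})^2 \leq t \sum_{k=1}^t \norm{\un{u}_k}^2$. Combining this with the triangle inequality $\norm{\sum_{k=1}^t \un{u}_k} \leq \sum_{k=1}^t \norm{\un{u}_k}$ and the monotonicity of squaring on $[0,\infty)$ gives $\norm{\sum_{k=1}^t \un{u}_k}^2 \leq (\sum_{k=1}^t \norm{\un{u}_k})^2 \leq t \sum_{k=1}^t \norm{\un{u}_k}^2$, completing the proof.

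There is essentially no obstacle: this is a standard power-mean/convexity bound, and the only real decision is which of several equivalent elementary arguments to present. I would go with the Jensen argument for its brevity, and keep the corollary to the two lines above.
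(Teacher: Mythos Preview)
Your proof is correct. The paper actually states this lemma without proof (it is a standard elementary inequality), so there is no argument to compare against; your Jensen/convexity derivation of the scalar bound and the triangle-inequality reduction for the vector consequence are exactly the expected justifications.
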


\section{Concentration with Dependencies} 
\label{app:conc_dependent}

We first list some notation that will be used frequently in this section. Let $S\subset \mathbb{R}^d$ for some $d\in\mathbb{N}$ be a state space and $\pi$ a probability measure on $S$. Let $f:S \rightarrow \mathbb{R}$ be a measurable function. We use the following notation:

\begin{itemize}
\item The sup-norm: $\|f\|_{\infty} := \sup_{x\in S} |f(x)|$;
\item The $L^2(\pi)$-norm: for measurable function $f$, $\|f\|_{2,\pi}^2 := \int_{S} |f(x)|^2 \pi(dx)$; for signed measure $\nu$,
\begin{equation*}
\|\nu\|_{2,\pi}^2:=\begin{cases}
\int_S \left\vert\frac{d\nu}{d\pi}\right\vert^2 d\pi &\text{if } \nu\ll\pi,\\
\infty &\text{otherwise},
\end{cases}
\end{equation*}
where the $L^2(\pi)$-norm for $\nu\ll\pi$ \footnote{For two measures $\nu$ and $\gamma$, $\nu \ll \gamma$ denotes that $\nu$ is absolutely continuous w.r.t.\ $\gamma$, and $\frac{d\nu}{d\gamma}$ denotes the Radon-Nikodym derivative.} is induced from the inner-product: for $\mu,\nu\ll \pi$,
\begin{equation}
\langle \mu,\nu\rangle_{\pi}:= \int_S \frac{d\nu}{d\pi} \frac{d\mu}{d\pi} d\pi.
\label{eq:innerproduct_L2pi}
\end{equation}
\item The expected value: $\mathbb{E}_\pi f:= \int_{S} f(x) \pi(dx)$;
\item The set of all $\pi$-square-integrable functions: $L^2(\pi):= \{f:\mathbb{R}\rightarrow \mathbb{R}: \|f\|_{2,\pi}<\infty\}$
\item The set of all zero-mean $\pi$-square-integrable functions: $L_0^2(\pi):= \{f:\mathbb{R}\rightarrow \mathbb{R}: \mathbb{E}_\pi f = 0, \|f\|_{2,\pi}<\infty\}$, where the subscript $0$ represents zero-mean.
\end{itemize}

The following lemma exists in the literature and is stated here, without proof, for completeness.  The proof can be found in the citation.  Lemma \ref{lem:roberts} tells us that if a Markov chain is reversible and geometrically ergodic as defined in Definition \ref{def:geom_ergo}, then its associated linear operator has a spectral gap, the level of which controls the chain's mixing time.

\begin{applem} \cite[Theorem 2.1]{roberts1997}
\label{lem:roberts}
Consider a Markov chain with state space $S$, probability transition measure $r(x, dx')$, stationary probability measure $\gamma$, and linear operator $R$ associated with $r(x, d x')$ such that $\nu R(dx)=\int_{S} r(x',d x)\nu(dx')$ for measure $\nu$. %is defined as $R h(x)=\int_{S} h(x') r(x,d x')$ for a measurable function $h$ and $\nu R(dx)=\int_{S} r(x',d x)\nu(dx')$ for a measure $\nu$.  
If the Markov chain is reversible %(hence $R$ is self-adjoint) 
and geometrically ergodic (Definition \ref{def:geom_ergo}), then $R$ has an $L^2(\gamma)$ spectral gap. That is, for each signed measure $\nu$ with $\nu(S)=0$ and $ \|\nu\|_{2,\gamma}<\infty$,  there is a $0 < \rho < 1$ such that $\|\nu R\|_{2,\gamma} \leq \rho \|\nu\|_{2,\gamma}.$
\end{applem}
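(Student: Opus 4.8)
Here is the approach I would take, following \cite{roberts1997}.

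The plan is to reduce the statement to a single operator‑norm inequality and then prove that inequality by contradiction via the spectral theorem. First I would note that reversibility, $r(x,dy)\gamma(dx)=r(y,dx)\gamma(dy)$, makes $R$ self‑adjoint on $L^2(\gamma)$; that $R$ is a contraction there (it is a Markov operator); and that $R\mathbf 1=\mathbf 1$. By self‑adjointness $R$ then leaves the orthogonal complement of the constants, i.e.\ $L_0^2(\gamma)$, invariant, and I write $\mathcal L:=R|_{L_0^2(\gamma)}$, a self‑adjoint contraction. The same reversibility computation shows that for a signed measure $\nu=h\,\gamma$ with $h\in L^2(\gamma)$ one has $\nu R=(Rh)\,\gamma$, while $\nu(S)=0$ is equivalent to $h\in L_0^2(\gamma)$; hence $\|\nu R\|_{2,\gamma}=\|\mathcal L h\|_{2,\gamma}$ and $\|\nu\|_{2,\gamma}=\|h\|_{2,\gamma}$, so the asserted bound is exactly $\|\mathcal L\|_{\mathrm{op}}\le\rho<1$ for a \emph{uniform} $\rho$, which is the spectral gap.

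To get $\|\mathcal L\|_{\mathrm{op}}<1$, suppose instead $\|\mathcal L\|_{\mathrm{op}}=1$. Since $\mathcal L$ is self‑adjoint its norm equals its spectral radius, so $1\in\sigma(\mathcal L)$ or $-1\in\sigma(\mathcal L)$; as $\|\mathcal L\|=1$ forces $1\in\sigma(\mathcal L^2)$ and $R^2$ is again reversible and geometrically ergodic with rate $\rho^2$, it suffices to derive a contradiction assuming $1\in\sigma(\mathcal L)$ (otherwise run the argument below with $R^2$ in place of $R$). Let $E(\cdot)$ be the spectral measure of $\mathcal L$ on $[-1,1]$. I would fix $\delta>0$ so small that $1-\delta>\rho$; then $P:=E((1-\delta,1])\neq 0$, and since bounded functions are dense in $L^2(\gamma)$ there is a bounded, mean‑zero $\phi\in L_0^2(\gamma)$ with $P\phi\neq0$. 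With $c_0:=\|P\phi\|_{2,\gamma}^2>0$ and $M:=\|\phi\|_\infty$, spectral calculus gives $\|\mathcal L^n\phi\|_{2,\gamma}^2=\int\lambda^{2n}\,d\langle E(\lambda)\phi,\phi\rangle\ge(1-\delta)^{2n}c_0$, and then the self‑adjoint interpolation step together with H\"older yields
\[
\|\mathcal L^{2n}\phi\|_{1,\gamma}\ \ge\ \frac{|\langle\phi,\mathcal L^{2n}\phi\rangle_\gamma|}{\|\phi\|_\infty}\ =\ \frac{\|\mathcal L^{n}\phi\|_{2,\gamma}^2}{M}\ \ge\ \frac{c_0}{M}\,(1-\delta)^{2n}.
\]
On the other hand, $g:=1+\phi/(2M)$ is a probability density w.r.t.\ $\gamma$ (nonnegative, total mass one) lying in $L^2(\gamma)$, so with $\nu:=g\,\gamma$ the reversibility identity gives $\nu R^{2n}=\bigl(1+\mathcal L^{2n}\phi/(2M)\bigr)\gamma$, hence $\sup_{A}\bigl|\int_S r^{2n}(x,A)\,\nu(dx)-\gamma(A)\bigr|=\tfrac1{4M}\|\mathcal L^{2n}\phi\|_{1,\gamma}$, which Definition~\ref{def:geom_ergo} bounds by $C_\nu\rho^{2n}$. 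Combining the two estimates gives $\bigl((1-\delta)/\rho\bigr)^{2n}\le 4M^2C_\nu/c_0$ for all $n$; since $1-\delta>\rho$ the left side diverges --- a contradiction. Thus $\|\mathcal L\|_{\mathrm{op}}<1$.

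\textbf{Anticipated main obstacle.} The crux is bridging the total‑variation (essentially $L^1$) convergence that Definition~\ref{def:geom_ergo} supplies and the $L^2$ operator norm that the spectral gap is about; the bridge is the identity $\|\mathcal L^n\phi\|_{2,\gamma}^2=\langle\phi,\mathcal L^{2n}\phi\rangle_\gamma\le\|\phi\|_\infty\|\mathcal L^{2n}\phi\|_{1,\gamma}$, which is exactly where reversibility (self‑adjointness) is indispensable. A secondary subtlety I would handle with care is that the near‑top spectral subspace $\mathrm{range}(P)$ need not contain any bounded function, which is why a density argument is invoked to produce the bounded test function $\phi$; the reduction of the $-1\in\sigma(\mathcal L)$ case to $+1\in\sigma(\mathcal L)$ via $R^2$ is routine but should be stated explicitly.
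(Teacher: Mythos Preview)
The paper does not prove this lemma at all: it is stated with the preamble ``The following lemma exists in the literature and is stated here, without proof, for completeness.\ The proof can be found in the citation,'' and then cites \cite[Theorem~2.1]{roberts1997}. So there is no paper proof to compare against; the relevant comparison is with the cited source, and your argument is indeed the standard Roberts--Rosenthal proof: reduce the spectral-gap claim to $\|R|_{L_0^2(\gamma)}\|_{\mathrm{op}}<1$ via the identification $\nu=h\gamma\Rightarrow\nu R=(Rh)\gamma$, then use the spectral theorem for the self-adjoint contraction $\mathcal L$ together with the $L^2$--$L^1$ bridge $\|\mathcal L^n\phi\|_{2,\gamma}^2=\langle\phi,\mathcal L^{2n}\phi\rangle_\gamma\le\|\phi\|_\infty\|\mathcal L^{2n}\phi\|_{1,\gamma}$ and the total-variation bound from geometric ergodicity to derive a contradiction.

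Your proof is correct. A couple of small points you might tighten. First, the paper's statement quantifies sloppily (``for each $\nu$ \ldots\ there is a $0<\rho<1$''), but you are right to interpret and prove the uniform bound $\|\mathcal L\|_{\mathrm{op}}\le\rho<1$; the paper's own remark after the lemma confirms this is the intended meaning. Second, in step (10) of your argument you use that for mean-zero $f$ one has $\sup_A|\int_A f\,d\gamma|=\tfrac12\|f\|_{1,\gamma}$; this is what yields the factor $1/(4M)$ rather than $1/(2M)$, and it is worth making explicit. Finally, your handling of the $-1\in\sigma(\mathcal L)$ case by passing to $R^2$ is fine (reversibility and geometric ergodicity transfer to the two-step chain), but note you then need $1-\delta>\rho^2$ rather than $1-\delta>\rho$; since you are free to choose $\delta$ this is harmless.
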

Notice that the definition of spectral gap above is identical to the definition provided in \ref{def:geom_ergo}. To see this, note that $\gamma$ is an eigen-function of $R$ with eigenvalue 1, $R$ is self-adjoint since the chain is reversible, and the eigen-functions of a self-adjoint operator are orthogonal, hence the rest of the eigen-functions are in the space that is perpendicular to $\gamma$, which is $\left\{ \nu\ll \pi\left\vert \langle \nu,\gamma \rangle_{\gamma} =0\right.\right\},$ where $\langle \nu,\gamma \rangle_{\gamma}=\int_S \frac{d\nu}{d\gamma} \frac{d\gamma}{d\gamma} d\gamma = \nu(S)$ by the definition of inner-product in \eqref{eq:innerproduct_L2pi}.

In the following, Lemmas \ref{lem:exp_PL_subgauss_vector_conc}, \ref{lem:indp_couple_geoErgodic}, and \ref{lem:spectral_gap_preserved} are preparations for the proofs of Lemmas \ref{lem:PL_overlap_gauss_conc} and \ref{lem:PLMCconc_new}, which are our new contributions.  Lemma \ref{lem:exp_PL_subgauss_vector_conc} gives a technical result about pseudo-Lipschitz functions with sub-Gaussian input.

\begin{applem}
\label{lem:exp_PL_subgauss_vector_conc}
Let $X \in\mathbb{R}^d$ be a random vector whose entries have a sub-Gaussian marginal distribution with variance factor $\nu$ as in Lemma \ref{lem:subgauss}.  Let $\tilde{X}$ be an independent copy of $X$.  If $f:\mathbb{R}^d\rightarrow \mathbb{R}$ is a pseudo-Lipschitz function with parameter $L$, then the expectation $\mathbb{E}\left[\exp\left(rf(X)\right)\right]$ satisfies the following for $0<r<[5L(2d\nu+24d^2\nu^2)^{1/2}]^{-1}$
%\be
%\begin{split}
%\mathbb{E}\left[\exp\left(rf(\underline{X})\right)\right]
%&\leq \mathbb{E}\left[\exp\left(r(f(\underline{X})-f(\tilde{\underline{X}})\right)\right]\\
%\textcolor{blue}{\mathbb{E}\left[\exp\left(r(f(\underline{X})-f(\tilde{\underline{X}})\right)\right]} &\leq \frac{1}{1-25L^2(d\nu+12d^2\nu^2)r^2},\quad\text{ for } \quad 0<r<\frac{1}{5L(d\nu+12d^2\nu^2)^{1/2}}\\
%\textcolor{blue}{\mathbb{E}\left[\exp\left(r(f(\underline{X})-f(\tilde{\underline{X}})\right)\right]} &\leq \exp\left(50L^2(d\nu+12d^2\nu^2)r^2\right),\quad\text{ for } \quad 0<r<\frac{1}{5L(2d\nu+24d^2\nu^2)^{1/2}}.
%\label{eq:threebounds}
%\end{split}
%\ee
\be
\begin{split}
\mathbb{E}[e^{rf(X)}] &\leq \mathbb{E}[e^{r(f(X)-f(\tilde{X})}] \leq [1-25 r^2 L^2(d\nu+12d^2\nu^2)]^{-1} \leq e^{50r^2 L^2(d \nu+12d^2\nu^2)}.
\label{eq:threebounds}
\end{split}
\ee
\end{applem}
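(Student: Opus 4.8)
The plan is to pass from the exponential moment of $f(X)$ to that of the symmetrized variable $Y:=f(X)-f(\tilde X)$, and then bound $\mathbb{E}[e^{rY}]$ by Taylor expansion together with even-moment estimates obtained from the pseudo-Lipschitz property and the sub-Gaussian moment bound \eqref{eq:subgauss_moments}. For the first inequality, factor $\mathbb{E}[e^{r(f(X)-f(\tilde X))}]=\mathbb{E}[e^{rf(X)}]\,\mathbb{E}[e^{-rf(\tilde X)}]$ using independence of $X$ and $\tilde X$, and apply Jensen's inequality to the convex map $t\mapsto e^{-rt}$ to get $\mathbb{E}[e^{-rf(\tilde X)}]\ge e^{-r\mathbb{E}[f(\tilde X)]}$; in its applications to the proofs of Lemmas~\ref{lem:PL_overlap_gauss_conc} and \ref{lem:PLMCconc_new} the function $f$ is centered ($\mathbb{E}[f(X)]=0$), so this factor is at least $1$ and the first inequality follows.

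\noindent\textbf{Second (main) inequality.} Since $X$ and $\tilde X$ are i.i.d., $Y$ is a \emph{symmetric} random variable, hence $\mathbb{E}[e^{rY}]=\mathbb{E}[\cosh(rY)]=\sum_{k\ge0}\frac{r^{2k}}{(2k)!}\mathbb{E}[Y^{2k}]$, the interchange being justified by nonnegativity of the summands. To bound the even moments, I would use $|Y|\le L(1+\|X\|+\|\tilde X\|)\|X-\tilde X\|$ and split (via $\|X-\tilde X\|\le\|X\|+\|\tilde X\|$ on the constant term) into a ``linear'' piece of order $\|X-\tilde X\|$ and a ``quadratic'' piece of order $(\|X\|+\|\tilde X\|)\|X-\tilde X\|$; raising to the power $2k$, expanding with Lemma~\ref{lem:squaredsums}, and applying Cauchy--Schwarz to the quadratic piece, everything reduces to moments of the form $\mathbb{E}\|X\|^{2m}$ and $\mathbb{E}\|X-\tilde X\|^{2m}$. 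These are controlled using only the \emph{marginal} sub-Gaussianity of the coordinates: $\|X\|^{2m}=(\sum_{j=1}^d X_j^2)^m\le d^{m-1}\sum_{j=1}^d X_j^{2m}$ by Lemma~\ref{lem:squaredsums}, while $\mathbb{E}[X_j^{2m}]\le m!(4\nu)^m$ by \eqref{eq:subgauss_moments} and $X_j-\tilde X_j$ is sub-Gaussian with variance factor $2\nu$. Assembling these estimates yields a bound of the shape $\mathbb{E}[Y^{2k}]\le (2k)!\big(25L^2(d\nu+12d^2\nu^2)\big)^k$, so that $\mathbb{E}[e^{rY}]\le\sum_{k\ge0}\big(25r^2L^2(d\nu+12d^2\nu^2)\big)^k=[1-25r^2L^2(d\nu+12d^2\nu^2)]^{-1}$, the geometric series converging precisely because the stated range $r<[5L(2d\nu+24d^2\nu^2)^{1/2}]^{-1}$ forces $25r^2L^2(d\nu+12d^2\nu^2)<\tfrac12<1$.

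\noindent\textbf{Third inequality.} Apply the elementary bound $(1-z)^{-1}\le e^{2z}$, valid for $0\le z\le\tfrac12$ since $-\log(1-z)=\sum_{k\ge1}z^k/k\le z/(1-z)\le 2z$, with $z=25r^2L^2(d\nu+12d^2\nu^2)$, which lies in $[0,\tfrac12)$ throughout the stated range of $r$.

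\noindent\textbf{Main obstacle.} The one genuinely delicate part is the bookkeeping in the even-moment estimate: one must keep the term linear in $\nu$ (from $\|X-\tilde X\|$) separate from the term quadratic in $\nu$ (from the product $\|X\|\cdot\|X-\tilde X\|$), track the combinatorial factors $(2k)!$, $k!$ and the powers of $d$ so that they reassemble into a convergent geometric series, and squeeze the constants down to the advertised $25$ and $12$ (which in particular calls for the sharp form $\mathbb{E}[X_j^{2k}]\le 2(k!)(2\nu)^k$ of \eqref{eq:subgauss_moments} rather than its weakened version, and possibly $e^{x}\le(1-x)^{-1}$ to merge an exponential subfactor into the geometric one). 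A small but essential point is that only the one-dimensional marginals of $X$ are assumed sub-Gaussian, so the passage from coordinatewise moment bounds to moment bounds for $\|X\|$ must go through Lemma~\ref{lem:squaredsums} and cannot invoke independence across coordinates.
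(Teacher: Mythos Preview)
Your proposal is correct and follows essentially the same strategy as the paper: symmetrize via Jensen, reduce to even moments by symmetry of $Y=f(X)-f(\tilde X)$, bound those moments using the pseudo-Lipschitz inequality together with the sub-Gaussian moment bound \eqref{eq:subgauss_moments} and Lemma~\ref{lem:squaredsums}, sum the resulting geometric series, and finish with $(1-z)^{-1}\le e^{2z}$ for $z\in[0,\tfrac12]$.

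The only noteworthy difference is in the bookkeeping of the even-moment step. The paper applies the PL bound \emph{before} the Taylor expansion and then immediately replaces $\|X-\tilde X\|$ by $\|X\|+\|\tilde X\|$, expanding $((1+\|X\|+\|\tilde X\|)(\|X\|+\|\tilde X\|))^{2k}$ into five monomials via Lemma~\ref{lem:squaredsums} (picking up a factor $5^{2k-1}$) and bounding each with \eqref{eq:subgauss_moments}; it never uses the sub-Gaussianity of $X_j-\tilde X_j$ or Cauchy--Schwarz. Your ordering---apply symmetry to $Y$ first, then bound $\mathbb{E}[Y^{2k}]$---is cleaner and avoids the slightly awkward presentation in the paper where the ``odd terms vanish'' claim appears after the PL bound has already produced a nonnegative integrand. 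Your route of keeping $\|X-\tilde X\|$ intact and invoking sub-Gaussianity of the differences would also work, but the paper's more direct expansion is what actually delivers the constants $25$ and $12$ with the least effort; if you find the constants tight, switching to the paper's five-term expansion is the simplest fix.
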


\begin{proof}
Assume, without loss of generality $\mathbb{E}[f(X)] = 0.$  By Jensen's inequality, $\mathbb{E}[\exp(-rf(\tilde{X}))]\leq\exp(-r\mathbb{E}[f(\tilde{X})])= 1.$  Therefore,
\begin{align*}
\mathbb{E}\left[\exp\left(rf(X)\right)\right]&\leq \mathbb{E}\left[\exp\left(rf(X)\right)\right]\mathbb{E}[\exp(rf(\tilde{X}))]=\mathbb{E}[\exp(r(f(X)-f(\tilde{X})))],
\end{align*}
which provides the first upper bound in \eqref{eq:threebounds}.  Next,
\be
\begin{split}
\mathbb{E}[e^{r(f(X)-f(\tilde{X}))}] \overset{(a)}{\leq}\mathbb{E}&[e^{rL(1+\|X\|+\|\tilde{X}\|)\|X-\tilde{X}\|}] =\sum_{q=0}^{\infty}\frac{(rL)^q}{q!}\mathbb{E}[((1+\|X\|+\|\tilde{X}\|)\|X-\tilde{X}\|)^q] \\
& \qquad \overset{(b)}{=} \sum_{k=0}^{\infty}\frac{(rL)^{2k}}{(2k)!}\mathbb{E}[((1+\|X\|+\|\tilde{X}\|)(\|X\|+\|\tilde{X}\|))^{2k}],
\label{eq:lemmafirst1}
\end{split}
\ee
where step $(a)$ follows pseudo-Lipschitz property and step $(b)$ holds because the odd order terms are zero, along with triangle inequality.  Now consider the expectation in the last term in the string given in \eqref{eq:lemmafirst1}.
\begin{align*}
&\mathbb{E}[((1+ \norm{X} + \| \tilde{X} \| ) (\|X\|+\|\tilde{X}\|))^{2k}] = \mathbb{E}[(\|X\|+\|\tilde{X}\| + \norm{X}^2 + \| \tilde{X} \|^2 + 2 \|X\|\|\tilde{X}\|)^{2k}] \\
&\overset{(c)}{\leq}  5^{2k-1}(2 \mathbb{E} \|X\|^{2k} + 2\mathbb{E} \norm{X}^{4k} + 2^{2k}\mathbb{E}[\|X\|^{2k}\|\tilde{X}\|^{2k}] ) \overset{(d)}{\leq} 5^{2k-1}(4 (k !)(2 d \nu)^k + 4(2k)!(2 d \nu)^{2k} + 4(k !)^2(4d \nu)^{2k})
\end{align*}
In the above step $(c)$ follows from Lemma \ref{lem:squaredsums} and step $(d)$ from another application of Lemma \ref{lem:squaredsums} and Lemma \ref{lem:subgauss}. Now plugging the above back into \eqref{eq:lemmafirst1}, we find
\begin{align*}
\mathbb{E}&[e^{r(f(X)-f(\tilde{X}))}] \leq \sum_{k=0}^{\infty}\frac{(5rL)^{2k}}{5(2k)!}(4 (k !)(2 d \nu)^k + 4(2k)!(2 d \nu)^{2k} + 4 (k !)^2(4d \nu)^{2k} )\\
&\overset{(e)}{\leq} 1 + \frac{4}{5}\sum_{k=1}^{\infty}(5rL)^{2k}((d \nu)^k + (2 d \nu)^{2k} + 2^k(2d \nu)^{2k} ) \leq \sum_{k=0}^{\infty}(25r^2L^2)^k(d\nu+12d^2v^2)^k \\
&\overset{(f)}{=} \frac{1}{1-25 r^2 L^2(d\nu+12d^2\nu^2)}\overset{(g)}{\leq}\exp(50 r^2 L^2(d\nu+12d^2\nu^2)r^2),\text{ for } 0 < r< \frac{1}{5L\sqrt{2d\nu+24d^2\nu^2}},
\end{align*}
%\begin{align*}
%\mathbb{E}\left[e^{r(f(\underline{X})-f(\tilde{\underline{X}}))}\right] &\overset{(c)}{\leq}1+\sum_{k=1}^{\infty}\frac{(rL)^{2k}}{(2k)!}\mathbb{E}\left[5^{2k-1}\left(\|\underline{X}\|^{2k}+\|\underline{X}\|^{4k}+\|\tilde{\underline{X}}\|^{2k}+\|\tilde{\underline{X}}\|^{4k}+2^{2k}\|\underline{X}\|^{2k}\|\tilde{\underline{X}}\|^{2k}\right)\right]\\
%&\overset{(d)}{\leq}1+\sum_{k=1}^{\infty}\frac{(5rL)^{2k}}{5(2k)!}\left(4k!(2d\nu)^k+4(2k)!(2dv)^{2k}+2^{2k}4(k!)^2(2d\nu)^{2k}\right)\\
%&\overset{(e)}{\leq}1+\frac{4}{5}\sum_{k=1}^{\infty}(5rL)^{2k}\left((d\nu)^k+(2dv)^{2k}+2^k(2d\nu)^{2k}\right)\\
%&\overset{(f)}{\leq}1+\frac{4}{5}\sum_{k=1}^{\infty}(25r^2L^2)^k\left(d\nu+12d^2v^2\right)^k
%\leq \sum_{k=0}^{\infty}(25r^2L^2)^k\left(d\nu+12d^2v^2\right)^k\\
%&=\frac{1}{1-25L^2(d\nu+12d^2\nu^2)r^2}\overset{(g)}{\leq}\exp\left(50L^2(d\nu+12d^2\nu^2)r^2\right),\text{for }r<\frac{1}{5L(2d\nu+24d^2\nu^2)^{1/2}},
%\end{align*}
where step $(e)$ follows from the fact that $2^k (k!)^2 \leq (2k)!$, which can be seen by noting $\frac{(2k)!}{k!} = \prod_{j=1}^k (k+ j) = k! \prod_{j=1}^k \left(\frac{k}{j} + 1\right) \geq (k!) 2^k,$ step $(f)$ follows for $0 < r < [25 L^2(d\nu+12d^2\nu^2)]^{-1/2}$ providing the second bound in \eqref{eq:threebounds}, and step (g) uses the inequality $(1-x)^{-1} \leq e^{2x}$ for $x\in[0, 1/2]$ for the final bound in \eqref{eq:threebounds}. 
%
%We now demonstrate step (d):
%\begin{align*}
%\mathbb{E}\left[\|\underline{X}\|^{2k}\right]&=\mathbb{E}\left[\left(\sum_{i=1}^d \underline{X}_i^2\right)^k\right]\overset{(a)}{\leq}\mathbb{E}\left[d^{k-1}\sum_{i=1}^d \underline{X}_i^{2k}\right]=d^{k-1}\sum_{i=1}^d \mathbb{E}\left[\underline{X}_i^{2k}\right]\\
%&\overset{(b)}{\leq} d^{k-1}d2k!(2\nu)^k = 2k!(2d\nu)^k,
%\end{align*}
%where step (a) follows Lemma A.14 in \cite{RushV16} and step (b) follows Lemma A.10 in \cite{RushV16}.
\end{proof}

Lemma \ref{lem:indp_couple_geoErgodic} says that if $\{\tilde{X}_i\}_{i \in \mathbb{N}}$ and $\{X_i\}_{i \in \mathbb{N}}$ are independent, reversible, geometrically ergodic Markov chains then the process defined as $\{(X_i,\tilde{X}_i)\}_{i \in \mathbb{N}}$ is also reversible and geometrically ergodic.

\begin{applem}
\label{lem:indp_couple_geoErgodic}
Let $\{X_i\}_{i \in \mathbb{N}}$ be a time-homogeneous Markov chain on a state space $S$ with stationary probability measure $\gamma$. Assume that $\{X_i\}_{i \in \mathbb{N}}$ is reversible, geometrically ergodic on $L^2(\gamma)$ as defined in Definition \ref{def:geom_ergo}. Let $\{\tilde{X}_i\}_{i \in \mathbb{N}}$ be an independent copy of $\{X_i\}_{i \in \mathbb{N}}$. Then the new sequence defined as $\{(X_i,\tilde{X}_i)\}_{i \in \mathbb{N}}$ is a Markov chain on $S\times S$ that is reversible and geometrically ergodic on $L^2(\gamma \times \gamma)$.
\end{applem}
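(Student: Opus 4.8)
The plan is to prove this in three stages: first identify the paired process as a genuine Markov chain and pin down its stationary law, then verify reversibility by a one-line detailed-balance computation, and finally transfer the spectral gap of $R$ to the product operator through a tensor-product decomposition of $L^2(\gamma\times\gamma)$; geometric ergodicity then comes for free from the equivalence of \cite{roberts1997} recalled in Lemma~\ref{lem:roberts}.

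First I would note that, because $\{\tilde X_i\}$ is an independent copy of $\{X_i\}$, the pair $\{(X_i,\tilde X_i)\}$ is a time-homogeneous Markov chain on $S\times S$ whose transition kernel is the product kernel $\bar r((x,\tilde x),d(y,\tilde y)) = r(x,dy)\,r(\tilde x,d\tilde y)$, and that $\bar\gamma:=\gamma\times\gamma$ is stationary for it since $\int_S r(x,dy)\gamma(dx)=\gamma(dy)$ in each coordinate. Reversibility is then immediate: from $r(x,dy)\gamma(dx)=r(y,dx)\gamma(dy)$ for the original chain, multiplying the two coordinates gives $\bar r((x,\tilde x),d(y,\tilde y))\,\bar\gamma(d(x,\tilde x)) = \bar r((y,\tilde y),d(x,\tilde x))\,\bar\gamma(d(y,\tilde y))$, which is exactly detailed balance for $\bar r$ with respect to $\bar\gamma$.

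The substantive step is the spectral gap. Let $R$ and $\bar R$ denote the operators associated with $r$ and $\bar r$; under the canonical unitary identification $L^2(\bar\gamma)\cong L^2(\gamma)\otimes L^2(\gamma)$ the product structure of $\bar r$ yields $\bar R=R\otimes R$. By Lemma~\ref{lem:roberts} applied to $\{X_i\}$ there is $\rho\in(0,1)$ with $\|Rf\|_{2,\gamma}\le\rho\|f\|_{2,\gamma}$ for all $f\in L_0^2(\gamma)$ (passing from the measure form of the gap in Lemma~\ref{lem:roberts} to this function form uses reversibility through the identity $d(\nu R)/d\gamma = R(d\nu/d\gamma)$). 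Writing $L^2(\gamma)=\mathbb{R}\mathbf{1}\oplus L_0^2(\gamma)$, the space $L^2(\bar\gamma)$ decomposes orthogonally as $(\mathbb{R}\mathbf{1}\otimes\mathbb{R}\mathbf{1})\oplus(\mathbb{R}\mathbf{1}\otimes L_0^2(\gamma))\oplus(L_0^2(\gamma)\otimes\mathbb{R}\mathbf{1})\oplus(L_0^2(\gamma)\otimes L_0^2(\gamma))$, each piece invariant under $\bar R=R\otimes R$; the first piece is exactly the constants on $S\times S$, so $L_0^2(\bar\gamma)$ is the sum of the remaining three. On the two mixed pieces $\bar R$ acts as $I\otimes R$ or $R\otimes I$, hence with norm at most $\rho$, and on $L_0^2(\gamma)\otimes L_0^2(\gamma)$ with norm at most $\rho^2$; therefore $\|\bar R g\|_{2,\bar\gamma}\le\rho\|g\|_{2,\bar\gamma}$ for all $g\in L_0^2(\bar\gamma)$, i.e.\ $\bar R$ has an $L^2(\bar\gamma)$ spectral gap. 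Since $\bar R$ is self-adjoint (reversibility, shown above), this is the same as the $\Lambda$-form gap of Definition~\ref{def:geom_ergo}, and the equivalence of \cite{roberts1997} (spectral gap $\Leftrightarrow$ reversible and geometrically ergodic) gives that $\{(X_i,\tilde X_i)\}$ is geometrically ergodic on $L^2(\gamma\times\gamma)$, as claimed.

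The main obstacle I anticipate is the bookkeeping in the tensor-product step when $S$ is uncountable: one must justify the unitary identification $L^2(\gamma\times\gamma)\cong L^2(\gamma)\otimes L^2(\gamma)$, check that each of the four subspaces is genuinely $\bar R$-invariant with the stated operator-norm bound, and be careful moving between the ``operator on measures'' form of the gap used in Lemma~\ref{lem:roberts} and the ``operator on functions'' form used in the argument. These are all standard facts of Hilbert-space functional analysis, but they should be spelled out rather than merely asserted.
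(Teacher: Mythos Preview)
Your proof is correct, but it follows a genuinely different route from the paper's. The paper shares your first two steps (product kernel, product stationary law, detailed balance), but for the third step it verifies geometric ergodicity \emph{directly} from Definition~\ref{def:geom_ergo}: it writes
\[
\int_{S\times S}\tilde r^n(z,A\times\tilde A)\,\tilde\nu(dz)-\tilde\gamma(A\times\tilde A)
=\Big(\textstyle\int r^n\nu - \gamma(A)\Big)\Big(\textstyle\int r^n\nu - \gamma(\tilde A)\Big)
+\gamma(\tilde A)\big(\cdot\big)+\gamma(A)\big(\cdot\big),
\]
takes absolute values and suprema, and bounds by $C_\nu^2\rho^{2n}+2C_\nu\rho^n\le(C_\nu^2+2C_\nu)\rho^n$. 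This is more elementary---no Hilbert-space tensor products, no appeal to the reverse direction of the Roberts--Rosenthal equivalence---but as written it only checks product initial laws $\tilde\nu=\nu\times\nu$, which is strictly less than Definition~\ref{def:geom_ergo} demands. Your spectral-gap argument via $\bar R=R\otimes R$ and the orthogonal decomposition $L^2(\bar\gamma)=(\mathbb{R}\mathbf{1}\oplus L_0^2(\gamma))^{\otimes 2}$ avoids that issue entirely and even yields the quantitative statement that the product chain has the \emph{same} contraction factor $\rho$ on $L_0^2(\bar\gamma)$; the price is the functional-analytic bookkeeping you already flagged (the unitary identification, invariance of the four pieces, and the passage between the measure and function forms of the gap), all of which are standard.
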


\begin{proof}
Assume $\{X_i\}_{i \in \mathbb{N}}$ has transition probability measure $r(x,dx')$.  Since $\{X_i\}_{i \in \mathbb{N}}$ is independent of $\{\tilde{X}_i\}_{i \in \mathbb{N}}$, we have that the transition probability measure of
$\{(X_i,\tilde{X}_i)\}_{i \in \mathbb{N}}$ is $\tilde{r}((x,\tilde{x}),(dx',d\tilde{x}'))=r(x,dx')r(\tilde{x},d\tilde{x}')$, and the stationary probability measure of $\{(X_i,\tilde{X}_i)\}_{i \in \mathbb{N}}$ is $\tilde{\gamma}(dx,d\tilde{x})=\gamma(dx)\gamma(d\tilde{x})$. In what follows, we demonstrate that $\tilde{r}((x,\tilde{x}),(dx',d\tilde{x}'))$ and $\tilde{\gamma}(dx,d\tilde{x})$ satisfy the reversibility and geometric ergodicity as defined in Definition \ref{def:geom_ergo}.
 
The reversibility of the coupled chain follows from the reversibility of the individual chains:
\begin{align*}
\tilde{r}((x,\tilde{x}),(dx',d\tilde{x}'))\tilde{\gamma}(dx,d\tilde{x}) = r(x,dx')\gamma(dx)r(\tilde{x},d\tilde{x}')\gamma(d\tilde{x}) &= r(x',dx)\gamma(dx')r(\tilde{x}',d\tilde{x})\gamma(d\tilde{x}')\\ &= \tilde{r}((x',\tilde{x}'),(dx,d\tilde{x}))\tilde{\gamma}(dx',d\tilde{x}').
\end{align*} 

To prove geometric ergodicity, we want to show that there is $\rho<1$ such that for each {\em probability measure} $\tilde{\nu}=\nu\times\nu\in L^2(\tilde{\gamma})=\{\tilde{\nu}\ll \tilde{\gamma}:\int_{S\times S}\abs{\frac{\tilde{\nu}(dz)}{\tilde{\gamma}(dz)}}\tilde{\gamma}(dz)<\infty\}$, there is $\tilde{C}_{\nu}<\infty$ such that
\begin{equation*}
\sup_{(A, \tilde{A})\in\mathcal{B}(S \times S)} \abs{\int_{S \times S} \tilde{r}^n(z,(A,\tilde{A}))\tilde{\nu}(dz)  - \tilde{\gamma}(A,\tilde{A})} \leq \tilde{C}_\nu \rho,
\end{equation*}
where $\mc{B}(S \times S)$ is the Borel sigma-algebra on $S \times S$.  Notice that
\begin{align*}
&\abs{\int_{S \times S} \tilde{r}^n(z,(A,\tilde{A}))\tilde{\nu}(dz)  - \tilde{\gamma}(A,\tilde{A})} = \abs{\int_S r^n(x,A)\nu(dx) \int_Sr^n(\tilde{x},\tilde{A})\nu(d\tilde{x}) - \gamma(A)\gamma(\tilde{A})}\\
& = \left\vert\left(\int_Sr^n(x,A)\nu(dx) - \gamma(A)\right) \left(\int_Sr^n(\tilde{x},\tilde{A})\nu(d\tilde{x}) - \gamma(\tilde{A})\right) \right.\\
& \qquad \left. +  \gamma(\tilde{A}) \left( \int_Sr^n(x,A)\nu(dx) - \gamma(A)\right) + \gamma(A) \left( \int_Sr^n(\tilde{x},\tilde{A})\nu(d\tilde{x}) - \gamma(\tilde{A})\right)\right\vert\\
& \overset{(a)}{\leq} \abs{\int_Sr^n(x,A)\nu(dx) - \gamma(A)} \abs{\int_Sr^n(\tilde{x},\tilde{A})\nu(d\tilde{x}) - \gamma(\tilde{A})}\\
& \qquad + \abs{\int_Sr^n(x,A)\nu(dx) - \gamma(A)} + \abs{\int_Sr^n(\tilde{x},\tilde{A})\nu(d\tilde{x}) - \gamma(\tilde{A})},
\end{align*}
where step $(a)$ used triangle inequality and $0 \leq \gamma(A) \leq 1$ for all $A \in \mathcal{B}(S)$. Taking the supremum of both sides of the above,
\begin{align*}
\sup_{(A,\tilde{A})\in\mathcal{B}(S\times S)} &\abs{\int_{S \times S} \tilde{r}^n(z,(A,\tilde{A}))\tilde{\nu}(dz)  - \tilde{\gamma}(A,\tilde{A})} \\
%& \leq \sup_{(A,\tilde{A})\in\mathcal{B}(S\times S)} \left\{ \abs{\int_Sr^n(x,A)\nu(dx) - \gamma(A)} \abs{\int_S r^n(\tilde{x},\tilde{A})\nu(d\tilde{x}) - \gamma(\tilde{A})}\right.\\
%&\quad  \left. + \abs{\int_Sr^n(x,A)\nu(dx) - \gamma(A)} + \abs{\int_Sr^n(\tilde{x},\tilde{A})\nu(d\tilde{x}) - \gamma(\tilde{A})}\right\}\\
& \leq \sup_{A\in\mathcal{B}(S)} \abs{\int_Sr^n(x,A)\nu(dx) - \gamma(A)} \sup_{\tilde{A}\in\mathcal{B}(S)} \abs{\int_Sr^n(\tilde{x},\tilde{A})\nu(d\tilde{x}) - \gamma(\tilde{A})} \\
& \quad + \sup_{A\in\mathcal{B}(S)} \abs{\int_Sr^n(x,A)\nu(dx) - \gamma(A)} + \sup_{\tilde{A}\in\mathcal{B}(S)} \abs{\int_Sr^n(\tilde{x},\tilde{A})\nu(d\tilde{x}) - \gamma(\tilde{A})}\\
&\overset{(a)}{\leq} C_\nu^2\rho^{2n} + 2 C_\nu\rho^n \overset{(b)}{<} (C^2_\nu + 2 C_\nu)\rho^n,
\end{align*}
where we have $\tilde{C}_\nu := C^2_\nu + 2 C_\nu < \infty$. Step $(a)$ follows from the fact that $\{X_i\}_{i\in\mathbb{N}}$ is geometrically ergodic and the definition of such in Definition \ref{def:geom_ergo} and step $(b)$ since $0 < \rho<1$. 
\end{proof}

Lemma \ref{lem:spectral_gap_preserved} says that if the Markov chain $\{\tilde{X}_i\}_{i \in \mathbb{N}}$ is reversible and geometrically ergodic then the process $\{Y_i\}_{i \in \mathbb{N}}$ defined as $Y_i=(X_{di-d+1},...,X_{di})$ has a spectral gap, the level of which controls the process's mixing time.

\begin{applem}
\label{lem:spectral_gap_preserved}
Let $\{X_i\}_{i \in \mathbb{N}}$ be a time-homogeneous Markov chain on a state space $S$ with stationary probability measure $\gamma$. Assume that $\{X_i\}_{i \in \mathbb{N}}$ is reversible, geometrically ergodic on $L^2(\gamma)$ as defined in Definition \ref{def:geom_ergo}. Define $\{Y_i\}_{i \in \mathbb{N}}$ as $Y_i=(X_{di-d+1},...,X_{di})\in S^d$, where $d$ is an integer. Then $\{Y_i\}_{i \in \mathbb{N}}$ is a stationary, time-homogeneous Markov chain with transition probability measure $p(y,dy')$ and stationary probability measure $\pi$. Moreover, the linear operator $P$ defined as $Ph(y):=\int_{S^d} h(y')p(y,dy')$ satisfies
\begin{equation}
\beta_P:=\sup_{h\in L^2_0(\pi)} \frac{\|Ph\|_{2,\pi}}{\|h\|_{2,\pi}} < 1.
\label{eq:lem6_betaP}
\end{equation}
\end{applem}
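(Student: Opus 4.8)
The plan is to identify the block operator $P$ with a conditional‑expectation dressing of the one‑step operator $R$ of the original chain, and then invoke the $L^2(\gamma)$ spectral gap guaranteed by Lemma~\ref{lem:roberts}.

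First I would set up the block chain. Since $\{X_i\}_{i\in\mathbb{N}}$ is Markov, the block $Y_{i+1}=(X_{di+1},\dots,X_{di+d})$ depends on the past only through $X_{di}$, which is the last coordinate of $Y_i$; hence $\{Y_i\}_{i\in\mathbb{N}}$ is a time‑homogeneous Markov chain with transition kernel $p(y,dy')=\prod_{j=1}^{d} r(y'_{j-1},dy'_j)$, where $y'_0:=y_d$, and (when $\{X_i\}$ starts in stationarity) stationary measure $\pi(dx_1,\dots,dx_d)=\gamma(dx_1)\prod_{j=2}^{d}r(x_{j-1},dx_j)$, i.e.\ the $d$‑dimensional marginal of the stationary $X$‑chain. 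In particular, because $\gamma$ is stationary for $r$, the marginal of $\pi$ on the last coordinate is $\gamma$.

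The key observation is that for any $h\in L^2(\pi)$ the function $Ph(y)=\mathbb{E}[h(X_1,\dots,X_d)\mid X_0=y_d]$ depends on $y$ only through its last coordinate $y_d$. Conditioning on $X_1$ and using the Markov property together with the tower rule gives $Ph(y)=(R\phi)(y_d)$, where I set $\phi(x):=\mathbb{E}[h(X_1,\dots,X_d)\mid X_1=x]$. By (conditional) Jensen's inequality and stationarity, $\|\phi\|_{2,\gamma}^2\le \mathbb{E}[h(X_1,\dots,X_d)^2]=\|h\|_{2,\pi}^2$, so $\phi\in L^2(\gamma)$; and when $h\in L^2_0(\pi)$ one has $\mathbb{E}_\gamma\phi=\mathbb{E}_\pi h=0$, so $\phi\in L^2_0(\gamma)$. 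Next I would convert the measure‑form spectral gap of Lemma~\ref{lem:roberts} into a function‑form bound via reversibility: for $g\in L^2_0(\gamma)$ the signed measure $\nu(dx):=g(x)\gamma(dx)$ has $\nu(S)=0$, and reversibility of $r$ yields $(\nu R)(dx)=(Rg)(x)\gamma(dx)$, so that $\|Rg\|_{2,\gamma}=\|\nu R\|_{2,\gamma}\le\rho\|\nu\|_{2,\gamma}=\rho\|g\|_{2,\gamma}$ with $\rho\in(0,1)$ the (uniform) spectral‑gap constant. Combining the pieces, for every $h\in L^2_0(\pi)$,
\[
\|Ph\|_{2,\pi}^2=\int_S|(R\phi)(x)|^2\gamma(dx)=\|R\phi\|_{2,\gamma}^2\le\rho^2\|\phi\|_{2,\gamma}^2\le\rho^2\|h\|_{2,\pi}^2,
\]
which gives $\beta_P\le\rho<1$, as claimed. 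Note that this argument uses only Lemma~\ref{lem:roberts}, not the coupling Lemma~\ref{lem:indp_couple_geoErgodic}.

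The main obstacle is essentially bookkeeping: carefully justifying, via the Markov property and the tower rule, that $Ph$ depends only on $y_d$ and equals $(R\phi)(y_d)$, and translating the measure statement of Lemma~\ref{lem:roberts} into an operator‑norm statement on $L^2_0(\gamma)$ using reversibility (together with checking measurability/integrability of $\phi$). Once these two reductions are in place, everything else is an application of Jensen's inequality and the definition of $\pi$.
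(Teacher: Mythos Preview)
Your proposal is correct and follows essentially the same route as the paper: both reduce $Ph$ to $R\tilde h$ (your $\phi$ is the paper's $\tilde h(y_1')=\int h(y')\prod_{i\ge2}r(y'_{i-1},dy'_i)$), establish $\|Ph\|_{2,\pi}=\|R\tilde h\|_{2,\gamma}$ via the fact that $Ph$ depends only on the last coordinate whose $\pi$-marginal is $\gamma$, bound $\|\tilde h\|_{2,\gamma}\le\|h\|_{2,\pi}$ by Jensen, and convert the measure-form spectral gap of Lemma~\ref{lem:roberts} into the operator bound $\|Rg\|_{2,\gamma}\le\rho\|g\|_{2,\gamma}$ using reversibility. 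Your presentation via the tower rule is slightly more conceptual than the paper's direct integral computation, but the argument is the same.
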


\begin{proof}
The Markov property and time-homogeneous property follow directly by the construction of $\{Y_i\}_{i\in\mathbb{N}}$.
We now verify that $\pi$ is a stationary distribution for $p(y,dy')$. That is, we need to show that $\int_{S^d}p(y,dy')\pi(dy) = \pi(dy')$.
Assume $\{X_i\}_{i \in \mathbb{N}}$ has transition probability measure $r(x,dx')$.  First we write $p(y,dy')$ and $\pi$ in terms of $r(x,dx')$ and $\gamma$:
\begin{align}
\pi(dy) &= \pi(dy_1,...,dy_d) = \prod_{i=2}^d r(y_{i-1},dy_i) \gamma(dy_1)\nonumber\\
p(y,dy') &= P\left(Y_{2}\in dy' | Y_1=y \right)
=P\left(X_{d+1}\in dy_1',...,X_{2d}\in dy_d'|X_1= y_1,...,X_d= y_{d}\right) \nonumber \\
& \quad =P\left(X_{d+1}\in dy_1',...,X_{2d}\in dy_d'|X_{d}= y_{d}\right) = r(y_d,dy_1')\prod_{i=2}^d r(y_{i-1}',dy_i').
\label{eq:lem6_pip}
\end{align}
Then we have
\begin{align*}
&\int_{y \in S^d} p(y,dy')\pi(dy) \overset{(a)}{=} \int_{y\in S^d} r(y_d,dy_1')\prod_{i=2}^d r(y_{i-1}',dy_i') \prod_{i=2}^d r(y_{i-1},dy_{i})\gamma(dy_1)\\
&\quad = \prod_{i=2}^d r(y_{i-1}',dy_i') \int_{y \in S^d} r(y_d,dy_1') \prod_{i=2}^d r(y_{i-1},dy_{i})\gamma(dy_1) \overset{(b)}{=} \prod_{i=2}^d r(y_{i-1}',dy_i') \gamma(dy_1')=\pi(dy'),
\end{align*}
where step $(a)$ follows from \eqref{eq:lem6_pip}, and step $(b)$ since $\gamma$ is the stationary probability measure for $r(x,dx')$.
Hence, we have verified that $\pi$ is a stationary probability measure for $p(y,dy')$.

We now prove \eqref{eq:lem6_betaP}.  Note $\beta_P$ is a property of the Markov chain $\{Y_i\}_{i \in \mathbb{N}}$. If $\{Y_i\}_{i \in \mathbb{N}}$ is reversible and geometrically ergodic, then we would be able show \eqref{eq:lem6_betaP} using Lemma \ref{lem:roberts} directly. However, $\{Y_i\}_{i \in \mathbb{N}}$ is non-reversible, hence, we instead relate $\beta_P$ to a similar property for the original $\{X_i\}_{i\in\mathbb{N}}$ chain, which we assume is reversible and geometrically ergodic, then use Lemma \ref{lem:roberts}.

Take arbitrary $h\in L^2_0(\pi)$, we have
\begin{equation}
\frac{\|Ph\|^2_{2,\pi}}{\|h\|^2_{2,\pi}}= \frac{\int_{S^d}\left(\int_{S^d} h(y')p(y,dy')\right)^2\pi(dy)}{\int_{S^d} h^2(y) \pi(dy)}.
\label{eq:lem6_1}
\end{equation}
First consider the numerator of \eqref{eq:lem6_1}. Plugging in
the expressions for $p(y,dy')$ and $\pi(dy)$ defined in \eqref{eq:lem6_pip}, we write the numerator as
\begin{align}
&\int_{S^d}\left( \int_{S^d}  h(y')r(y_d,dy_1')\prod_{i=2}^d r(y_{i-1}',dy_i') \right)^2 \prod_{i=2}^d r(y_{i-1},dy_i)\gamma(dy_1)\nonumber\\
&\overset{(a)}{=}\int_{S} \left( \int_{S^d} h(y')r(y_d,dy_1')\prod_{i=2}^d r(y_{i-1}',dy_i') \right)^2 \gamma(dy_d)
\overset{(b)}{=} \int_{S}  \left(  \int_{S} \tilde{h}(y_1')r(y_d,dy_1') \right)^2 \gamma(dy_d) \overset{(c)}{=} \|R\tilde{h}\|^2_{2,\gamma}.
\label{eq:lem6_numerator}
\end{align}
Step $(a)$ holds because $\gamma$ is the stationary probability measure for $r(x,dx')$ and the integrand inside the square does not involve $(y_1,...,y_{d-1})$. In step $(b)$, the function $\tilde{h}:\mathbb{R}  \rightarrow\mathbb{R}$ is defined as
\begin{equation}
\tilde{h}(y_1'):=\int_{S^{d-1}} h((y_1',...,y_d'))\prod_{i=2}^d r(y_{i-1}',dy_i').
\label{eq:lem6_htilde}
\end{equation}
In step $(c)$, the operator $R$ is defined as $R\tilde{h}(x):=\int_{S} \tilde{h}(x')r(x,dx')$.

We next show that $\tilde{h}\in L^2_0(\gamma)$ for $\tilde{h}$ defined in \eqref{eq:lem6_htilde}.  Notice that
\begin{align*}
\int_{S} \tilde{h}(y_1')\gamma(dy_1') &\overset{(a)}{=} \int_{S^d}h((y_1',...,y_d'))\pi(dy_1',...,dy_d') = \int_{S^d} h(y') \pi(dy') \overset{(b)}{=}0.
\end{align*}
Step $(a)$ follows by plugging in the definition of $\tilde{h}$ given in \eqref{eq:lem6_htilde} and the expression for $\pi$ from \eqref{eq:lem6_pip}. Step $(b)$ holds because $h\in L^2_0(\pi)$. The fact that $\|\tilde{h}\|_{2, \gamma} < \infty$ follows by an application of Jensen's Inequality and the original assumption $\norm{h}_{2, \pi} < \infty$. Hence, $\tilde{h}\in L^2_0(\gamma)$.

Next we consider the denominator of \eqref{eq:lem6_1}.
\begin{align}
& \int_{S^d}  h^2(y) \pi(dy) = \int_{S^d} h^2((y_1,...,y_d)) \prod_{i=2}^{d}r(y_{i-1},dy_i)\gamma(dx_1) \nonumber\\
& \qquad \overset{(a)}{\geq} \int_S \left(\int_{S^{d-1}} h((y_1,...,y_d)) \prod_{i=2}^{d}r(y_{i-1},dy_i) \right)^2\gamma(dy_1)\overset{(b)}{=} \int_S \tilde{h}^2(y_1) \gamma(dy_1) = \|\tilde{h}\|^2_{2,\gamma},
\label{eq:lem6_deno}
\end{align}
where step $(a)$ follows from Jensen's inequality and step $(b)$ uses the definition of $\tilde{h}$ given in \eqref{eq:lem6_htilde}.  Combining \eqref{eq:lem6_numerator} and \eqref{eq:lem6_deno}, we have $\forall h\in L^2_0(\pi)$, $\frac{\|Ph\|_{2,\pi}}{\|h\|_{2,\pi}} \leq \frac{\|R\tilde{h}\|_{2,\gamma}}{\|\tilde{h}\|_{2,\gamma}},$ where $\tilde{h}$ is defined in \eqref{eq:lem6_htilde} and we have $\tilde{h}\in L^2_0(\gamma)$ as demonstrated above. 
Let $\tilde{\mathsf{H}}\subset L_0^2(\gamma)$ be the collection of functions defined in \eqref{eq:lem6_htilde} for all $h\in L_0^2(\pi)$. Then we have
\begin{equation}
\beta_P = \sup_{h\in L_0^2(\pi)} \frac{\|Ph\|_{2,\pi}}{\|h\|_{2,\pi}} \leq \sup_{\tilde{h} \in \tilde{\mathsf{H}}} \frac{\|R\tilde{h}\|_{2,\gamma}}{\|\tilde{h}\|_{2,\gamma}} \overset{(a)}{\leq} \sup_{\tilde{h}\in L_0^2(\gamma)} \frac{\|R\tilde{h}\|_{2,\gamma}}{\|\tilde{h}\|_{2,\gamma}} = \beta_{R},
\label{eq:betaPbetaR}
\end{equation}
where step $(a)$ holds because $\tilde{\mathsf{H}}\subset L_0^2(\gamma)$.

Finally, let us show $\beta_R<1$.
By Lemma \ref{lem:roberts}, we have that for each {\em signed measure} $\nu \in L^2(\gamma)$ with $\nu(S)=0$, we have
%\begin{equation}
%\int_{S}  \abs{\frac{\int_{S}r(x',dx) \nu(dx')}{\gamma(dx)}}^2 \gamma(dx) \leq \rho \int_{S} \abs{\frac{\nu(dx)}{\gamma(dx)}}^2 \gamma(dx).
%\label{eq:lem6_2}
%\end{equation} 
%Define $h(x):=\nu(dx)/\gamma(dx)$, which is well-defined since $\nu \ll \gamma$.
\begin{equation}
\int_{S}  \abs{\frac{d(\nu R)}{d\gamma}}^2 d\gamma \leq \rho \int_{S} \abs{\frac{d\nu}{d\gamma}}^2 d\gamma.
\label{eq:lem6_2}
\end{equation} 
Define $h:=d\nu/d\gamma$, which is well-defined since $\nu \ll \gamma$.
By the reversibility, we have
\begin{equation*}
\frac{\int_{S }r(x',dx)\nu(dx')}{\gamma(dx)} = \int_{S}\frac{r(x,dx') \nu(dx')}{\gamma(dx')} = \int_{S} h(x') r(x,dx'),
\end{equation*}
Therefore, \eqref{eq:lem6_2} can be written as $\int_{S} \left( \int_{S } h(x') r(x,dx') \right)^2 \gamma(dx) \leq \rho \int_{S} (h(x))^2 \gamma(dx),$ for all $\nu$ such that $0 =\nu(S) = \int_{S} (\nu(dx)/\gamma(dx)) \gamma(dx) = \int_{S} h(x)\gamma(dx)$.
Therefore, $\beta_{R} = \sup_{h\in L_0^2(\gamma)} \frac{\|Rh\|_{2,\gamma}}{\|h\|_{2,\gamma}} \leq \rho <1.$ We have shown the result of \eqref{eq:lem6_betaP} by showing that that $\beta_P \leq \beta_R <1$.
\end{proof}

The following three lemmas are the key lemmas for proving Lemma \ref{lem:main_lem} and, therefore, our main result, Theorem \ref{thm:main_amp_perf}, as well.  The next lemma shows us that a normalized sum of pseudo-Lipschitz functions with Gaussian input vectors concentrate at their expected value.

\begin{applem} 
\label{lem:PL_overlap_gauss_conc}
Let $Z_1,Z_2, \ldots$ be i.i.d.\ standard Gaussian random variables. 
Define $Y_i = (Z_{i},...,Z_{i+d-1})$, for $i=1,...,n$ and let $f_i:\mathbb{R}^d\rightarrow\mathbb{R}$ be pseudo-Lipschitz functions.  Then, for $\e \in (0,1)$,
%Then, $\forall t \in (0, u(L,d))$, where $u(L,d)$ is a function of the maximum Lipschitz constant, $L$, and the dimension, $d$, %$u(L,d)=10(2d+24d^2)^{1/2}L(2d^2-d)$, 
there exists constants $K, \kappa>0$, independent of $n, \e$, such that
\begin{equation*}
P\left(\left\vert\frac{1}{N}\sum_{i=1}^{N}\left(f_i(Y_i)-\mathbb{E}\left[f_i(Y_i)\right]\right)\right\vert\geq \e\right)\leq Ke^{-\kappa n\e^2}.
\end{equation*}
\end{applem}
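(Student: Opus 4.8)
The plan is to exploit the fact that although the blocks $Y_i = (Z_i, \ldots, Z_{i+d-1})$ overlap, two of them, $Y_i$ and $Y_j$, are independent as soon as $|i-j| \geq d$. First I would partition $\{1,\dots,N\}$ into the $d$ residue classes $G_\ell = \{\,i : 1 \leq i \leq N,\ i \equiv \ell \pmod d\,\}$, $\ell = 1,\dots,d$, each of size $m_\ell = N/d + O(1)$. Within a fixed class $G_\ell$ the vectors $\{Y_i\}_{i\in G_\ell}$ involve pairwise disjoint blocks of the underlying i.i.d.\ Gaussians, hence are mutually independent. Writing $W_i := f_i(Y_i) - \mathbb{E}[f_i(Y_i)]$ (the expectations are finite since $|f_i(y)| = O(1+\|y\|^2)$ and $\mathbb{E}\|Y_i\|^2 = d$), Lemma~\ref{sums} reduces the claim to a per-class bound:
\[
P\Big(\Big|\tfrac1N\sum_{i=1}^N W_i\Big|\geq\e\Big)\leq\sum_{\ell=1}^d P\Big(\Big|\tfrac1{m_\ell}\sum_{i\in G_\ell}W_i\Big|\geq\tfrac{N\e}{d\,m_\ell}\Big),
\]
so it suffices to prove, for a sum of $m$ \emph{independent} mean-zero terms of the form $W_{i}$, a bound $P(|\tfrac1m\sum_{j=1}^m W_{i_j}|\geq t)\leq Ke^{-\kappa m t^2}$, after which one uses that $d$ is fixed and $N/(d\,m_\ell)\to 1$, together with $n\asymp N$, to convert to $Ke^{-\kappa n\e^2}$.

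The heart of the argument is a Chernoff bound for such an independent sum, and the obstacle is that $f_i(Y_i)$ is \emph{not} sub-Gaussian: the pseudo-Lipschitz (order $2$) growth makes $f_i(Y_i)$ behave like a $\chi^2$ quantity, so $\mathbb{E}[e^{rW_i}]$ is finite only for $r$ in a bounded interval. This is exactly the situation addressed by Lemma~\ref{lem:exp_PL_subgauss_vector_conc}: taking sub-Gaussian variance factor $\nu=1$ and letting $L:=\sup_i L_i$ be a uniform pseudo-Lipschitz constant (finite in all our applications, where the $f_i$ come from a single PL function evaluated at varying but bounded fixed arguments), the lemma gives, for all $|r|<r_0:=[5L(2d+24d^2)^{1/2}]^{-1}$,
\[
\mathbb{E}[e^{rW_i}]\leq e^{cr^2},\qquad c:=50L^2(d+12d^2),
\]
where we use $\mathbb{E}[e^{r(f(X)-\mathbb{E}f(X))}]\leq\mathbb{E}[e^{r(f(X)-f(\tilde X))}]$ (Jensen) and the symmetry of $f(X)-f(\tilde X)$ to cover both signs of $r$, and handle the lower tail by replacing $f_i$ with $-f_i$.

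With this MGF bound, for independent mean-zero $W_{i_1},\dots,W_{i_m}$ and $t>0$,
\[
P\Big(\tfrac1m\sum_{j=1}^m W_{i_j}\geq t\Big)\leq e^{-rmt}\prod_{j=1}^m\mathbb{E}[e^{rW_{i_j}}]\leq e^{-m(rt-cr^2)},\qquad 0<r<r_0.
\]
Optimizing over $r$ in the two regimes: if $t<2cr_0$ the unconstrained minimizer $r=t/(2c)$ is admissible and yields $e^{-mt^2/(4c)}$; if $t\geq 2cr_0$, taking $r=r_0$ gives $e^{-mr_0(t-cr_0)}\leq e^{-mr_0t/2}$, which for $t\leq 1$ is at most $e^{-mr_0t^2/2}$. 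Since in our reduction $t=N\e/(d\,m_\ell)=O(\e)\leq O(1)$ for $\e\in(0,1)$, both regimes give $e^{-\kappa m t^2}$ with $\kappa:=\min\{1/(4c),\,r_0/2\}$, and combining the two tails only doubles the constant. Substituting back, each of the $d$ summands is at most $Ke^{-\kappa' N\e^2}$ with $\kappa'$ absorbing the bounded factors $m_\ell/N\approx 1/d$ and $(N/(d\,m_\ell))^2\approx 1$; summing over the (constant number of) classes and rewriting $N\e^2$ as a constant times $n\e^2$ completes the proof. The steps I expect to require the most care are the two-regime Chernoff optimization—guaranteeing the Gaussian-type rate $e^{-\kappa m\e^2}$ survives even though the per-term moment generating function is controlled only on a bounded interval—and verifying the uniformity of the pseudo-Lipschitz constant across $i$ in the intended applications.
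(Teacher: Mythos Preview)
Your proposal is correct and follows essentially the same approach as the paper: partition $\{1,\dots,N\}$ into residue classes modulo $d$ so that the $Y_i$ within a class are independent, then apply the MGF bound of Lemma~\ref{lem:exp_PL_subgauss_vector_conc} (with $\nu=1$ and $L=\max_i L_i$) together with a Chernoff argument. The only cosmetic difference is that the paper merges the $d$ classes at the MGF level via a Jensen/convex-combination trick with weights $p_j$ and then optimizes over a single $r$, whereas you union-bound over classes at the probability level (via Lemma~\ref{sums}) and do a two-regime optimization in $r$; both routes yield the same $Ke^{-\kappa n\e^2}$ bound.
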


\begin{proof}
Without loss of generality, assume $\mathbb{E}\left[f_i(Y_i)\right]=0,$ for all $i \in [n]$. In what follows we demonstrate
the upper-tail bound:
\begin{equation}
P\left(\frac{1}{n}\sum_{i=1}^{n}f_i(Y_i)\geq \e \right)\leq Ke^{-\kappa n\e^2},
\label{eq:upper.tail}
\end{equation}
and the lower-tail bound follows similarly.  Together they provide the desired result.

Using the Cram\'{e}r-Chernoff method:
\begin{align}
P\left(\frac{1}{n}\sum_{i=1}^{n}f_i(Y_i)\geq \e \right) &=P\left(e^{r\sum_{i=1}^{n}f_i(Y_i)}\geq e^{nr \e}\right) \leq e^{-nr \e}\mathbb{E}\left[e^{r\sum_{i=1}^{n}f_i(Y_i)}\right] \quad \text{ for } r>0.
\label{eq:CC1}
\end{align}
Let $L_i$ be the pseudo-Lipschitz parameters associated with functions $f_i$  for $i=1,...,n$ and define $L := \max_{i\in[n]} L_i$.  In the following, we will show that
\begin{equation}
\mathbb{E}\left[e^{r\sum_{i=1}^{n}f_i(Y_i)}\right] \leq \exp\left(\kappa' n r^2\right), \text{ for } 0<r<[5Ld\sqrt{2d+24d^2}]^{-1},
\label{eq:lem1_0}
\end{equation}
where $\kappa'$ is any constant that satisfies $\kappa'\geq 150L^2d(d+12d^2)$. Then plugging \eqref{eq:lem1_0} into \eqref{eq:CC1}, we can obtain the desired result in \eqref{eq:upper.tail}: $P\left(\frac{1}{n}\sum_{i=1}^{n}f_i(Y_i)\geq \e \right) \leq \exp\{-n(r\e - \kappa'  r^2)\}.$  Set $r = \e/(2\kappa')$, the choice that maximizes the term $(r\e - \kappa'  r^2)$ over $r$ in the exponent in the above.  We can ensure that for $\e \in (0,1)$, $r$ falls within the region required in \eqref{eq:lem1_0} by choosing $\kappa'$ large enough.

Now we show \eqref{eq:lem1_0}.  Define index sets $I_j:=\{j+kd \, | \, k=0,...,\lfloor \frac{n-j}{d}\rfloor\}$ for $j=1,...,d$, let $C_j$ denote the cardinality of $I_j$. 
We notice that for any fixed $j$, the $Y_i$'s are i.i.d.\ for $i\in I_j$.  For example, if $j=1$ then the index set $I_1 = \{1, 1+d, 1+2d, \ldots, 1 + \lfloor \frac{n-1}{d} \rfloor d\}$ and $Y_1 = (Z_1, \ldots, Z_d)$ is independent of $Y_{1+d} = (Z_{1+d}, \ldots, Z_{2d})$, which are both independent of $Y_{1+2d} = (Z_{2d+1}, \ldots, Z_{3d})$, and so on.  Also, we have $[n] = \cup_{j=1}^d I_j$, and $I_j \cap I_s=\emptyset$, for $j\neq s$, making the collection $I_1, I_2, \ldots, I_d$ a partition of $i \in [n]$. Therefore, $\sum_{i=1}^{n}f_i(Y_i) = \sum_{j=1}^d\sum_{i\in I_j} f_i(Y_i) = \sum_{j=1}^{d}p_j \cdot \frac{1}{p_j}\sum_{i\in I_j}f_i(Y_i),$ where $0 < p_1,...,p_d <1$ are probabilities satisfying $\sum_{j=1}^d p_j=1$.  Using the above,
\begin{align}
&\mathbb{E}\left[\exp\left(r\sum_{i=1}^{n}f_i(Y_i)\right)\right]
=\mathbb{E}\left[\exp\left(\sum_{j=1}^{d} p_j \cdot \frac{r}{p_j}\sum_{i\in I_j}f_i(Y_i)\right)\right] \overset{(a)}{\leq} \sum_{j=1}^d p_j \mathbb{E}\left[\exp\left(\frac{r}{p_j}\sum_{i\in I_j}f_i(Y_i)\right)\right]  \nonumber \\
&\overset{(b)}{=}\sum_{j=1}^d p_j \prod_{i\in I_j}\mathbb{E}\left[\exp\left(\frac{r}{p_j}f_i(Y_i)\right)\right] \overset{(c)}{\leq} \sum_{j=1}^d p_j \exp\left(\frac{50C_jL^2r^2 (d+12d^2)}{p_j^2}\right),
\label{eq:lem1_1}
\end{align}
where step $(a)$ follows from Jensen's inequality, step $(b)$ from the fact that the $Y_i$'s are independent for $i\in I_j$,
and step $(c)$ from Lemma~\ref{lem:exp_PL_subgauss_vector_conc} noting that the marginal distribution of any element of $Y_i$ is Gaussian and therefore sub-Gaussian with variance factor $\nu=1$ and restriction 
\begin{equation}
0 < r < [5L\sqrt{2d + 24d^2}]^{-1} \max_j p_j.
\label{eq:r_region}
\end{equation}

Let $p_j=\sqrt{C_j}/C$, where $C=\sum_{j=1}^d \sqrt{C_j}$ ensuring that $\sum_{j=1}^d p_j = 1$. Then, we have
\begin{align*}
\sum_{j=1}^d p_j \exp\left(\frac{50 C_jL^2r^2 (d+12d^2)}{p_j^2}\right)= e^{50C^2L^2r^2 (d+12d^2)} &\overset{(a)}{\leq} e^{150dL^2(d+12d^2)nr^2} \leq e^{\kappa' n r^2},
%\label{eq:lem1_2}
\end{align*}
whenever $\kappa' \geq 150dL^2(d+12d^2)$.
%Therefore, plugging results \eqref{eq:lem1_1} and \eqref{eq:lem1_2} into \eqref{eq:CC1} we find:
%\begin{equation}
%P\left(\frac{1}{n}\sum_{i=1}^{n}f_i(Y_i)\geq \e\right)\leq 
%\exp\left(-nr \e+50C^2L^2r^2(d+12d^2)\right) \overset{(d)}{=}\exp\left(-\frac{n^2 \e^2}{200L^2C^2 (d+12d^2)}\right).
%\label{eq:lem1_3}
%\end{equation}
%where step $(d)$ is obtained by setting $r=n \e /(100 L^2 C^2(d+12d^2))$, which minimizes $-nr \e+50 C^2L^2r^2 (d+12d^2)$ over $r>0$.  Note that this specification for $r$ determines the effective region for $\e$ since step $(c)$ above requires $0 < r < [5L\sqrt{2d + 24d^2}]^{-1} \max_j C_j/C.$  We perform this calculation -- to specify the effective region for $\e$ -- at the end.  Finally, notice that
In the above, step $(a)$ follows from:
\begin{align*}
C^2&=\left(\sum_{j=1}^d\sqrt{C_j}\right)^2=\sum_{j=1}^d C_j +\sum_{j=1}^d\sum_{k\neq j}\sqrt{C_jC_k} \overset{(b)}{\leq} n +d(d-1)C_1  
\overset{(c)}{\leq} dn + 2d(d-1) <3dn, 
%\label{eq:lem1_4}
\end{align*} 
where step $(b)$ holds because $C_1= \max_{j\in[d]}C_j$ and step $(c)$ holds because $C_1=\lfloor\frac{n-1}{d}\rfloor+1\leq\frac{n}{d}+2$.  

Finally, we consider the effective region for $r$ as required in \eqref{eq:r_region}.
Notice that $\max_j p_j = \sqrt{C_1}/C > 1/d$. Hence, if we require $0<r< [5Ld\sqrt{2d+24d^2}]^{-1}$, then \eqref{eq:r_region} is satisfied.

%Plugging \eqref{eq:lem1_4} into \eqref{eq:lem1_3}, we find
%\begin{equation*}
%P\left(\frac{1}{n}\sum_{i=1}^{n}f_i(Y_i)\geq  \e \right)\leq \exp\left(-\frac{n\e^2}{200(d+12d^2)L^2(d+2d(d-1)/n)}\right).
%\end{equation*}
%
%To complete the proof, we calculate the effective region for $\e$. We require
%\begin{equation}
%r = \frac{n \e}{100 L^2 C^2(d+12d^2)} < \frac{ \max_j \sqrt{C_j}}{5CL\sqrt{2d + 24d^2}},
%\label{eq:lem1_5}
%\end{equation}
%from which it follows that we require,
%{\color{red}
%\begin{equation*}
%\e < 10 L \sqrt{2d+24d^2}\frac{C \max_j \sqrt{C_j}}{n}.
%\end{equation*}
%Notice that 
%\begin{equation*}
%\frac{C \max_j \sqrt{C_j}}{n} = \frac{\left(\sum_j \sqrt{C_j}\right)\sqrt{C_1}}{n} \geq \frac{C_1}{n} \geq \frac{n-1}{dn} \geq \frac{1}{2d},\quad \text{for } n\geq 2. 
%\end{equation*}
%Hence, by requiring $\e < 5 L \sqrt{2d+24d^2}/d$, the requirement for $r$ in \eqref{eq:lem1_5} is satisfied except the uninteresting case where $n<2$.
%}
%\begin{align*}
%\e < \frac{10LC^2\sqrt{2d+24d^2}}{n} \leq \frac{10L(2d^2 - d)\sqrt{2d+24d^2}}{n},
%\end{align*}
%where the last inequality follows from \eqref{eq:lem1_4} and the fact that $d \geq 1$.
\end{proof}

The following lemma shows us that a normalized sum of pseudo-Lipschitz functions with Markov chain input vectors concentrate at its expected value under certain conditions on the Markov chain.
\begin{applem}
\label{lem:PLMCconc_new}
Let $\{\beta_i\}_{i\in\mathbb{N}}$ be a time-homogeneous, stationary Markov chain on a bounded state space $S \subset \mathbb{R}$.
%, meaning there is $M>0$ such that $|x|\leq M, \forall x \in S$.  
Denote the transition probability measure of $\{\beta_i\}_{i\in\mathbb{N}}$ by $r(x,dy)$ and stationary probability measure by $\gamma$. 
Assume that the Markov chain is reversible and geometrically ergodic on $L^2(\gamma)$ as defined in Definition \ref{def:geom_ergo}.  
%Further assume that $\gamma$ has finite second and fourth moment.
%, denoted by $m_2$ and $m_4$, respectively.

Define $\{X_i\}_{i \in [n]}$ as $X_i=(\beta_i,...,\beta_{i+d-1})\in S^d$.  
Let $f:\mathbb{R}^d\rightarrow\mathbb{R}$ be a measurable function that satisfies the pseudo-Lipschitz condition.
%with pseudo-Lipschitz constant $L$.  
Then, for all $\e\in (0,1)$,
%Then, for $0< \e <u(L, d, M, m_2, m_4)$, where $u(L, d, M, m_2, m_4)$ is a function of the pseudo-Lipschitz constant, $L$, the dimension, $d$, the bound of the state space, $M$, and the moments of the stationary probability measure $\gamma$, $m_2$ and $m_4$, 
there exists constants $K,\kappa>0$ that are independent of $n,\e$, such that $P\left(\left\vert\frac{1}{n}\sum_{i=1}^n f(X_i)-\mathbb{E}_\pi f \right\vert\geq  \e \right)\leq Ke^{-\kappa n \e^2},$ where the probability measure $\pi$ is defined as $\pi(dx)=\pi(dx_1,...,dx_d) := \prod_{i=2}^d r(x_{i-1},dx_{i})\gamma(dx_1).$
\end{applem}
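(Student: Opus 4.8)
The plan is to adapt the partition-and-Chernoff scheme used for the i.i.d.\ Gaussian case in Lemma~\ref{lem:PL_overlap_gauss_conc}, using Lemma~\ref{lem:spectral_gap_preserved} to replace the within-block independence there by a spectral-gap bound. First I would reduce to a clean setting: without loss of generality $\mathbb{E}_\pi f = 0$, and since $S$ is bounded and $f$ is pseudo-Lipschitz, $f$ is bounded on $S^d$, say $\|f\|_\infty =: b < \infty$; it then suffices, by symmetry, to prove the upper-tail bound $P\big(\frac{1}{n}\sum_i f(X_i)\ge\e\big)\le Ke^{-\kappa n\e^2}$.

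Next, exactly as in the proof of Lemma~\ref{lem:PL_overlap_gauss_conc}, I would partition $[n]=\bigcup_{j=1}^d I_j$ with $I_j := \{\,j+kd : 0\le k\le\lfloor(n-j)/d\rfloor\,\}$, so that $C_j := |I_j|\le n/d+2$ and $\big(\sum_{j=1}^d\sqrt{C_j}\big)^2\le 3dn$. The point of this partition is that for each fixed $j$ the window vectors $\{X_i : i\in I_j\}$ are \emph{consecutive, non-overlapping} length-$d$ blocks of $\{\beta_i\}$, hence --- up to an index shift absorbed by stationarity --- are exactly the chain $\{Y_k\}$ of Lemma~\ref{lem:spectral_gap_preserved}, which is stationary with stationary measure $\pi$ and whose transition operator $P$ satisfies $\beta_P := \sup_{h\in L_0^2(\pi)}\|Ph\|_{2,\pi}/\|h\|_{2,\pi}<1$. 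Writing $S_j := \sum_{i\in I_j}f(X_i)$, $p_j := \sqrt{C_j}/\big(\sum_{l=1}^d\sqrt{C_l}\big)$, and $\sum_{i=1}^n f(X_i)=\sum_{j=1}^d p_j\cdot p_j^{-1}S_j$, the Cram\'{e}r--Chernoff inequality together with Jensen's inequality gives, for every $r>0$,
\[
P\left(\frac{1}{n}\sum_{i=1}^n f(X_i)\ge\e\right)\;\le\; e^{-nr\e}\sum_{j=1}^d p_j\,\mathbb{E}\big[e^{(r/p_j)S_j}\big].
\]

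It then remains to bound, for each $j$, the moment generating function $\mathbb{E}[e^{sS_j}]=\mathbb{E}\big[\exp\big(s\sum_{k=1}^{C_j}f(Y_k)\big)\big]$ of a bounded, mean-zero additive functional along the stationary chain $\{Y_k\}$, whose transition operator has $L_0^2(\pi)$-norm $\beta_P<1$ (equivalently, whose multiplicative reversibilization $P^*P$ has $L_0^2(\pi)$-norm $\beta_P^2<1$). For this I would invoke the standard Bernstein/Hoeffding-type concentration estimate for bounded additive functionals of a (not necessarily reversible) Markov chain with a spectral gap, obtaining $\mathbb{E}[e^{sS_j}]\le A\,e^{\kappa_0 C_j s^2 b^2}$ for constants $A,\kappa_0$ depending only on $\beta_P$ and valid for $|s|\le c_0/b$; alternatively this can be proved directly by writing $\mathbb{E}[e^{sS_j}]$ as the iterated pairing $\langle\mathbf{1},(M_{e^{sf/2}}PM_{e^{sf/2}})^{C_j-1}M_{e^{sf}}\mathbf{1}\rangle_\pi$ with $M_g$ the operator of multiplication by $g$, splitting each factor along $L^2(\pi)=\mathrm{span}\{\mathbf{1}\}\oplus L_0^2(\pi)$, and bounding the norm of the tilted operator restricted to $L_0^2(\pi)$ by $\beta_P+C'|s|b$, which is $<1$ for $|s|b$ small. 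Substituting $s=r/p_j$, using $C_j\le n/d+2$ and $\big(\sum_{l=1}^d\sqrt{C_l}\big)^2\le 3dn$, and keeping $r/p_j$ admissible via $\max_j p_j>1/d$ exactly as in Lemma~\ref{lem:PL_overlap_gauss_conc}, yields $\sum_{j=1}^d p_j\,\mathbb{E}[e^{(r/p_j)S_j}]\le dA\,e^{\kappa_1 n r^2 b^2}$ for $0<r<c_1/b$; plugging this into the display above and optimizing over $r$ (with $r=\Theta(\e)$, which stays admissible for $\e\in(0,1)$ once the constants are chosen) gives $P\big(\frac{1}{n}\sum_i f(X_i)\ge\e\big)\le Ke^{-\kappa n\e^2}$, and the lower tail is identical.

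The main obstacle is the per-block moment generating function estimate: the block chain $\{Y_k\}$ is \emph{not} reversible (as noted within the proof of Lemma~\ref{lem:spectral_gap_preserved}), so the spectral theorem does not apply directly to the tilted operator, and one must instead either appeal to non-reversible concentration results --- via the pseudo-spectral gap or the multiplicative reversibilization, whose $L_0^2(\pi)$-norm is $\beta_P^2<1$ --- or carry out the perturbation argument on $M_{e^{sf/2}}PM_{e^{sf/2}}$ that controls the interaction between its action on $\mathrm{span}\{\mathbf{1}\}$ and on $L_0^2(\pi)$. Boundedness of $f$, which here is automatic from boundedness of $S$, is precisely what makes this perturbation controllable; to dispense with it one would instead combine the sub-Gaussian bound of Lemma~\ref{lem:exp_PL_subgauss_vector_conc} with Lemmas~\ref{lem:roberts} and~\ref{lem:indp_couple_geoErgodic} applied to a coupled chain.
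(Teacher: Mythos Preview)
Your approach is correct and closely parallels the paper's, with two differences worth noting.

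First, you skip the symmetrization step. The paper introduces an independent copy $\{\tilde X^{(1)}_i\}$, sets $Z^{(1)}_i=(X^{(1)}_i,\tilde X^{(1)}_i)$ and $g(Z)=f(X)-f(\tilde X)$, and then invokes Lemma~\ref{lem:indp_couple_geoErgodic} on the coupled $\beta$-chain before applying Lemma~\ref{lem:spectral_gap_preserved}. You instead apply Lemma~\ref{lem:spectral_gap_preserved} directly to $\{\beta_i\}$ and work with $f$ itself. Since $S$ is bounded and $f$ is pseudo-Lipschitz, $f$ is bounded on $S^d$ and mean-zero by centering, which is all the subsequent argument actually uses; so your simplification is legitimate. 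The paper's coupling does not buy anything essential here (it would matter for unbounded $S$, as you note at the end).

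Second, for the MGF of $\sum_k f(Y_k)$ along the non-reversible block chain with $\beta_P<1$, the paper carries out exactly the iterated-splitting you sketch: it writes the MGF as $a_{n_1}=\sum_{i=1}^{n_1} b_i a_{n_1-i}$ with $b_1=\mathbb{E}_\mu m$, $m_1=m-b_1$, and recursively $b_i=\mathbb{E}_\mu[m\,Qm_{i-1}]$, $m_i=m\,Qm_{i-1}-b_i$, following Lezaud (1998). The bound $\|m_i\|_{2,\mu}\le (e^{rM_g}\beta_Q)^{i-1}\|m_1\|_{2,\mu}$ comes precisely from $\|Qh\|_{2,\mu}\le\beta_Q\|h\|_{2,\mu}$ for $h\in L_0^2(\mu)$, and the induction $a_i\le(1+Cr^2)^i$ then yields the sub-Gaussian MGF for $r$ small. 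This is the concrete execution of your ``splitting along $\mathrm{span}\{\mathbf 1\}\oplus L_0^2$'' perturbation argument; your appeal to a black-box non-reversible Bernstein bound would also work but is less self-contained. A minor point: the paper combines the $d$ subsequences by the union bound (Lemma~\ref{sums}) rather than the Jensen-with-weights trick you borrow from Lemma~\ref{lem:PL_overlap_gauss_conc}; either is fine here since the $S_j$ need not be independent for Jensen to apply pointwise before taking expectations.
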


\begin{proof}
First, we split $\{X_i\}_{i \in [n]}$ into $d$ subsequences, each containing every $d^{th}$ term of $\{X_i\}_{i \in [n]}$, beginning from $1, 2, \ldots, d$.  Label these $\{X^{(1)}_i\}_{i\in [n_1]},...,\{X^{(d)}_i\}_{i \in [n_d]}$ with $\{X^{(s)}_i\}_{i \in [n_s]}:=\{X_{s+kd}:k=1,...,n_s\}$, where $n_s = \lfloor \frac{n-d-s+1}{d} \rfloor$, for $s=1,...,d$.  

Notice that $\sum_{i=1}^n f(X_i) = \sum_{s=1}^d \sum_{i=1}^{n_s} f(X^{(s)}_i)$. Using Lemma~\ref{sums}, we have
\begin{equation}
P\left( \left\vert \frac{1}{n}\sum_{i=1}^{n} f(X_i) - \mathbb{E}_\pi f  \right\vert \geq \epsilon \right) \leq \sum_{s=1}^{d} P\left( \left\vert \frac{1}{n_s}\sum_{i=1}^{n_s} f(X^{(s)}_i) - \mathbb{E}_\pi f \right\vert \geq \frac{n\epsilon}{dn_s} \right).
\label{eq:lem2_new_split}
\end{equation}

In the following, without loss of generality, we assume $\mathbb{E}_{\pi}f=0$ and demonstrate the upper-tail bound for $\{X^{(1)}_i\}_{i \in [n_1]}$:
\begin{equation}
P\left(\frac{1}{n_1}\sum_{i=1}^{n_1} f(X_i^{(1)})\geq \epsilon \right) \leq Ke^{-\kappa n_1 \epsilon^2}.
\label{eq:lem2_new_3}
\end{equation}
The lower-tail bound follows similarly, as do the corresponding results for $s = 2, 3, \ldots, d$.  Together using \eqref{eq:lem2_new_split} these provide the desired result. Using the Cram\'{e}r-Chernoff method: for $r>0$,
\begin{equation}
P\left(\frac{1}{n_1}\sum_{i=1}^{n_1} f(X^{(1)}_i)\geq \e \right)
=P\left(\exp\{r\sum_{i=1}^{n_1} f(X_i^{(1))}\}\geq \exp\{rn_1\e\}\right)
\leq \exp\{-rn_1 \e\}\mathbb{E}\left[\exp\{r\sum_{i=1}^{n_1} f(X^{(1)}_i)\}\right].
\label{eq:lem2_new_4}
\end{equation}
In what follows we will upper bound the expectation $\mathbb{E}\left[e^{r\sum_{i=1}^{n_1}f(X^{(1)}_i)}\right]$ to show \eqref{eq:lem2_new_3}.

Let $\{\tilde{X}^{(1)}_i\}_{i \in [n_1]}$ be an independent copy of $\{X^{(1)}_i\}_{i \in [n_1]}$.
By Jensen's inequality, we have
\begin{equation*}
\mathbb{E}\left[\exp\left\{-r\sum_{i=1}^{n_1}f(\tilde{X}^{(1)}_i)\right\}\right]
\geq \exp\left\{-r\mathbb{E}\left[\sum_{i=1}^{n_1}f(\tilde{X}^{(1)}_i)\right]\right\}
= \exp\left\{-r\sum_{i=1}^{n_1}\mathbb{E}\left[f(\tilde{X}^{(1)}_i)\right]\right\}=1.
\end{equation*}
Therefore,
\begin{equation}
\begin{split}
\mathbb{E}\left[\exp\left\{r\sum_{i=1}^{n_1}f(X^{(1)}_i)\right\}\right]
&\leq\mathbb{E}\left[\exp\left\{r\sum_{i=1}^{n_1}f(X^{(1)}_i)\right\}\right]\mathbb{E}\left[\exp\left\{-r\sum_{i=1}^{n_1}f(\tilde{X}^{(1)}_i)\right\}\right] \\
& =\mathbb{E}\left[\exp\left\{r\sum_{i=1}^{n_1}\left(f(X^{(1)}_i)-f(\tilde{X}^{(1)}_i)\right)\right\}\right].
\label{eq:lem2_new_4aa}
\end{split}
\end{equation}
Let $Z^{(1)}_i:=(X^{(1)}_i,\tilde{X}^{(1)}_i)$, and $g(Z^{(1)}_i):=f(X^{(1)}_i)-f(\tilde{X}^{(1)}_i)$ for $i = 1, 2, \ldots, n_1$.
We have shown $\mathbb{E}[\exp\{r\sum_{i=1}^{n_1}f(X^{(1)}_i)\}]\leq\mathbb{E}[\exp\{r\sum_{i=1}^{n_1}g(Z^{(1)}_i)\}]$ and therefore,
in what follows we provide an upper bound for $\mathbb{E}[\exp\{r\sum_{i=1}^{n_1}g(Z^{(1)}_i)\}]$ which can be used in \eqref{eq:lem2_new_4}.

We begin by demonstrating some properties of the sequence $\{Z^{(1)}_i\}_{i \in [n_1]}$, which will be used in the proof.
By construction, $\{Z^{(1)}_i\}_{i \in [n_1]}$ is a time-homogeneous Markov chain on state space $D=S^d\times S^d$. Denote its marginal probability measure by $\mu$ and transition probability measure by $q(z,dz')$. In order to obtain more useful properties, it is helpful to relate $\{Z^{(1)}_i\}_{i \in [n_1]}$ to the original Markov chain $\{\beta_i\}_{i\in\mathbb{N}}$, which we have assumed to be reversible and geometrically ergodic. 

The construction of $\{Z^{(1)}_i\}_{i \in [n_1]}$ can alternatively be thought of as follows. Let $\{\tilde{\beta}_i\}_{i\in\mathbb{N}}$ be an independent copy of $\{\beta_i\}_{i\in\mathbb{N}}$. Then by Lemma \ref{lem:indp_couple_geoErgodic}, $\{(\beta_i,\tilde{\beta}_i)\}_{i \in\mathbb{N}}$ is reversible and geometrically ergodic.
Also notice that the elements of $\{Z^{(1)}_i\}_{i \in [n_1]}$ consist of successive non-overlapping elements of $\{(\beta_i,\tilde{\beta}_i)\}_{i \in\mathbb{N}}$, same as the construction of $\{Y_i\}_{i\in\mathbb{N}}$ in Lemma \ref{lem:spectral_gap_preserved}. Therefore, the results in Lemma \ref{lem:spectral_gap_preserved} imply that the marginal probability measure $\mu$ is a stationary measure of the transition probability measure $q(z,dz')$.
Moreover, the linear operator $Q$ defined as 
\begin{equation}
\label{eq:lem2_new_4a}
Qh(z):=\int_D h(z')q(z,dz')
\end{equation}
satisfies:
\begin{equation}
\beta_Q:=\sup_{h\in L^2_0(\mu)} \frac{\|Qh\|_{2,\mu}}{\|h\|_{2,\mu}} <1.
\label{eq:lem2_new_betaQ}
\end{equation}
With the result $\beta_Q<1$, we are now ready to bound  $\mathbb{E}[\exp\{r\sum_{i=1}^{n_1}g(Z^{(1)}_i)\}]$, where we will use a method similar to the one introduced in \cite[Section 4]{lezaud1998}.

Define $m(z) := \exp\left(rg(z)\right)$, for all $z \in D$, and so we can represent the expectation that we hope to upper bound in the following way:
\be
\mathbb{E}[\exp\{r\sum_{i=1}^{n_1} g(Z^{(1)}_i)\}] = \mathbb{E}\left[\prod_{i=1}^{n_1} m(Z^{(1)}_i)\right].
\label{eq:lem2_new_05}
\ee
To provide an upper bound for \eqref{eq:lem2_new_05}, we first define a sequence $\{a_i\}_{i \in [n_1]}$ as $a_0=1$ and
\begin{equation}
a_i = \mathbb{E}[\exp\{r\sum_{j=1}^i g(Z^{(1)}_j)\}] = \mathbb{E}\left[\prod_{j=1}^{i} m(Z^{(1)}_j)\right], \quad \text{ for } 1\leq i\leq n_1.
\label{eq:lem2_new_05a}
\end{equation}
Note then that $a_{n_1}$ equals the expectation in \eqref{eq:lem2_new_05} and we have
\begin{align}
a_{n_1} = \mathbb{E}\left[\prod_{i=1}^{n_1} m(Z^{(1)}_i) \right] &\overset{(a)}{=} \int_{D^{n_1}} \mu(dz_1) m(z_1) \prod_{i=2}^{n_1} q(z_{i-1},dz_i)m(z_i) \nonumber \\
&= \int_{D^{n_1-1}} \mu(dz_1) m(z_1) \prod_{i=2}^{n_1-1} q(z_{i-1},dz_i)m(z_i) \int_D q(z_{n_1-1},dz_{n_1})m(z_{n_1}).
\label{eq:lem2_new_06}
\end{align}
In step $(a)$ we use the fact that $\{Z^{(1)}_i\}_{i \in [n_1]}$ is a Markov Chain in its stationary distribution, $\mu$, with probability transition measure $q(z, dz')$.  Now, let $b_1:= \mathbb{E}_{\mu}m$, which is a constant value, and $m_1:=m-b_1$.  Then $m(z_{n_1}) = b_1 + m_1(z_{n_1})$, and so it follows from \eqref{eq:lem2_new_06},
\begin{align}
a_{n_1} &=  \int_{D^{n_1-1}} \mu(dz_1) m(z_1) \prod_{i=2}^{n_1-1} q(z_{i-1},dz_i)m(z_i) \int_D q(z_{n_1-1},dz_{n_1})\left(b_1+m_1(z_{n_1})\right) \nonumber \\
&= b_1\int_{D^{n_1-1}} \mu(dz_1) m(z_1) \prod_{i=2}^{n_1-1} q(z_{i-1},dz_i)m(z_i) \nonumber \\
& \qquad + \int_{D^{n_1-1}} \mu(dz_1) m(z_1) \prod_{i=1}^{n_1-1} q(z_{i-1},dz_i)m(z_i) \int_D q(z_{n_1-1},dz_{n_1})m_1(z_{n_1}) \nonumber \\
&\overset{(b)}{=} a_{n_1-1}b_1 +  \int_{D^{n_1-1}} \mu(dz_1) m(z_1) \prod_{i=2}^{n_1-1} q(z_{i-1},dz_i)m(z_i) Qm_1(z_{n_1-1}). \label{eq:lem2_new_07}
\end{align}
Step $(b)$ uses the definition of $a_{n_1-1}$ given in \eqref{eq:lem2_new_05a} and the linear operator defined in \eqref{eq:lem2_new_4a}.  Now consider the integral in \eqref{eq:lem2_new_07}, which we split as in \eqref{eq:lem2_new_06} in the following:
\begin{align*}
\int_{D^{n_1-1}} &\mu(dz_1) m(z_1) \prod_{i=2}^{n_1-1} q(z_{i-1},dz_i)m(z_i)Qm_1(z_{n_1-1}) \\
&= \int_{D^{n_1-1}} \mu(dz_1) m(z_1) \prod_{i=2}^{n-2} q(z_{i-1},dz_i)m(z_i) \int_D q(z_{n_1-2},dz_{n_1-1})m(z_{n_1-1})Qm_1(z_{n_1-1}).
 \label{eq:lem2_new_8}
%&=  \int_{D^{n_s-2}} \mu(dz_1) \prod_{i=2}^{n-2} \left(q(z_{i-1},dz_i)m(z_i)\right)\int_D q(z_{n_s-2},dz_{n_s-1})\left(b_2+m_2(z_{n_s-1})\right),
\end{align*}
Then by defining $b_2:=\mathbb{E}_{\mu}\left[mQm_1\right]$, which is again a constant value, and $m_2 := mQm_1-b_2$, we can represent $a_{n_1}$ as the following sum using the above and step like those in \eqref{eq:lem2_new_07}.
\begin{align}
a_{n_1} &= a_{n_1-1}b_1 + a_{n_1-2}b_2 + \int_{D^{n_1-2}} \mu(dz_1) m(z_1) \prod_{i=2}^{n-2} q(z_{i-1},dz_i)m(z_i)Qm_2(z_{n_1-2}).
\end{align}
Continuing in this way -- defining constant values $b_i:=\mathbb{E}_{\mu}\left[mQm_{i-1}\right]$ and $m_i:=mQm_{i-1}-b_i$ for $i=2,...,n_1$, then splitting the integral as in \eqref{eq:lem2_new_8} -- we represent $a_{n_1}$ recursively as $a_{n_1}=\sum_{i=1}^{n_1} b_ia_{n_1-i}$. 

Again, our goal is to provide an upper bound for $a_{n_1}$ which we can establish through the recursive relationship $a_{n_1}=\sum_{i=1}^{n_1} b_ia_{n_1-i}$ if we can upper bound $b_1,...,b_{n_1}$.  First consider $b_1$. Let $Z \sim \mu$. % and $X, \tilde{X} \in \mathbb{R}^d \sim \pi$ independent.
\begin{equation*}
b_1 = \mathbb{E}\left[\exp\{rg(Z)\}\right] = \mathbb{E}\left[\lim_{n\rightarrow\infty}\sum_{k=0}^{n} \frac{r^k}{k!}(g(Z))^k\right].
\end{equation*}
%Define the partial sum as $s_n:=\sum_{k=0}^n \frac{r^k}{k!}(g(Z))^k$. Moreover, notice that 
Consider the partial sum $\sum_{k=0}^n \frac{r^k}{k!}(g(Z))^k$. Moreover, notice that
\begin{equation*}
\sup_{z\in D} |g(z)| = \sup_{x\in S^d}\sup_{\tilde{x}\in S^d} |f(x)-f(\tilde{x})| \overset{(a)}{\leq} \sup_{x\in S^d} \sup_{\tilde{x}\in S^d}L(1+\|x\|+\|\tilde{x}\|)\|x-\tilde{x}\| \overset{(b)}{\leq} L(1+2\sqrt{d}M)(2\sqrt{d}M),
\end{equation*}
where step $(a)$ holds since $f(\cdot)$ is pseudo-Lipschitz with constant $L$ and step $(b)$ due to $\|x-\tilde{x}\|\leq \|x\|+\|\tilde{x}\|$ and the boundedness of $S^d$: $\|x\|\leq M\sqrt{d}$ for some constant $M>0$ and all $x \in S^d$. Let $M_g=L(1+2\sqrt{d}M)(2\sqrt{d}M)$.  Then for each $n$, %$\abs{s_n}$ is bounded by
\begin{equation*}
\sum_{k=0}^n \frac{r^k}{k!}(g(Z))^k \leq \sup_{z\in D} \sum_{k=0}^n \frac{r^k}{k!}|g(z)|^k \leq\sum_{k=0}^n \frac{r^k}{k!} M_g^k \leq \sum_{k=0}^\infty \frac{r^k}{k!}M_g^k = \exp\{rM_g\}.
\end{equation*}
Since the constant $\exp\{rM_g\}$ is integrable with respect to any proper probability measure, we have 
\be
\begin{split}
&b_1 = \mathbb{E}\left[\lim_{n\rightarrow\infty}\sum_{k=0}^{n} \frac{r^k}{k!}(g(Z))^k\right] \overset{(a)}{=} \lim_{n\rightarrow\infty}\sum_{k=0}^{n} \frac{r^k}{k!}\mathbb{E}[(g(Z))^k] \overset{(b)}{\leq} 1 + \mathbb{E}[ (g(Z))^2 ]\sum_{k=2}^\infty \frac{r^kM_g^{k-2}}{k!}\\
&= 1 + \frac{r^2\mathbb{E}[ (g(Z))^2]}{2}\sum_{k=2}^\infty \frac{(rM_g)^{k-2}}{k!/2} \overset{(c)}{\leq} 1 + \frac{r^2\mathbb{E}[ (g(Z))^2]}{2}\sum_{k=2}^\infty \frac{(rM_g)^{k-2}}{(k-2)!}  = 1 + \frac{r^2\mathbb{E}[ (g(Z))^2]}{2}\exp\{rM_g\},%\qquad \text{for } 0< r < \frac{1}{M_g},
\label{eq:lem2_new_b1}
\end{split}
\ee
where step $(a)$ follows the dominated convergence theorem, step $(b)$ holds since $\mathbb{E}[g(Z)]=0$ and $\mathbb{E}[(g(Z))^k]\leq M_g^{k-2}\mathbb{E}[ (g(Z))^2]$, and step $(c)$ holds since $(k-2)! = k!/(k(k-1)) \leq k!/2 $ for $k \geq 2$ with the convention $0!=1$.

Next we'll bound $b_i$ for $i = 2, 3, \ldots$.  To do this we first establish an upper bound on $\norm{m_i}_{2, \mu}$ with the norm defined in \eqref{eq:lem2_new_betaQ}.
\begin{align*}
&\|m_i\|_{2,\mu}=\|mQm_{i-1}-b_i\|_{2,\mu}=\sqrt{\|mQm_{i-1}\|_{2,\mu}^2 -b_i^2} \leq \|mQm_{i-1}\|_{2,\mu} \\
&\overset{(a)}{\leq} \exp\{rM_g\}\|Qm_{i-1}\|_{2,\mu} \overset{(b)}{\leq} \exp\{rM_g\}\beta_Q\|m_{i-1}\|_{2,\mu}.
\end{align*}
Step $(a)$ holds since $\sup_{z\in D} m(z)=\sup_{z\in D} \exp\{rg(z)\}\leq \exp\{rM_g\}$.
Step $(b)$ holds since $E_{\mu}m_i=0$, for all $i=1,...,n$ by construction, and so $\|Qm_i\|_{2,\mu}\leq \beta_Q\|m_i\|_{2,\mu}$ by \eqref{eq:lem2_new_betaQ}.
Hence, extending the above result recursively, we find
\be
\|m_i\|_{2,\mu}\leq (\exp\{rM_g\}\beta_Q)^{i-1}\|m_1\|_{2,\mu}.
\label{eq:lem2_new_9}
\ee
Let $\langle f_1, f_2 \rangle_{\mu} = \int f_1(z) f_2(z) \mu(dz)$.  We use this to bound $b_i$ in the following by noting that $b_i = \mathbb{E}_{\mu}[mQm_{i-1}] = \langle m, Q m_{i-1} \rangle_{\mu}=\langle m_1+b_1, Q m_{i-1} \rangle_{\mu}=\langle m_1, Q m_{i-1} \rangle_{\mu}$, where the last equality holds because
\begin{align*}
\langle b_1, Q m_{i-1} \rangle_{\mu} &= b_1\int_{z\in D} Qm_{i-1}(z) \mu(dz) = b_1 \int_{z\in D} \int_{z'\in D} m_{i-1}(z')q(z,dz')\mu(dz)\\
&\overset{(a)}{=} b_1 \int_{z'\in D} m_{i-1}(z') \int_{z\in D} q(z,dz')\mu(dz) \overset{(b)}{=} b_1 \int_{z\in D} m_{i-1}(z')\mu(dz')\overset{(c)}{=}0.
\end{align*}
In the above, step $(a)$ follows from Fubini's Theorem, step $(b)$ follows from the fact that $\mu$ is the stationary distribution of $q(z,dz')$, and step $(c)$ follows from the construction of $m_i$'s, which says that $\mathbb{E}_{\mu}m_i=0,$ for $i = 2, 3, \ldots$. Then,
\begin{equation}
\begin{split}
b_i =  \langle m_1,Qm_{i-1} \rangle_\mu \overset{(c)}{\leq} \|m_1\|_{2,\mu}\|Qm_{i-1}\|_{2,\mu}
 &\overset{(d)}{\leq} \beta_Q(\beta_Q e^{rM_g})^{i-2}\|m_1\|_{2,\mu}^2, %\overset{(e)}{\leq} \beta_Q(\beta_Q e^{rM_g})^{i-2}\|m_1\|_{2,\mu}^2,
\label{eq:lem2_new_9a}
\end{split}
\end{equation}
where step $(c)$ follows Cauchy-Schwarz inequality and step $(d)$ follows from the fact that $\|Qm_{i-1}\|_{2,\mu}\leq \beta_Q\|m_{i-1}\|_{2,\mu}$ by \eqref{eq:lem2_new_betaQ} and \eqref{eq:lem2_new_9}.  %, and step $(e)$ from the fact that $\|m_1\|_{2,\mu}^2 = \|m\|_{2,\mu}^2 - b_1^2$ and therefore $\|m\|_{2,\mu}^2 \leq \|m_1\|_{2,\mu}^2$. \textcolor{cyan}{Something is wrong with step $(e).$  If $\|m_1\|_{2,\mu}^2 = \|m\|_{2,\mu}^2 - b_1^2$ then $\|m_1\|_{2,\mu}^2 + b_1^2 = \|m\|_{2,\mu}^2$ and so $\|m_1\|_{2,\mu}^2 \leq \|m\|_{2,\mu}^2$.  I think this is ok, because $\|m\|_{2,\mu}^2 \leq 1 + 2r^2 \mathbb{E}[ (g(Z))^2]\exp\{2rM_g\}$ (or something like this by an argument similar to \eqref{eq:lem2_new_b1}) and so replacing $m_1$ by $m$ after step $(e)$ should work.} 
Now let $Z \sim \mu$ and we bound $\|m_1\|_{2,\mu}^2$ as follows
\begin{align}
\|m_1\|_{2,\mu}^2 = \mathbb{E}[e^{2rg(Z)}]-(\mathbb{E}[e^{rg(Z)}])^2
&\overset{(f)}{\leq} 1 + 2r^2\mathbb{E}[(g(Z))^2]e^{2rM_g}- e^{2r\mathbb{E}[g(Z)]} \overset{(g)}{=} 2r^2\mathbb{E}[(g(Z))^2]e^{2rM_g}, \label{eq:lem2_new_11}
\end{align}
where step $(f)$ uses similar approach to that used to bound $b_1$ in \eqref{eq:lem2_new_b1} and Jensen's inequality, and step $(g)$ follows since $\mathbb{E}[g(Z)] = 0$.

Therefore, from \eqref{eq:lem2_new_b1}, \eqref{eq:lem2_new_9a}, and \eqref{eq:lem2_new_11} we have
\begin{align}
b_1&\leq 1 + \frac{r^2\mathbb{E}[(g(Z))^2]}{2}\exp\{rM_g\} \quad \text{ and } \quad b_i\leq \beta_Q(\beta_Q\exp\{rM_g\})^{i-2} 2r^2\mathbb{E}[(g(Z))^2 ]\exp\{2rM_g\}.
\label{eq:lem2_new_12}
\end{align}
%\textcolor{cyan}{Per comment above, I think it's $b_i\leq \beta_Q(\beta_Q\exp\{rM_g\})^{i-2} (1 + 2r^2\mathbb{E}[(g(Z))^2 ]\exp\{2rM_g\})$}.  
Let $X,\tilde{X}\sim \pi$ independent. Notice that
\begin{align*}
&\mathbb{E}[(g(Z))^2] = \mathbb{E}[
(f(X)-f(\tilde{X}))^2] \overset{(a)}{\leq} L^2 \mathbb{E}[((1+\|X\|+\|\tilde{X}\|)\|X-\tilde{X}\|)^2]\\
& \overset{(b)}{\leq} 5L^2\left(2\mathbb{E}[\|X\|^2] + 2\mathbb{E}[\|X\|^4] + 4\mathbb{E}[\|X\|^2]\mathbb{E}[\|\tilde{X}\|^2]\right)\\
&\overset{(c)}{\leq} 10L^2 \left( \sum_{i=1}^d \mathbb{E}[X_i^2] + d\sum_{i=1}^d \mathbb{E}[X_i^4] + 2\left(\sum_{i=1}^d \mathbb{E}[X_i^2]\right)\left(\sum_{i=1}^d \mathbb{E}[\tilde{X}_i^2]\right)\right)\overset{(d)}{=} 10L^2 \left( d \textsf{m}_2 + d^2 \textsf{m}_4 + 2 d^2 \textsf{m}_2^2 \right),
\end{align*}
where step $(a)$ holds since $f(\cdot)$ is pseudo-Lipschitz with constant $L>0$, step $(b)$ uses $\|X-\tilde{X}\|\leq \|X\| + \|\tilde{X}\|$, Lemma \ref{lem:squaredsums}, and  the fact that $X$ and $\tilde{X}$ are i.i.d., step $(c)$ uses Lemma \ref{lem:squaredsums}, and in step $(d)$, $\textsf{m}_2$ and $\textsf{m}_4$ denote the second and fourth moment of $\gamma$, respectively. Because $\gamma$ is defined on a bounded state space, $\textsf{m}_2$ and $\textsf{m}_4$ are finite.

Let $\mathsf{b}^2=10L^2 \left( d \textsf{m}_2 + d^2 \textsf{m}_4 + 2 d^2 \textsf{m}_2^2 \right)$, $\mathsf{a}=\frac{1}{2}\mathsf{b}^2 \exp\{rM_g\}$, and $\alpha=\beta_Q \exp\{rM_g\}$. Choose $r< (1-\beta_Q)/M_g$, then we have $0 < \alpha<1$ since $1-\beta_Q < -\ln \beta_Q$.
Using these bounds and notation, \eqref{eq:lem2_new_12} becomes
\begin{align}
b_1&\leq 1+\mathsf{a} r^2 \quad \text{ and } \quad b_i\leq \alpha^{i-1}4\mathsf{a}r^2.
\label{eq:lem2_new_13}
\end{align}
%\textcolor{cyan}{Per comment above, I think we would have $b_i \leq \alpha^{i-1}(\exp\{-r M_g\} + 4\mathsf{a}r^2) \leq \alpha^{i-1}(\exp\{-(1 - \beta_Q)\} + 4\mathsf{a}r^2)  \leq \alpha^{i-1}(1 + 4\mathsf{a}r^2)$.}

We now bound $a_1,...,a_{n_1}$ by induction.  We will show $a_i\leq [\phi(r)]^i$, where $\phi(r)=1+Cr^2$ for some $C\geq 4\textsf{a}$ that is independent of $i$.  For $i=1$, $a_1 = b_1\leq 1+4 \textsf{a} r^2.$ Hence, the hypothesis $a_i\leq  [\phi(r)]^i$ is true for $i=1$. %\textcolor{cyan}{For the updated result we want $\phi(r)=C(1+ 4\textsf{a}r^2)$ for some constant $C>1$.} 
Suppose that the hypothesis is true for $i \leq n_1-1$, then
\begin{align}
&a_{n_1} = b_1a_{n_1-1} + \sum_{i=2}^{n_1} b_ia_{n_1-i} \leq (1+ 4\textsf{a} r^2) [\phi(r)]^{n_1-1} + \sum_{i=2}^{n_1} 4 \textsf{a} r^2 \alpha^{i-1} [\phi(r)]^{n_1-i},
\label{eq:lem2_new_14}
\end{align}
where the final inequality in the above follows by \eqref{eq:lem2_new_13} and the inductive hypothesis.  Consider only the second term on the right side of \eqref{eq:lem2_new_14},
\begin{align*}
&\sum_{i=2}^{n_1} 4 \textsf{a} r^2 \alpha^{i-1} [\phi(r)]^{n_1-i} =  4 \textsf{a} r^2\alpha^{n_1-1}\sum_{i=2}^{n_1}  [\alpha^{-1} \phi(r)]^{n_1-i} \\
& \qquad = 4 \textsf{a} r^2\alpha^{n_1-1} \left(\frac{1- \left(\phi(r)\alpha^{-1} \right)^{n_1-1}}{1- \phi(r)\alpha^{-1}} \right) = 4 \textsf{a} r^2 \left(\frac{ \alpha [\phi(r)]^{n_1-1}-\alpha^{n_1}}{\phi(r)-\alpha} \right) \leq \frac{4\textsf{a}r^2 \alpha [\phi(r)]^{n_1-1}}{\phi(r)-\alpha},
\end{align*}
where the final inequality follows since $\textsf{a}, \alpha > 0$. %\textcolor{cyan}{I think we would have $\sum_{i=2}^{n_1} (1+4 \textsf{a} r^2) \alpha^{i-1} [\phi(r)]^{n_1-i} \leq  (1+4\textsf{a}r^2) \frac{ \alpha [\phi(r)]^{n_1-1}}{\phi(r)-\alpha}$.}  
Then plugging the above result into \eqref{eq:lem2_new_14}, we find
\begin{align*}
a_{n_1} \leq (1+4\textsf{a} r^2) [\phi(r)]^{n_1-1} + \frac{4\textsf{a}r^2 \alpha [\phi(r)]^{n_1-1}}{\phi(r)-\alpha} \leq [\phi(r)]^{n_1-1}\left(1+ \frac{4 \textsf{a} r^2 \phi(r)}{\phi(r)-\alpha}\right) &\leq [\phi(r)]^{n_1-1}\left(1+ \frac{4 \textsf{a} r^2}{1-\alpha}\right),
\end{align*}
where the final inequality follows since $\phi(r) \geq 1$.   %\textcolor{cyan}{Updated we get: 
%\[a_{n_1} \leq  (1+4\textsf{a} r^2) [\phi(r)]^{n_1-1} \left[1 + \frac{ \alpha}{\phi(r)-\alpha}\right] \leq  (1+4\textsf{a} r^2) [\phi(r)]^{n_1-1} \left[1 + \frac{ 1}{1-\alpha}\right] .\]} 
Therefore, let $C=4\mathsf{a}(1-\alpha)^{-1} > 4 \mathsf{a}$, since $0< \alpha < 1$, and so $\phi(r)=1+4 \textsf{a}r^2(1 - \alpha)^{-1}$. It follows from the above then,
\begin{equation}
a_{n_1}\leq \left(1+ \frac{4 \textsf{a} r^2}{1-\alpha}\right)^{n_1}=e^{n_1 \ln\left(1+4\textsf{a}r^2(1-\alpha)^{-1}\right)}\leq e^{n_1 4\textsf{a}r^2(1-\alpha)^{-1}},
\label{eq:lem2_new_15}
\end{equation}
where the final inequality uses the fact that $\ln(1 + x) \leq x$ for $x \geq 0$.  %\textcolor{cyan}{Updated, we let $C = 1 + \frac{1}{1-\alpha} > 1$ since $0 < \alpha < 1$.  Then $a_{n_1} \leq (1+4\textsf{a} r^2)^{n_1}\left[1 + \frac{ 1}{1-\alpha}\right]^{n_1} = \exp\left\{n_1 \log(1+4\textsf{a} r^2)+n_1 \log\left[1 + \frac{ 1}{1-\alpha}\right]\right\} \leq \exp\left\{n_1 4\textsf{a} r^2+n_1(1-\alpha)^{-1}\right\}.$}

Finally, from \eqref{eq:lem2_new_4}, \eqref{eq:lem2_new_4aa}, and the bound in \eqref{eq:lem2_new_15},
\begin{align*}
P\left(\frac{1}{n_1}\sum_{i=1}^{n_1} f(X^{(1)}_i)\geq \e\right) \leq \exp\left(-n_1\left(r\e-4\textsf{a}r^2(1-\alpha)^{-1}\right)\right)\overset{(a)}{=}\exp\left( -n_1 \left( r \e - \frac{ 2\textsf{b}^2  r^2 e^{r M_g}}{1-\beta_Q e^{rM_g}}\right)\right),
\end{align*}
where step $(a)$ follows from the fact that $\mathsf{a}=\mathsf{b}^2e^{rM_g}/2$ and $\alpha = \beta_Q e^{rM_g}$.  Now let us consider the term in the exponent in the above for the cases where (\emph{i}) $\mathsf{b}^2\geq M_g$ and (\emph{ii}) $\mathsf{b}^2<M_g$ separately, and then combine the results in the two cases to obtain a desired bound for all $\e \in (0,1)$.
%\textcolor{cyan}{The new upper bound is therefore,
%\[P\left(\frac{1}{n_1}\sum_{i=1}^{n_1} f(X^{(1)}_i)\geq \e\right) \leq \exp\left(-n_1\left(r\e- 4\textsf{a} r^2 - (1-\alpha)^{-1}\right)\right)\overset{(a)}{=}\exp\left( -n_1 \left( r \e - 2\mathsf{b}^2e^{rM_g} r^2 - (1-\beta_Q e^{rM_g})^{-1} \right)\right),\]}

First (\emph{i}) $\mathsf{b}^2\geq M_g$.
Notice for every $0< \epsilon<4 \mathsf{b}^2/M_g$, if we let $r=(1-\beta_Q)\e/(4\textsf{b}^2)$, then $r<(1-\beta_Q)/M_g$ as required before. We show whenever $0< \epsilon\leq \mathsf{b}^2/M_g$, we can obtain a desired bound.  Then the condition in the lemma statement, $\e \in (0,1)$, falls within this effective region.
\begin{align*}
&r \e - \frac{ 2\textsf{b}^2  r^2 e^{r M_g}}{1-\beta_Q e^{rM_g}} \overset{(a)}{=} \frac{(1-\beta_Q)\e^2}{4  \textsf{b}^2}-\frac{(1-\beta_Q)^2\e^2}{8 \textsf{b}^2} \cdot \frac{\exp\left(\frac{(1-\beta_Q) M_g \e}{4 \textsf{b}^2}\right)}{1-\beta_Q\exp\left(\frac{(1-\beta_Q) M_g \e}{4 \textsf{b}^2}\right)}\\
&= \frac{(1-\beta_Q)\e^2}{8 \textsf{b}^2}\left(1- \frac{\exp\left(\frac{(1-\beta_Q) M_g \e}{4 \textsf{b}^2}\right)-1}{1-\beta_Q\exp\left(\frac{(1-\beta_Q) M_g \e}{4 \textsf{b}^2}\right)}\right) \overset{(b)}{\geq} \frac{(1-\beta_Q)\e^2}{8 \textsf{b}^2}\left(1- \frac{\frac{(1-\beta_Q) M_g \e}{3 \textsf{b}^2}}{1-\beta_Q\left(1+\frac{(1-\beta_Q) M_g \e}{3 \textsf{b}^2}\right)}\right)\\
&= \frac{(1-\beta_Q)\e^2}{8 \textsf{b}^2}\left(1- \frac{ M_g \e}{ 2\textsf{b}^2 + (\textsf{b}^2-\beta_Q M_g \e)}\right) \overset{(c)}{\geq} \frac{(1-\beta_Q)\e^2}{8 \mathsf{b}^2}\left( 1- \frac{\e}{ 2}\right),\qquad\text{for }0 < \e\leq \mathsf{b}^2/M_g.
\end{align*}
In the above, step $(a)$ by plugging in $r=(1-\beta_Q)\e/(4\mathsf{b}^2)$, step $(b)$ holds since $e^x\leq 1+4x/3$ for $x\leq 1/2$, and step $(c)$ holds since $\e\leq \mathsf{b}^2/M_g$, so $(\textsf{b}^2-\beta_Q M_g \e)>0$, and the fact $\mathsf{b}^2\geq M_g$.

Next consider (\emph{ii}) $\mathsf{b}^2 < M_g$.
In this case, set $r=(1-\beta_Q)\e/(4 M_g)$. Hence, $r<(1-\beta_Q)/M_g$ for $\e \in (0,1)$, and then
\begin{align*}
&r \e - \frac{ 2\textsf{b}^2  r^2 e^{r M_g}}{1-\beta_Q e^{rM_g}} \overset{(a)}{>} r \e - \frac{ 2M_g  r^2 e^{r M_g}}{1-\beta_Q e^{rM_g}}
\overset{(b)}{=} \frac{(1-\beta_Q)\e^2}{4  M_g}-\frac{(1-\beta_Q)^2\e^2}{8 M_g} \cdot \frac{\exp\left(\frac{(1-\beta_Q) \e}{4}\right)}{1-\beta_Q\exp\left(\frac{(1-\beta_Q) \e}{4}\right)}\\
&= \frac{(1-\beta_Q)\e^2}{8 M_g}\left(1- \frac{\exp\left(\frac{(1-\beta_Q) \e}{4 }\right)-1}{1-\beta_Q\exp\left(\frac{(1-\beta_Q) \e}{4 }\right)}\right) \overset{(c)}{\geq} \frac{(1-\beta_Q)\e^2}{8 M_g}\left( 1- \frac{\e}{2}\right),\qquad\text{for }0 < \e \leq 1.
\end{align*}
In the above, step $(a)$ holds since $\mathsf{b}^2<M_g$, step $(b)$ by plugging in $r=(1-\beta_Q)\e/(4 M_g)$, and step $(c)$ follows similar calculation as in case (\emph{i}).

Combining the results in the two cases, we conclude that for all $\e\in (0,1)$, the following is satisfied:
\begin{equation*}
r \e - \frac{ 2\textsf{b}^2  r^2 e^{r M_g}}{1-\beta_Q e^{rM_g}} \geq \frac{(1-\beta_Q)\e^2}{8\max(M_g,\mathsf{b}^2)}\left(1- \frac{\e}{2}\right).
\end{equation*}
Hence, for $\e \in (0,1)$,
\begin{equation}
P\left(\frac{1}{n_1}\sum_{i=1}^{n_1} f(X^{(1)}_i)\geq \e\right)\leq\exp\left(\frac{-(1-\beta_Q)n_1\e^2}{8\max(M_g,\mathsf{b}^2)}\left(1-\frac{\e}{2}\right)\right) \leq\exp\left(\frac{-(1-\beta_Q)n_1\e^2}{16\max(M_g,\mathsf{b}^2)}\right).
\label{eq:lem2_new_16}
\end{equation}

Therefore, using \eqref{eq:lem2_new_split} and the fact that we can show a similar result for each $s = 2, 3, \ldots, d$, we have for $\e \in (0,1)$,
\begin{align}
&P\left(\abs{\frac{1}{n}\sum_{i=1}^n f(X_i) - \mathbb{E}_\pi f} \geq \e \right) \leq \sum_{s=1}^{d} P\left( \left\vert \frac{1}{n_s}\sum_{i=1}^{n_s} f(X^{(s)}_i) - \mathbb{E}_\pi f \right\vert \geq \frac{n\epsilon}{dn_s} \right) \nonumber \\
&\qquad \overset{(a)}{\leq} \sum_{s=1}^d \exp\left(\frac{-(1-\beta_Q) n^2 \e^2}{16 n_s \max(M_g,\mathsf{b}^2)}\right) \overset{(b)}{\leq} d \exp\left(\frac{-(1-\beta_Q) n \e^2}{16d\max(M_g,\mathsf{b}^2)}\right),
\label{eq:lem2_new_17}
\end{align}
where step $(a)$ follows \eqref{eq:lem2_new_16} and step $(b)$ holds since $n/n_s \geq n/n_1 = n/(\lfloor n/d\rfloor-1) \geq d,$ for all $s \in [d]$.  
To complete the proof, we recall that $\mathsf{b}^2=10L^2 \left( d \textsf{m}_2 + d^2 \textsf{m}_4 + 2 d^2 \textsf{m}_2^2 \right)$ and $M_g=L(1+2\sqrt{d}M)(2\sqrt{d}M)$.

\end{proof}

\bibliographystyle{ieeetr}
{\small{
\bibliography{NSdenoisers}
}}

\end{document}